\title{The Parikh Property for Weighted Context-Free Grammars}
\titlerunning{The Parikh Property for Weighted Context-Free Grammars} %
\author{Pierre Ganty}{IMDEA Software Institute, Madrid, Spain}{pierre.ganty@imdea.org}{}{}%
\author{Elena Gutiérrez}{IMDEA Software Institute, Madrid, Spain\\ Universidad Politécnica de Madrid, Spain}{elena.gutierrez@imdea.org}{}{}
\authorrunning{P. Ganty and E. Gutiérrez} %
\subjclass{\ccsdesc[100]{Formal languages and automata theory~Grammars and context-free languages}}%
\keywords{Weighted Context-Free Grammars, Algebraic Language Theory, Parikh Image}%
\def\len#1{{\vert{#1}\vert}}
\newcommand{\dgrammar}{G^{\,⌈k⌉}}
\newcommand{\dvars}{V^{\,⌈k⌉}}
\newcommand{\drules}{R^{\,⌈k⌉}}
\newcommand{\dweights}{W^{\,⌈k⌉}}
\newtheorem*{theorem*}{Theorem}
\newtheorem*{lemma*}{Lemma}
\begin{document}

\maketitle

\begin{abstract}
Parikh's Theorem states that every context-free grammar (CFG) is equivalent to some regular CFG when the ordering of symbols in the words is ignored.
The same is not true for the so-called weighted CFGs, which additionally assign a weight to each grammar rule.
If the result holds for a given weighted CFG \(G\), we say that \(G\) satisfies the Parikh property.
We prove constructively that the Parikh property holds for every weighted nonexpansive CFG.
We also give a decision procedure for the property when the weights are over the rationals.
 \end{abstract}

\section{Introduction}
The celebrated Parikh's Theorem~\cite{Parikh1966} establishes that every context-free language is \emph{Parikh-equivalent} to some regular language.
Two words \(w, w'\) over an alphabet of symbols are Parikh-equivalent if the number of occurrences of each symbol in \(w\) coincides with that of \(w'\).
For instance, the words \(aabb\) and \(abab\) over the alphabet \(\{a,b\}\) are Parikh-equivalent as both have \(2\) \(a\)'s and \(2\) \(b\)'s.
Two languages \(L\) and \(L'\) are Parikh-equivalent if for each word in \(L\) there is a Parikh-equivalent word in \(L'\), and viceversa, e.g., the language \(\{ab, aabb\}\) is Parikh-equivalent to the language \(\{ba, abab, abba\}\).
Consider, for instance, the context-free language \(L = \{a^nb^n \mid n\geq 0\}\).
Then, a regular language that satisfies Parikh's Theorem is \((ab)^*\).
In fact, given a context-free grammar, one can construct a finite-state automaton that recognizes a Parikh-equivalent language \cite{Esparza2011}.
Parikh's Theorem has been applied in automata theory for decision problems concerning Parikh-equivalence such as membership, universality, equivalence and disjointness~\cite{Esparza1997,Huynh1985,Huynh1980,Huynh1982}, to establish complexity bounds on verification problems for counter machines~\cite{Goller2009}, equational Horn clauses \cite{Verma2005}, among many others.
It has also found application in the analysis of asynchronous programs with procedures~\cite{Ganty2010,Mahesh2006} where the Parikh-equivalent finite-state automaton is used to compute another asynchronous program without procedures that preserves safety bugs.

Weighted finite-state automata are a generalization of the classical nondeterministic finite-state automata in which each transition carries a weight.
This weight can be defined, for instance, as a nonnegative number representing the cost of its execution.
Then, the weight of a path in the weighted automaton can be computed by adding the weights of its transitions.
If we are interested in the minimal cost of execution of a given word, we can compute its weight as the minimum of the weights of the paths accepting that word.
In general, the algebraic structure underlying the computation of the weights is that of a \emph{semiring}, an algebraic structure with two operations \(\cdot\) (product) and \(+\) (sum) used to compute the weight of a path and the weight of a word, respectively.
In the same way, it is possible to add weights to the transitions of a pushdown automaton.
The later model, so-called weighted pushdown automata, has been used to perform data-flow analysis of programs with procedures~\cite{Reps2005}.

In this paper we study the question of whether Parikh's Theorem can be extended to the weighted case.
Roughly speaking, for a given weighted pushdown automaton \(\mathcal{P}\), we ask whether there is a weighted finite-state automaton \(\mathcal{F}\) that accepts a Parikh-equivalent language and such that for every word \(w\), the sum of the weights of all words Parikh-equivalent to \(w\) in \(\mathcal{P}\) coincides with that of all Parikh-equivalent words to \(w\) in \(\mathcal{F}\). 
Extending Parikh's Theorem to the weighted case has the potential of reaching new applications, for instance, the analysis of event-driven asynchronous programs with procedures where each transition is augmented with the probability of the event associated to it.
Finding a weighted finite-state automaton that is Parikh-equivalent to the original program and preserves the probabilities can be used to perform probabilistic analysis of programs following this paradigm.

We will present our results using the grammar model (as opposed to the automata model).
It is well-known that both models are equivalent, in the sense that both representations generate the same family of languages of weighted words.
Using weighted context-free grammars (WCFGs for short) allows us to exploit their connection with algebraic systems of equations to give more simple and convincing proofs of our results.
In a WCFG, a weight is assigned to each rule of the grammar.
The notion of weight is extended from rules to parse trees by multiplying the weights of the rules used along a tree, and from parse trees to words by adding the weights of all the possible parse trees that yield to a word.
We say that two WCFGs \(G_1\) and \(G_2\) are Parikh-equivalent if for each Parikh-equivalence class \(\mathcal{E}\), the sum of the weights of every word in \(\mathcal{E}\) under \(G_1\) and \(G_2\) coincide.

We consider the following problem: given a WCFG \(G\), does there exist a Parikh-equivalent WCFG \(G'\) that is regular?
If the answer is positive we say that \(G\) satisfies the Parikh property.
It follows from a known counterexample by Petre~\cite{Petre1999} that the property is not true in general.
Recently, Bhattiprolu et al.~\cite{Vijay17} further investigated this question.
They show a class of WCFGs over the unary alphabet that always satisfy the Parikh property.
Now, we show that every \emph{nonexpansive} WCFG (over an arbitrary alphabet and arbitrary semiring) satisfies the Parikh property.
A WCFG is nonexpansive if no grammar derivation is of the form \(X \Rightarrow^{*} w_0\,X\,w_1\,X\,w_2\).
Note that nonexpansiveness is decidable as it reduces to computing predecessors of a regular set~\cite{Esparza2000}.
We can show that in the unary case the class of nonexpansive grammars strictly contains the class defined by Bhattiprolu et al.~\cite{Vijay17} (see Appendix~\ref{sec:appendix-poly}).
However, nonexpansiveness is a sufficient condition for the Parikh property, but not necessary.
In particular, we give an example of an expansive WCFG for which there exists a Parikh-equivalent regular WCFG. %
This shows that a conjecture formulated by Baron and Kuich~\cite{Kuich1981} in 1981 is false\footnote{Essentially, they conjectured that every unambiguous WCFG \(G\) is nonexpansive iff \(G\) has the Parikh property \cite[Conjecture C]{Kuich1981}.}.
Furthermore, we can show that nonexpansiveness is not necessary for the property even when the alphabet is unary by means of a similar example.

In the second part of our work, we study the question of whether the Parikh property is decidable.
As far as we can tell, this question is open.
However, it implicitly follows from a result by Kuich et al.~\cite{KuichSalomaa1986} that, when we equivalently formulate the  property in terms of formal power series, it is decidable over the semiring of rational numbers.
Their proof relies on an ad-hoc elimination procedure which is hard to perform even on small examples.
Now we give a decision procedure that sidesteps this problem by using a different technique that allows to illustrate the algorithm on small examples with the support of mainstream open-source computer algebra systems.

The document is organized as follows.
After preliminaries in Section~\ref{sec:preliminaries}, we show in Section~\ref{sec:nonexpansiveness} that every nonexpansive WCFG is Parikh-equivalent to a regular WCFG.
In Section~\ref{sec:decidability}, we give a decision procedure for the property when the weight domain is over the rational numbers and we illustrate its use with several examples.
Finally, we give further details of the related work in Section~\ref{sec:relatedwork}, and conclusions and further work in Section~\ref{sec:conclusions}.
Missing proofs can be found in the Appendix.

\section{Preliminaries}
\label{sec:preliminaries}
We denote by \(Σ^{*}\) (\(Σ^{⊕}\)) the free (commutative) monoid generated by \(Σ\).
The elements of \(Σ^{*}\) are written as words over the alphabet \(Σ\), typically denoted by \(w, w'\) and \( w_i\) (\(i \in \mathbb{N}\)), while the elements of \(Σ^{⊕}\) are written as monomials in the variables \(Σ\) and they are typically denoted by \(v, v'\) and \(v_i\).
For instance, if \(Σ = \{a,b\}\) then all the elements in \(Σ^{*}\) of length two containing \(1\) \(a\) and \(1\) \(b\) are the words \(ab\) and \(ba\) while the only element with that property in \(Σ^{⊕}\) is the monomial \(ab\).

We denote a \emph{context-free grammar} (CFG for short) as a tuple \( (V,Σ,S,R)\) where \(V\) is a finite set of \emph{variables} including \(S\), the \emph{start} variable, \(Σ\) is the set of \emph{terminals} and \(R \subseteq V \times (Σ \cup V)^* \) is a finite set of \emph{rules}. Rules are conveniently denoted \(X \rightarrow \gamma\).
We will always assume that CFGs are \emph{cycle-free}, i.e., there is no derivation of the form \(X \Rightarrow^{+} X\) with \(X \in V\).
This guarantees that the number of parse trees for one given word is finite and thus the weight of a word is a well-defined function.
W.l.o.g., we assume that every \emph{regular} CFG is \emph{right-regular} , i.e., \(\gamma \in \Sigma^+(ε \cup V)\) for each \(\gamma\).
A CFG is \emph{nonexpansive} if no derivation is of the form \(X \Rightarrow^{*} w_0\,X\,w_1\,X\,w_2\) with \(X \in V\) and \(w_i \in (Σ \cup V)^*\).
Otherwise, it is \emph{expansive}.

A \emph{semiring} is a structure \((A,+,\cdot,0_{A},1_{A})\) where \((A,+,0_{A})\) is a commutative monoid with identity \(0_{A}\), \((A,\cdot,1_{A})\) is a monoid with identity \(1_{A}\), \(\cdot\) distributes over \(+\) and \(0_A\) satisfies that \(a \cdot 0_{A} = 0_{A} \cdot a = 0_{A}, \text{ for all } a \in A\).
A semiring is called \emph{commutative} iff \(a \cdot b = b \cdot a\) for every \(a, b \in A\).
In the sequel, we assume that \(A\) is always a commutative semiring.
An \emph{idempotent} semiring is one that satisfies \(a + a = a\), for all \(a \in A\), .
A (commutative)\emph{ ring} is a (commutative) semiring where \((A,+,0_{A})\) is a commutative \emph{group} (i.e., every element in \(A\) has an additive inverse).
Finally, a \emph{field} is a ring where \((A\setminus\{0_A\},\cdot,1_{A})\) is a \emph{commutative group} (i.e., every element in \(A\) except \(0_A\) has a multiplicative inverse).
We sometimes use \(A\) for both the structure and the underlying set when the meaning is clear from the context.
We abuse notation and use \(+\) and \(\cdot\) to denote the ordinary sum and product in \(\mathbb{N}\) and \(\mathbb{Q}\).
Classical examples of a commutative semirings are \((\mathbb{N}, +, \cdot, 0, 1)\) and \((\mathbb{Q}, +, \cdot, 0, 1)\).
The later is also a \emph{field} and we will refer to it as the \emph{rational semiring}.
Another classical example of a commutative semiring is the \emph{tropical semiring}, defined as \((\mathbb{N} \cup \{\infty\}, min, +, \infty, 0)\).
Note that this semiring is also idempotent as \(min(a,a) = a\), for all \(a \in \mathbb{N}\cup \{\infty\}\).

A \emph{weighted context-free grammar} (WCFG for short) is a pair \((G,W)\) where \(G\) is a CFG as defined above and \(W\) is a mapping with the signature \(W: R \rightarrow A\) that assigns a weight from \(A\) to each production in \(R\), for some (commutative) semiring \(A\).
Note that \(W\) may assign \(0_A\) to some rules in \(R\).
The mapping \(W\) is usually referred to as the \emph{weight function} of the WCFG.
We extend the definition of \(W\) from rules to \emph{derivation sequences}\footnote{For a definition of \emph{derivation sequence} go to the beginning of Appendix~\ref{sec:appendix-sec1}.} by assigning to each derivation sequence \(\psi\) a weight value which is the product of the weights of the rules applied in \(\psi\).
We assume that, the derivation policy for \(G\), i.e., the derivation strategy that determines the next variable to rewrite along a derivation, defines one unique derivation sequence for each parse tree.
We also assume that the \(\cdot\) operation is commutative, i.e., we will always consider \emph{commutative semirings}.
Then, the weight of a derivation sequence does not depend on the choice of the derivation policy. 
Under these assumptions we can extend the definition from rules to \emph{parse trees} (instead of derivation sequences).
Before, we recall some definitions.
We define a \emph{labeled tree} \(c(τ_1, \ldots, τ_n)\) (with \(n \geq 0\)) as a finite tree whose nodes are labeled, where \(c\) is the label of the root and \(τ_1, \ldots,τ_n\) are labeled trees, the children of the roots.
When \(n = 0\) we prefer to write \(c\) instead of \(c()\).
We simply write \(τ = c(\ldots)\) when the children nodes \(τ_1, \ldots, τ_n \) are not important.
We will write parse trees  as labeled trees of the form \(τ = \pi (τ_1, \ldots,τ_n)\) to denote that the topmost level of \(τ\) is induced by the grammar rule \(\pi\) and has exactly \(n\) children nodes which root (from left to right) the parse trees \(τ_1, \ldots, τ_n\), i.e., the right-hand side of \(\pi\) contains \(n\) grammar variables where the \(i\)-th (from the left) is derived according to \(τ_i\).
We thus define the \emph{yield} of a parse tree \(τ = \pi(τ_1, \ldots,τ_n)\), denoted as \(𝒴(τ)\) inductively as follows. If \(n = 0\), then \(𝒴(τ) = \gamma\) where \(\pi\) is of the form \(X \rightarrow \gamma\) and \(\gamma \in \Sigma^*\cup\{\varepsilon\}\).
Otherwise, \(𝒴(τ) = \alpha_1 𝒴(τ_1)\dots \alpha_n 𝒴(τ_n)\alpha_{n+1} \) where \(\pi\) is of the form \(X \rightarrow \alpha_1X_1\ldots \alpha_nX_n\alpha_{n+1}\) with \(\alpha_i \in \Sigma^*\cup\{\varepsilon\}\), and each \(X_i\) corresponds to the left-hand side of the rule in the root of \(τ_i\).
Define the \emph{weight of a parse tree} \(τ = \pi (τ_1, \ldots,τ_n) \) inductively as:
\[W(τ) ≝ W(\pi) \prod_{i=1}^{n} W(τ_i) \enspace .\]
Note that \(W(τ)\) does not depend on the order in which we consider the rules in \(τ\) as we assume that \(\cdot\) is commutative.
Denote by \(\mathcal{T}_G\) the set of all parse trees of a CFG \(G\).
Then, define the \emph{weight of a word} \(w \in Σ^{*}\) as follows:
\[W(w) ≝ \sum\limits_{\substack{𝒴(τ) = w\\τ \in𝒯_G }} W(τ) \enspace .\]
If for some \(w \in Σ^{*}\), the set \(\{τ \mid 𝒴(τ) = w, τ \in 𝒯_G \} = \emptyset\) then \(W(w) ≝ 0_A\).
For the following definitions we adopt a similar notation as Bhattiprolu et al.~\cite{Vijay17}.
Define the \emph{semantics} of a WCFG \((G,W)\), denoted by \(⟦G⟧_W\), as the mapping \(⟦G⟧_W: Σ^{*} \rightarrow A \) such that \(⟦G⟧_W(w)≝ W(w)\).
Define the \emph{Parikh image} of a word \(w \in \Sigma^*\) with \(\Sigma = \{a_1, \ldots, a_n\}\), denoted by \(\lbag w \rbag\) as the monomial \(a_1^{\alpha_1}a_2^{\alpha_2} \ldots a_n^{\alpha_n} \in Σ^{⊕} \) such that \(\alpha_i\) is the number of occurrences of \(a_i\) in \(w\). 
Define the \emph{Parikh image} of a weighted context-free grammar \((G,W)\), denoted by \(Pk⟦G⟧_W\), as the mapping \(Pk⟦G⟧_W: Σ^{⊕} \rightarrow A \)  such that:
\[Pk⟦G⟧_W(v) ≝ \sum\limits_{\substack{v = \lbag w \rbag \\ w \in Σ^{*} } }⟦G⟧_W(w) \enspace . \]
We write \(⟦G⟧_W\) and \(Pk⟦G⟧_W\) as the formal sums \(\sum_{w \in Σ^{*}} ⟦G⟧_W(w)\,w\) and \(\sum_{v \in Σ^{⊕}} Pk⟦G⟧_W(v)\,v\), respectively.
Two WCFGs \((G,W)\) and \((G',W')\) are \emph{language-equivalent} iff \(⟦G⟧_W = ⟦G'⟧_{W'}\), while \((G,W)\) and \((G',W')\) are \emph{Parikh-equivalent} iff \(Pk⟦G⟧_W = Pk⟦G'⟧_{W'}\).
Finally, a WCFG \((G,W)\) is \emph{regular}/\emph{nonexpansive}/\emph{cycle-free} iff \(G\) is regular/nonexpansive/cycle-free, respectively.

\begin{definition}[Parikh property]
A WCFG \((G,W)\) satisfies the \emph{Parikh property} iff there exists a WCFG \((G_{\ell}, W_{\ell})\) such that:
\begin{enumerate}
\item \((G_{\ell}, W_{\ell})\) is regular, and
\item\(Pk⟦G⟧_W = Pk⟦G_{\ell}⟧_{W_{\ell}}\).
\end{enumerate}
\end{definition}

\section{Sufficient condition for the Parikh property}
\label{sec:nonexpansiveness}
Petre~\cite{Petre1999} shows that the Parikh property is not true in general.
In the following example we show a well-known WCFG (for instance, see \cite{Vijay17,Petre1999}) for which no regular Parikh-equivalent WCFG exists.

\begin{example}
\label{ex:example0}
Consider the WCFG \((G,W)\) with \(G = (\{X\}, \{a\}, X, \{X \rightarrow aXX,\,X \rightarrow a\})\) and the weight function \(W\) over \((\mathbb{N}, +, \cdot, 0, 1)\) that assigns \(1\) to each production in the grammar.
Note that, because the alphabet is unary, we have that \(Pk⟦G⟧_W = ⟦G⟧_W\).
As \(W\) assigns \(1\) to each grammar rule, the weight of each word can be interpreted as its ambiguity according to \(G\).
Then, the reader can check that: 
\[⟦G⟧_W = \sum_{n\geq 0} C_n\,a^{2n+1} = 1a + 1a^3 + 2a^5 + 5a^7 + 14a^9 + 42a^{11} + 132a^{13} + 429a^{15}+\ldots \]
with \(C_n = \frac{1}{n+1}\binom{2n}{n}\) the \(n\)-th Catalan number.
We will see in Example~\ref{ex:example3} that this formal power series cannot be generated by a regular WCFG.
\end{example}

Now we show that every nonexpansive WCFG over an arbitrary commutative weight domain satisfies the Parikh property.

\begin{theorem}
\label{thm:nexp-parikh}
Let \((G,W)\) be an arbitrary WCFG.  If \(G\) is nonexpansive then \((G,W)\) satisfies the Parikh property.
\end{theorem}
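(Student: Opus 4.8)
The plan is to give a constructive procedure that transforms a nonexpansive WCFG \((G,W)\) into a Parikh-equivalent regular WCFG. The starting point is that nonexpansiveness controls the branching structure of derivations: because no variable \(X\) can ever derive a sentential form \(w_0\,X\,w_1\,X\,w_2\) containing two occurrences of itself, the mutual-recursion (strongly connected) structure of the variables is severely restricted. Concretely, I would first pass to the \emph{variable dependency graph} of \(G\), whose strongly connected components (SCCs) partition \(V\). Within a single SCC, nonexpansiveness forbids any variable from reaching two copies of a variable in its own SCC in one derivation; this forces each SCC to be, in the commutative (Parikh) world, essentially \emph{linear} — every recursive derivation that stays inside the SCC pumps material only on one side or in a single branch. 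I would make this precise by showing that, up to Parikh-equivalence, each SCC can be treated as a linear (hence regular-like) recursive block.

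The key technical step is an induction on the SCCs in reverse topological order. For the bottom SCCs, nonexpansiveness plus cycle-freeness means their internal recursion is linear, so by the classical Parikh argument (a linear CFG is already Parikh-equivalent to a regular one, and this carries over to weights because the relevant semiring operations are commutative and the generating identities \(\sum_{n\ge 0} x^n\) are the same on both sides) each such block is Parikh-equivalent to a regular WCFG. Here I would lean on the fact, emphasized in the Preliminaries, that the weight of a parse tree is invariant under the derivation policy and that \(\cdot\) is commutative, so that reassociating and reordering the terminals produced along a derivation does not change the accumulated weight. For a higher SCC, I would substitute the already-regularized blocks for the lower variables and again exploit linearity within the current SCC to regularize it, propagating Parikh-equivalence upward. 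Maintaining Parikh-equivalence at each substitution is where the commutativity of the semiring is essential: moving terminals past one another to bring a derivation into right-regular form \(\gamma \in \Sigma^+(\varepsilon \cup V)\) must preserve both the multiset of terminals (hence the Parikh image) and the product of rule-weights (hence \(W(\tau)\)).

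The main obstacle I anticipate is the bookkeeping needed to guarantee that the regularization \emph{preserves weights exactly}, not merely languages. In the unweighted setting one only needs a Parikh-equivalent regular language, and the rich theory of semilinear sets makes this routine; but here the regular WCFG must assign to each Parikh class \(v \in \Sigma^{\oplus}\) exactly the same semiring sum \(Pk\llbracket G\rrbracket_W(v)\). The delicate point is that a single Parikh class may be reached by many structurally different parse trees, and the regular grammar must aggregate their weights correctly via the \(+\) operation. I would handle this by working directly with the formal power series \(Pk\llbracket G\rrbracket_W\) and showing that the regularization of each linear SCC corresponds to replacing a recursive series identity by its closed, commutative form without altering any coefficient; because the series are equal as elements of \(A\llbracket \Sigma^{\oplus}\rrbracket\), equality of all coefficients — i.e.\ Parikh-equivalence — follows. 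The second subtlety is the base handling of rules producing \(\varepsilon\) and of the start-variable SCC, which I would dispose of by the usual preprocessing permitted by cycle-freeness.
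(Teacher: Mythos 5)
Your route is genuinely different from the paper's. The paper never decomposes by strongly connected components: it first rebuilds \(G\) as the ``at-most-\(k\)-dimension'' grammar of Luttenberger et al., using the fact that nonexpansiveness bounds the dimension of every parse tree by \(k\le|V|\); it then fixes a lowest-dimension-first derivation policy whose index is bounded by an affine function of \(k\), and builds the regular WCFG whose variables are the (boundedly many) variable-sequences occurring in sentential forms, each rule simulating one derivation step with the same weight. That argument is purely combinatorial on parse trees and works verbatim over any commutative semiring. Your proposal is instead the classical Baron--Kuich-style argument at the level of commutative algebraic systems: the SCC-linearity observation is correct (if a rule had two right-hand-side occurrences of variables from the SCC of its left-hand side, each could be rewritten back to that variable, contradicting nonexpansiveness), and for a bottom SCC the induced commutative system is literally a regular algebraic system, so its solution is rational and Lemma~\ref{lemma:regular-prop} applies. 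What your approach buys is a shorter conceptual path to ``\(Pk\llbracket G\rrbracket_W\) is rational''; what the paper's approach buys is an explicit grammar-to-grammar construction that never needs power-series machinery.

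The genuine gap is in your inductive composition step. After substituting the rational series of the lower SCCs into a higher SCC, the resulting system is linear in the current SCC's variables but its coefficients are \emph{series}, not polynomials, so it is no longer a regular algebraic system and you cannot simply reapply the base case. Solving it requires \(X=R^{*}s\), i.e., you must prove that \(A^{rat}\langle\langle\Sigma^{\oplus}\rangle\rangle\) is closed under sum, product and the star of proper elements over an \emph{arbitrary} commutative semiring, with explicit weighted regular constructions witnessing each closure; and you must show that cycle-freeness of \(G\) makes every star you take proper (otherwise \(\sum_{n}r^{n}\) need not exist outside complete semirings --- there is no additive limit to appeal to in a general \(A\)). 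You must also justify that eliminating SCCs block by block yields the same strong solution as the global system, which is the compositionality of the Parikh image under substitution. None of these is false, but none is supplied, and together they constitute the technical core of the proof in your approach; ``working directly with the formal power series'' does not discharge them. Your final remark about aggregating weights across structurally different parse trees is handled automatically once Theorem~\ref{thm:Parikh-image} is invoked, so the real work is the closure-and-properness argument, not the bookkeeping you flag.
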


\begin{proof}
The proof is constructive.
Here we give the main intuition of the construction.
For a complete proof go to Appendix~\ref{sec:appendix-sec1}.
For every nonexpansive WCFG \((G,W)\), we give a 2-step construction that results in a Parikh-equivalent regular WCFG \((G_{\ell}, W_{\ell})\).
The steps are:
\begin{enumerate}
\item construct a new WCFG \(\big(\dgrammar, \dweights\big)\), with \(k\in \mathbb{N}\), language-equivalent to \((G,W)\); and
\item construct a regular WCFG \((G_{\ell}, W_{\ell})\) Parikh-equivalent to \(\big(\dgrammar, \dweights\big)\). 
\end{enumerate}

The general idea behind the first step is to build a WCFG \(\big(\dgrammar, \dweights\big)\) that contains all the information needed to define a ``strategic'' derivation policy.
This derivation policy is strategic in the sense that the total number of grammar variables in all \emph{derivation sentences}\footnote{For a definition of \emph{derivation sentence} go to the beginning of Appendix~\ref{sec:appendix-sec1}.} produced along a derivation sequence is bounded.
To build \(\big(\dgrammar, \dweights\big)\) we rely on the grammar construction given by Luttenberger et al.~\cite{Luttenberger2016}.

In the second step of the construction, we use \(\big(\dgrammar, \dweights\big)\) to build a regular WCFG \((G_{\ell}, W_{\ell})\) that is Parikh-equivalent.
Each grammar variable of \((G_{\ell}, W_{\ell})\) represents each possible sentence (without the terminals) along a derivation sequence of \(\big(\dgrammar, \dweights\big)\), and each rule simulates a derivation step of \(\big(\dgrammar, \dweights\big)\).
Because the number of variables in the sentences is bounded, the number of variables and rules of \((G_{\ell}, W_{\ell})\) is necessarily finite.
This construction is very similar to that given by Bhattiprolu et al.~\cite{Vijay17} and Esparza et al.~\cite{Esparza2011}.

\end{proof}

The converse of Theorem~\ref{thm:nexp-parikh} is not true.
The next counterexample illustrates this fact by defining an expansive WCFG \(G_2\)  for which a Parikh-equivalent regular WCFG \(G_1\) exists.
Thus, nonexpansiveness does not provide an exact characterization of the Parikh property.

\begin{example}
\label{ex:counterexample2}
Consider the WCFG \((G_1,W_1)\) where \(G_1 = (\{X_1\}, \{a,\overline{a}\},X_1, R_1 = \{X_1 \rightarrow aX_1,\,X_1 \rightarrow \overline{a}X_1,\,X_1 \rightarrow \varepsilon\})\) and \(W_1\) is defined over \((\mathbb{N}, +, \cdot, 0,1)\) and assigns \(1\) to each rule in \(R_1\).
First, note that \((G_1, W_1)\) is regular and the weight of each word can be interpreted as its ambiguity according to \(G_1\).
Because \(G_1\) is unambiguous, the weight of each word in the language of \(G_1\) is \(1\).
It is easy to see that \(⟦G_1⟧_{W_1}\) is:
\[⟦G_1⟧_{W_1} = (a + \overline{a})^* = \sum\limits_{n\geq0}(a + \overline{a})^n = 1 \varepsilon + 1a + 1\overline{a} + 1a\overline{a} + 1\overline{a}a + 1aaa + 1 aa\overline{a} + 1a\overline{a}{a} + 1a\overline{a}\overline{a}+\ldots\]

Now consider the expansive WCFG \((G_D,W_D)\) where \(G_D = (\{D\}, \{a,\overline{a}\},D,R_D = \{D \rightarrow aD\overline{a}D,\,D  \rightarrow \varepsilon\})\) and \(W_D\) is defined over \(\mathbb{N}\) and assigns \(1\) to each rule in \(R_D\).
The grammar \(G_D\) generates the Dyck language \(L_D\) over the alphabet \(\{a, \overline{a}\}\) and it is also unambiguous (i.e., the weight of each \(w \in L_D\) is \(1\)).
It is well-known that \(L_D\) is a deterministic context-free language (DCFL).
Then the complement of \(L_D\), namely \(\{a, \overline{a}\}^{*} \setminus L_D\), is also a DCFL and thus admits an unambiguous CFG.
Let \(G_{\overline{D}} = (V_{\overline{D}}, \{a, \overline{a}\}, \overline{D}, R_{\overline{D}})\) be the unambiguous CFG that generates \(\{a, \overline{a}\}^{*} \setminus L_D\), and define \((G_{\overline{D}}, W_{\overline{D}})\) where \(W_{\overline{D}}\) is defined over \(\mathbb{N}\) and assigns \(1\) to each rule in \(R_{\overline{D}}\).

W.l.o.g., assume \(V_D \cap V_{\overline{D}} = \emptyset\) and consider a new variable \(X_2\notin V_D \cup V_{\overline{D}}\).
Define the WCFG \((G_2, W_2)\) where \(G_2 = (\{X_2\} \cup V_D \cup V_{\overline{D}}, \{a, \overline{a}\}, X_2, R_2)\), \(R_2\) is defined as \(R_2 = \{X_2 \rightarrow D,\,X_2 \rightarrow \overline{D}\} \cup R_D \cup R_{\overline{D}}\) where  \(W_2\) is defined over \(\mathbb{N}\) and assigns \(1\) to each rule in \(R_2\).
First, \(G_2\) is expansive because \(G_D\) is expansive.
Furthermore, \(D\) and \(\overline{D}\) generate unambiguously languages that are complementary over \(\{a, \overline{a}\}\).
As the weight of each word in \((G_2,W_2)\) corresponds to its ambiguity, we have that \(⟦G_2⟧_{W_2} = (a + \overline{a})^{*}\).
Hence \(⟦G_1⟧_{W_1} =⟦G_2⟧_{W_2}\) and thus \(Pk⟦G_1⟧_{W_1} = Pk⟦G_2⟧_{W_2}\).
Recall that \((G_1, W_1)\) is regular.
We conclude that \((G_2,W_2)\) is expansive and satisfies the Parikh property.
\qed
\end{example}
We can give a similar counterexample over a unary alphabet (see Appendix~\ref{sec:appendix-counterexample}).
This shows that nonexpansiveness is not necessary for the Parikh property even in the unary case.

\section{A decision procedure for the Parikh property over the rationals}
\label{sec:decidability}

In this section we give a decision procedure that tells whether or not a given WCFG with weights over the rational semiring satisfies the Parikh property.
Our procedure relies on a decidability result by Kuich and Salomaa ~\cite[Theorem 16.13]{KuichSalomaa1986}.
It implicitly follows from this result that the Parikh property is decidable over the rational semiring.
However, their decision procedure is hard to follow as it relies on algebraic methods beyond the scope of this field.
This makes its implementation rather involved even for small instances.
We propose an alternative method to sidestep this problem using Groebner basis theory.

First, we give some preliminaries.
In what follows, \(A\) will denote a partially ordered commutative semiring.
Given \(A\) and an alphabet \(Σ\), a \emph{formal power series in commuting variables} is a mapping of \(Σ^{⊕}\) into \(A\).
\(A⟨⟨Σ^{⊕}⟩⟩\) denotes the set of all formal power series in commuting variables \(Σ\) and coefficients in \(A\).
The values of a formal power series \(r\) are denoted by \((r,v)\) where \(v \in Σ^{⊕}\).
As \(r\) is a mapping of \(Σ^{⊕}\) into \(A\), it can be written as a formal sum as \(r = \sum_{v \in Σ^{⊕}} (r,v)\,v\).
When \(v = ε\) we will write the term \((r,ε)ε\) of \(r\) simply as \((r,ε)\).
We define the \emph{support} of a formal power series as \(supp(r) ≝\{v \mid (r,v) \neq 0_A\}\).
The subset of \(A⟨⟨Σ^{⊕}⟩⟩\) consisting of all series with a finite support is denoted by \(A⟨Σ^{⊕}⟩\) and its elements are called \emph{polynomials}.
Finally, define, for \(k \geq 0\), the operator \(R_k\) by \(R_k(r) ≝ \sum\limits_{|v| \leq k}(r,v)v \) where \(r \in A⟨⟨Σ^{⊕}⟩⟩\).

Now we establish the connection between WCFGs and algebraic systems in commuting variables.
Let \((G,W)\) be a WCFG with \(G = (V, Σ, X_1, R)\), \(V = \{X_1, \ldots,X_n\}\), and \(W\) defined over the semiring \(A\).
We associate to \((G,W)\) the algebraic system in commuting variables defined as follows.
For each \(X_i \in V\):
\begin{equation}
\label{eq:algebraic-system}
X_i = \sum\limits_{\substack{\pi \in R \\ \pi = (X_i \rightarrow \gamma)}} W(\pi) \lbag\gamma\rbag \enspace .
\end{equation}
We refer to this system as the \emph{algebraic system (in commuting variables) corresponding to \((G,W)\)}.
Sometimes, we write \(A⟨Σ^{⊕}⟩\)-algebraic system to indicate that the coefficients of the system lie in \(A⟨Σ^{⊕}⟩\).
Note that \eqref{eq:algebraic-system} can be written as follows.
For each \(X_i \in V\):
\begin{equation}
\label{eq:algebraic-system-poly}
X_i = p_i \enspace , \text{ with \(p_i \in A⟨(Σ \cup V)^{⊕}⟩\)} \enspace .
\end{equation}
A \emph{solution} to \eqref{eq:algebraic-system-poly} is defined as an \(n\)-tuple \(r = (r_1, \ldots, r_{n})\) of elements of \(A⟨⟨Σ^{⊕}⟩⟩\) such that \(r_i = r(p_i)\), for \(i = 1, \ldots, n\), where \(r(p_i)\) denotes the series obtained from \(p_i\) by replacing, for \(j = 1, \ldots, n\), simultaneously each occurrence of \(X_j\) by \(r_j\).
Note that, \(r_1\), the first component of \(r\), always corresponds to the solution for \(X_1\), the initial variable of \(G\).
The \emph{approximation sequence} \(\sigma^0, \sigma^1, \ldots, \sigma^j, \ldots\) where each \(\sigma^j\) is an \(n\)-tuple of elements of \( A⟨Σ ^{⊕}⟩\) associated to an algebraic system as \eqref{eq:algebraic-system-poly} is defined as \(\sigma^0 = (0_A, \ldots, 0_A)\) and \(\sigma^{j+1} = (\sigma^{j}(p_1), \ldots, \sigma^{j}(p_n))\) for all \(j\geq 0\).
We have that \(\lim_{j\to\infty} \sigma^j = \sigma\) iff for all \(k\geq 0\) there exists an \(m(k)\) such that \(R_k(\sigma^{m(k)+j}) = R_k(\sigma^{m(k)}) = R_k(\sigma)\) for all \(j\geq0\).
If \(\lim_{j\to\infty} \sigma^j = \sigma\), then \(\sigma\) is a solution of \eqref{eq:algebraic-system-poly} (from Theorem 14.1 in \cite{KuichSalomaa1986}) and is referred to as the \emph{strong solution}.
Note that, by definition, the strong solution is unique whenever it exits.
Finally, if \((G,W)\) is a regular WCFG then each \(p_i\) in its corresponding algebraic system written as in \eqref{eq:algebraic-system-poly} is a polynomial in \(A⟨\mathcal{M}⟩\), where \(\mathcal{M}\) denotes the set of monomials of the form \(a_1^{\alpha_1} \ldots a_m^{\alpha_m}\,X_1^{\beta_1}\ldots X_n^{\beta_n}\) with \(a_i \in \Sigma\), \(\alpha_i,\beta_j \in \mathbb{N}\) for all \(i\) and \(j\), and \( \sum_{i=1}^{n}\beta_i \leq 1\).
We call a system of this form a \emph{regular algebraic system}.
Conversely, we associate to each \(A⟨Σ^{⊕}⟩\)-algebraic system \(S\) in commuting variables of the form \eqref{eq:algebraic-system-poly} a WCFG \((G,W)\) over the semiring \(A\) as follows.
Define \(G = (\{X_1, \ldots, X_n\}, Σ, X_1, R)\) and such that \(\pi = (X_i \rightarrow \gamma) \in R\) iff \((p_i,\gamma) \neq 0_A\).
If \(\pi \in R\) then \(W(\pi) = (p_i,\gamma)\).
We will refer to \((G,W)\) as the \emph{WCFG corresponding to the algebraic system \(S\)}.
Note that if we begin with an algebraic system in commuting variables, then go to the corresponding WCFG and back again to an algebraic system, then the latter coincides with the original.
However, if we begin with the WCFG, form the corresponding algebraic system and then again the corresponding WCFG, then the latter grammar may differ from the original.

Next theorem shows that the Parikh image of a cycle-free WCFG corresponds to the solution for the initial variable in the corresponding algebraic system.
\begin{theorem}
\label{thm:Parikh-image}
Let \((G,W)\) be a cycle-free WCFG and let \(S\) be the algebraic system in commuting variables corresponding to \((G,W)\).
Then, the strong solution \(r\) of \(S\) exists and the first component of \(r\) corresponds to \(Pk⟦G⟧_W\).
\end{theorem}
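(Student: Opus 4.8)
The plan is to read the approximation sequence off as a sum over parse trees of bounded height, and then to use cycle-freeness to turn the finiteness of parse trees into a stabilization of the truncated sequence. Give each parse tree \(τ = \pi(τ_1,\ldots,τ_n)\) a \emph{height}, equal to \(1\) when \(n=0\) and \(1+\max_s \mathrm{height}(τ_s)\) otherwise, and let \(\mathcal{T}^{\,i}_{\leq j}\) denote the parse trees rooted at \(X_i\) of height at most \(j\). The core claim I would establish, by induction on \(j\), is that for every \(i\)
\[ \sigma^j_i = \sum_{τ \in \mathcal{T}^{\,i}_{\leq j}} W(τ)\,\lbag \mathcal{Y}(τ)\rbag \enspace . \]

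The base case holds since \(\sigma^0_i = 0_A\) and no parse tree has height \(0\). For the inductive step I would substitute \(\sigma^j\) into \(p_i = \sum_{\pi=(X_i\to\gamma)} W(\pi)\lbag\gamma\rbag\). Writing \(\lbag\gamma\rbag = u\,X_{j_1}\cdots X_{j_m}\) with \(u\in\Sigma^{\oplus}\) the terminal part, replacing each \(X_{j_s}\) by \(\sigma^j_{j_s}\) and expanding the product of power series enumerates exactly the tuples \((τ_1,\ldots,τ_m)\) of height-\(\leq j\) subtrees for the variable positions. Since \(\lbag\cdot\rbag\) is a monoid homomorphism \(\Sigma^{*}\to\Sigma^{\oplus}\), for \(τ=\pi(τ_1,\ldots,τ_m)\) one has \(W(\pi)\prod_s W(τ_s)=W(τ)\) and \(u\prod_s\lbag\mathcal{Y}(τ_s)\rbag=\lbag\mathcal{Y}(τ)\rbag\); summing over rules \(\pi\) with left-hand side \(X_i\) and over subtree choices reconstructs all of \(\mathcal{T}^{\,i}_{\leq j+1}\). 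Commutativity of the semiring is exactly what makes the ordered product over variable positions agree with the commutative Parikh image, including when a variable occurs several times in \(\gamma\).

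The decisive step, which I expect to be the main obstacle, is bounding heights in terms of yield length, as this is the only place where cycle-freeness is genuinely used. Because \(G\) is cycle-free, every word has only finitely many parse trees, and this applies with any \(X_i\) taken as start symbol since cycle-freeness is a property of all variables; as \(\Sigma\) is finite there are finitely many words of length at most \(k\), so the set of parse trees \(τ\) rooted at \(X_i\) with \(|\mathcal{Y}(τ)|\leq k\) is finite and has a maximal height, the maximum of which over all \(i\) I call \(H(k)\). As \(|\lbag\mathcal{Y}(τ)\rbag| = |\mathcal{Y}(τ)|\), the operator \(R_k\) retains precisely the terms coming from trees of yield length at most \(k\), all of height \(\leq H(k)\); hence for every \(m\geq H(k)\)
\[ R_k(\sigma^m_i) = \sum_{\substack{τ \text{ rooted at } X_i\\ |\mathcal{Y}(τ)|\leq k}} W(τ)\,\lbag\mathcal{Y}(τ)\rbag \enspace , \]
which is independent of \(m\). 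Taking \(m(k)=H(k)\) yields \(R_k(\sigma^{m(k)+j})=R_k(\sigma^{m(k)})\) for all \(j\geq 0\), so the approximation sequence converges, its limit \(\sigma\) being the series whose degree-\(\leq k\) truncation equals the stabilized value \(R_k(\sigma^{H(k)})\); by Theorem 14.1 of \cite{KuichSalomaa1986} it is the unique strong solution of \(S\).

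It remains to identify \(\sigma_1\) with \(Pk\llbracket G\rrbracket_W\). Grouping the parse trees of \(\mathcal{T}_G\) (those rooted at the start variable \(X_1\)) according to the Parikh image of their yield gives \(Pk\llbracket G\rrbracket_W = \sum_{τ\in\mathcal{T}_G} W(τ)\,\lbag\mathcal{Y}(τ)\rbag\), so \(R_k(Pk\llbracket G\rrbracket_W)\) is exactly the expression displayed above for \(R_k(\sigma_1)\). Thus \(R_k(\sigma_1)=R_k(Pk\llbracket G\rrbracket_W)\) for every \(k\), and since two formal power series that agree under all \(R_k\) are equal, I conclude \(\sigma_1 = Pk\llbracket G\rrbracket_W\).
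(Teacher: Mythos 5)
Your proof is correct, but it takes a genuinely different route from the paper's. The paper splits the statement in two and handles each half by reduction to the literature: existence of the strong solution is proved by contraposition --- if the approximation sequence did not converge, monotonicity (Lemma 14.4 of \cite{KuichSalomaa1986}, over a partially ordered semiring) forces some coefficient of bounded degree to grow forever, which is then translated into arbitrarily long derivations with a fixed Parikh image and hence into a cycle \(X_i \Rightarrow^{+} X_i\); the identification of the first component with \(Pk⟦G⟧_W\) is obtained by passing through the associated system in \emph{noncommuting} variables, citing Theorem 1.5 of \cite{Salomaa1978} for the noncommutative case and the standard relation between the commutative and noncommutative solutions. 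You instead prove a single combinatorial lemma --- that \(\sigma^j_i\) equals the tree-sum \(\sum_{τ} W(τ)\lbag 𝒴(τ)\rbag\) over parse trees rooted at \(X_i\) of height at most \(j\) --- and then use cycle-freeness only to bound the height of trees with yield of length at most \(k\) (via the finiteness of parse trees per word, which the paper itself records in Section~\ref{sec:preliminaries}), so that every truncation \(R_k(\sigma^m)\) stabilizes and the limit is visibly the Parikh image. Your version buys self-containedness, a direct positive convergence argument rather than a contrapositive one, and, notably, independence from the partial order on \(A\): nowhere do you need monotonicity, so the argument reads off for an arbitrary commutative semiring. What the paper's route buys is brevity and an explicit bridge to the noncommutative power-series theory it builds on. The only points worth making fully explicit in your write-up are that \(\mathcal{T}^{\,i}_{\leq j}\) is finite (bounded branching and depth), so each \(\sigma^j_i\) is indeed a polynomial, and that the bijection in the inductive step sends ordered tuples of subtrees to ordered children, which is exactly why repeated occurrences of a variable in \(\gamma\) cause no overcounting.
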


Now we introduce the class of \emph{rational} power series in commuting variables \(Σ\) with coefficients in the semiring \(A\), denoted by \(A^{rat}⟨⟨Σ^{⊕}⟩⟩\).
\begin{definition}
\label{thm:rational-series}
\(r \in A^{rat}⟨⟨Σ^{⊕}⟩⟩\) iff \(r\) is the first component of the solution of a regular algebraic system in commuting variables.
\end{definition}

From the previous definition and Theorem~\ref{thm:Parikh-image} we can characterize the WCFGs that satisfy the Parikh property as follows.
\begin{lemma}
\label{lemma:regular-prop}
Let \((G,W)\) be a cycle-free WCFG.
Then \((G,W)\) satisfies the Parikh property iff \(Pk⟦G⟧_W \in A^{rat}⟨⟨Σ^{⊕}⟩⟩\).
\end{lemma}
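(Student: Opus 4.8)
The plan is to prove both directions of the biconditional by routing everything through Theorem~\ref{thm:Parikh-image}, which identifies $Pk⟦G⟧_W$ with the first component of the strong solution of the algebraic system corresponding to $(G,W)$. The statement of Lemma~\ref{lemma:regular-prop} is really a bridge between the combinatorial notion of the Parikh property (existence of a Parikh-equivalent \emph{regular} WCFG) and the algebraic notion of rationality (membership in $A^{rat}⟨⟨Σ^{⊕}⟩⟩$), so the work consists in translating back and forth across the WCFG/algebraic-system correspondence set up just before the lemma.

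For the forward direction, I would assume $(G,W)$ satisfies the Parikh property, so there is a regular WCFG $(G_{\ell}, W_{\ell})$ with $Pk⟦G⟧_W = Pk⟦G_{\ell}⟧_{W_{\ell}}$. Since $G_{\ell}$ is regular, its corresponding algebraic system is a \emph{regular algebraic system} in the sense defined above (each $p_i$ lies in $A⟨\mathcal{M}⟩$ with at most one variable per monomial). The point to check is that $(G_{\ell},W_{\ell})$ is cycle-free, so that Theorem~\ref{thm:Parikh-image} applies and gives that $Pk⟦G_{\ell}⟧_{W_{\ell}}$ is exactly the first component of the solution of that regular system. By Definition~\ref{thm:rational-series} this places $Pk⟦G_{\ell}⟧_{W_{\ell}}$, hence $Pk⟦G⟧_W$, in $A^{rat}⟨⟨Σ^{⊕}⟩⟩$.

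For the converse, I would assume $Pk⟦G⟧_W \in A^{rat}⟨⟨Σ^{⊕}⟩⟩$. By Definition~\ref{thm:rational-series} there is a regular algebraic system in commuting variables whose solution has $Pk⟦G⟧_W$ as first component. Using the construction given just before the lemma, I pass to the \emph{WCFG corresponding to that algebraic system}; because the system is regular, so is the resulting grammar $(G_{\ell}, W_{\ell})$. It then remains to verify that the first component of the system's solution coincides with $Pk⟦G_{\ell}⟧_{W_{\ell}}$, which again follows from Theorem~\ref{thm:Parikh-image} once cycle-freeness of $G_{\ell}$ is in hand. This yields $Pk⟦G_{\ell}⟧_{W_{\ell}} = Pk⟦G⟧_W$ with $G_{\ell}$ regular, i.e.\ exactly the two conditions of the Parikh property.

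The main obstacle I anticipate is the careful matching of the two notions of ``solution'': Definition~\ref{thm:rational-series} speaks of the \emph{solution} of a regular algebraic system, whereas Theorem~\ref{thm:Parikh-image} is phrased for the \emph{strong solution} of a cycle-free WCFG's system. I would need to argue that regularity forces cycle-freeness (or replace $G_{\ell}$ by an equivalent cycle-free regular WCFG), so that the strong solution exists, is unique, and agrees with the solution referenced in the definition; the remark preceding the lemma that the WCFG-to-system-to-WCFG round trip need not return the original grammar is a warning that this identification must be done at the level of the Parikh image rather than the grammar itself. Everything else is a direct application of the correspondence and of Theorem~\ref{thm:Parikh-image}.
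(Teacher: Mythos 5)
Your proposal is correct and follows essentially the same route as the paper's proof: both directions pass between the regular WCFG and its regular algebraic system and invoke Theorem~\ref{thm:Parikh-image} to identify the first component of the solution with the Parikh image. The cycle-freeness worry you raise is settled by the paper's standing conventions, since regular grammars are right-regular with right-hand sides in \(\Sigma^+(\varepsilon\cup V)\), so every rule emits a terminal and no derivation \(X\Rightarrow^{+}X\) is possible.
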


Next we observe that every WCFG \((G,W)\) defined over a commutative \emph{ring} with the Parikh property satisfies a linear equation of a special kind.
This result directly follows from Theorem  16.4 in~\cite{KuichSalomaa1986}.
\begin{theorem}
\label{thm:linear-recurr}
Let \((G,W)\) be a cycle-free WCFG with \(W\) defined over a commutative ring \(A\). Then \((G,W)\) satisfies the Parikh property iff \(Pk⟦G⟧_W\) satisfies a linear equation of the form: \mbox{\(X = sX + t\)}, for some \(s,t \in A⟨Σ^{⊕}⟩\) with \((s,\varepsilon) = 0\).
\end{theorem}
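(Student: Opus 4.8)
The plan is to deduce the statement from the earlier results rather than re-prove everything from scratch. By Lemma~\ref{lemma:regular-prop}, since $(G,W)$ is cycle-free, it satisfies the Parikh property iff $Pk\llbracket G\rrbracket_W \in A^{rat}\langle\langle \Sigma^{\oplus}\rangle\rangle$, and by Theorem~\ref{thm:Parikh-image} the series $r := Pk\llbracket G\rrbracket_W$ is exactly the first component of the strong solution of the algebraic system corresponding to $(G,W)$. So it suffices to prove the purely algebraic statement: $r \in A^{rat}\langle\langle \Sigma^{\oplus}\rangle\rangle$ iff $r$ satisfies $X = sX + t$ for some $s,t \in A\langle \Sigma^{\oplus}\rangle$ with $(s,\varepsilon)=0$. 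I would treat the two implications separately, using throughout that $A$ is a commutative \emph{ring}, so that $A\langle\langle \Sigma^{\oplus}\rangle\rangle$ is itself a commutative ring in which one may subtract and form determinants.

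For the easy direction ($\Leftarrow$), suppose $r$ satisfies $X = sX + t$ with $(s,\varepsilon)=0$. This single equation is already a regular algebraic system in one variable, since every monomial of $sX+t$ has $X$-degree at most $1$ and hence lies in $A\langle\mathcal{M}\rangle$. Because $(s,\varepsilon)=0$, multiplication by $s$ strictly raises degrees, so the equation has a \emph{unique} solution in $A\langle\langle \Sigma^{\oplus}\rangle\rangle$, namely the strong solution $s^{*}t = \sum_{n\geq 0} s^{n}t$: if $r,r'$ both satisfy it then $r-r' = s^{n}(r-r')$ for every $n$, and $R_{k}(s^{n}(r-r'))=0$ once $n>k$, forcing $r=r'$. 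Thus $r=s^{*}t$ is the first (and only) component of the solution of a regular algebraic system, so $r \in A^{rat}\langle\langle \Sigma^{\oplus}\rangle\rangle$, and Lemma~\ref{lemma:regular-prop} gives the Parikh property.

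For the substantive direction ($\Rightarrow$), assume $r = r_{1} \in A^{rat}\langle\langle \Sigma^{\oplus}\rangle\rangle$, so by Definition~\ref{thm:rational-series} there is a regular algebraic system with solution $\mathbf{r}=(r_{1},\dots,r_{n})$. Linearity in the variables (the defining feature of $\mathcal{M}$) lets me write the system in matrix form $\mathbf{X} = M\mathbf{X} + \mathbf{b}$, with $M$ an $n\times n$ matrix and $\mathbf{b}$ a vector, both with entries in $A\langle \Sigma^{\oplus}\rangle$. The solution satisfies $(I-M)\mathbf{r} = \mathbf{b}$ over the commutative ring $A\langle\langle \Sigma^{\oplus}\rangle\rangle$. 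First I would left-multiply by the adjugate: from $\mathrm{adj}(I-M)\,(I-M) = \det(I-M)\,I$ and $\det(I-M)\,\mathbf{r} = \mathrm{adj}(I-M)\,\mathbf{b}$, reading off the first coordinate gives $\det(I-M)\,r_{1} = t'$ for some polynomial $t' \in A\langle \Sigma^{\oplus}\rangle$. Setting $s := 1 - \det(I-M)$ this rearranges to the desired $r_{1} = s\,r_{1} + t'$, provided $(s,\varepsilon)=0$, i.e. $(\det(I-M),\varepsilon)=1$.

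The hard part will be exactly this last check. One computes $(\det(I-M),\varepsilon) = \det(I - M_{0})$, where $M_{0}$ is the matrix of constant terms $(M_{ij},\varepsilon)$ coming from unit-production-like monomials; this equals $1$ precisely when $M_{0}=0$ (a \emph{proper} system), yet the definition of regular algebraic system permits a nonzero $M_{0}$. My plan is to argue that one may assume the witnessing regular system proper without changing its solution: since the strong solution $\mathbf{r}$ exists, the constant part $M_{0}$ can be eliminated (the standard removal of unit productions), and left-multiplying $(I-M_{0})\mathbf{r} = M_{+}\mathbf{r} + \mathbf{b}$ by $(I-M_{0})^{-1}$ yields an equivalent proper system $\mathbf{X} = M'\mathbf{X} + \mathbf{b}'$ with $(M',\varepsilon)=0$ and the same solution, to which the adjugate argument applies cleanly. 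Justifying this reduction rigorously is the delicate step, because it is here that the existence of the strong solution (hence cycle-freeness) must be combined to control $M_{0}$, and it is the point that genuinely uses that $A$ is a ring rather than merely a semiring; if $(\det(I-M),\varepsilon)$ is only a unit of $A$ instead of $1$, one rescales $s$ and $t'$ by its inverse, which is again available only because $A$ is a ring. Finally, Lemma~\ref{lemma:regular-prop} converts the resulting equation back into the Parikh property, completing the equivalence.
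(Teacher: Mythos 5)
Your reduction of the theorem to a purely algebraic statement via Lemma~\ref{lemma:regular-prop} and Theorem~\ref{thm:Parikh-image} is exactly the paper's proof --- except that the paper then stops and cites Theorem~16.4 of Kuich and Salomaa for the algebraic equivalence, whereas you attempt to prove it from scratch. Your ($\Leftarrow$) direction is correct: $X=sX+t$ with $(s,\varepsilon)=0$ is a one-variable regular system whose unique strong solution is $s^{*}t$, and your uniqueness argument correctly uses subtraction, i.e.\ that $A$ is a ring.

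The ($\Rightarrow$) direction has a genuine gap at exactly the step you flag, and the repair you propose does not work. Over a commutative ring that is not a field, the existence of the strong solution of the witnessing regular system does \emph{not} make $I-M_{0}$ invertible, nor $\det(I-M_{0})$ a unit. Take $A=\mathbb{Z}$ and the regular system $X_{1}=2X_{2}+a$, $X_{2}=3X_{2}$: its approximation sequence stabilizes at $(a,0)$, so it is a legitimate witness that $a\in\mathbb{Z}^{rat}\langle\langle\{a\}^{\oplus}\rangle\rangle$, yet $M_{0}=\left(\begin{smallmatrix}0&2\\0&3\end{smallmatrix}\right)$ is not nilpotent, $\det(I-M_{0})=-2$ is not a unit of $\mathbb{Z}$, and your adjugate computation returns $-2X_{1}=-2a$, i.e.\ $X_{1}=3X_{1}-2a$ with $s=3$, which violates $(s,\varepsilon)=0$; no rescaling by $(-2)^{-1}$ is available. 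Convergence of the approximation sequence only forces high powers of $M_{0}$ to annihilate the particular vectors that arise along the iteration, not nilpotency of $M_{0}$ itself, so the ``standard removal of unit productions'' you invoke is precisely the nontrivial content that still has to be proved --- it is in essence the theorem the paper cites. To close the argument you must either carry out that properization honestly (show the witnessing system can be replaced by one with $M_{0}=0$, the same strong solution, and coefficients still in $A\langle\Sigma^{\oplus}\rangle$), or do as the paper does and invoke Theorem~16.4 of \cite{KuichSalomaa1986} directly.
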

\begin{proof}
The result is a consequence of Theorem  16.4 in~\cite{KuichSalomaa1986} and Lemma~\ref{lemma:regular-prop}.
\end{proof}

We conclude from the previous theorem that, given a WCFG \((G,W)\) with \(W\) defined over a commutative ring, if such a linear equation exists then \((G,W)\) satisfies the Parikh property; otherwise it does not.
Now we will use a result by Kuich et al.~\cite{KuichSalomaa1986}  to conclude that, if \((G,W)\) is defined over \(\mathbb{Q}\) then there exists an irreducible polynomial \(q(X)\) such that \(q\) evaluates to \(0\) when \(X = Pk⟦G⟧_W\), denoted by \(q(Pk⟦G⟧_W) \equiv 0\).
Intuitively, this polynomial contains all the information needed to decide whether or not \((G,W)\) has the Parikh property.

\begin{theorem}[from Theorem 16.9 in~\cite{KuichSalomaa1986}]
\label{thm:q-existence}
Let \(S\) be the \(\mathbb{Q}⟨Σ^{⊕}⟩\)-algebraic system in commuting variables corresponding to a cycle-free WCFG.
Let \(r_1\) be the first component of its strong solution.
Then there exists an irreducible polynomial \(q(X_1)\) with coefficients in \(\mathbb{Q}⟨Σ^{⊕}⟩\), and unique up to a factor in \(\mathbb{Q}⟨Σ^{⊕}⟩\), such that \(q(r_1) \equiv 0\).
\end{theorem}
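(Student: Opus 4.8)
The plan is to recast the statement in purely commutative-algebraic terms and recognise \(r_1\) as an algebraic element of a field extension; the existence of \(q\) is then the minimal polynomial cleared of denominators, and the uniqueness is the standard uniqueness of minimal polynomials in a principal ideal domain. First I would identify \(\mathbb{Q}\langle\Sigma^{⊕}\rangle\) with the polynomial ring \(\mathbb{Q}[a_1,\ldots,a_m]\), a UFD with fraction field \(K=\mathbb{Q}(a_1,\ldots,a_m)\), and \(\mathbb{Q}\langle\langle\Sigma^{⊕}\rangle\rangle\) with the formal power series ring \(\mathbb{Q}[[a_1,\ldots,a_m]]\), an integral domain whose fraction field \(L\) contains \(K\). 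By Theorem~\ref{thm:Parikh-image} the strong solution \(r=(r_1,\ldots,r_n)\) exists, its components lie in \(\mathbb{Q}[[a_1,\ldots,a_m]]\subseteq L\), and they satisfy the defining relations \(r_i=p_i(r_1,\ldots,r_n)\) for \(i=1,\ldots,n\).

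The core step is to prove that \(r_1\) is algebraic over \(K\). I would consider the ideal \(J=\langle X_1-p_1,\ldots,X_n-p_n\rangle\) inside \(K[X_1,\ldots,X_n]\) together with the \(K\)-algebra evaluation \(\varphi\colon K[X_1,\ldots,X_n]\to L\) sending \(X_i\mapsto r_i\); its kernel \(\mathfrak{p}\) is a prime ideal containing \(J\), and \(K[r_1,\ldots,r_n]\cong K[X_1,\ldots,X_n]/\mathfrak{p}\). It therefore suffices to show that the affine variety \(V(J)\)---the fixed-point locus of the polynomial map \(x\mapsto(p_1(x),\ldots,p_n(x))\) over \(\overline{K}\)---is zero-dimensional, for then \(K[X_1,\ldots,X_n]/\mathfrak{p}\) is a finite extension of \(K\) and every \(r_i\), in particular \(r_1\), is algebraic over \(K\). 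Here cycle-freeness is essential: a unit-production cycle \(X_i\Rightarrow^{+}X_i\) is exactly the degeneracy that would prevent the variables from being eliminable in the equations \(X_i-p_i=0\), producing a positive-dimensional fixed-point set; its absence lets me order the variables along the acyclic unit-production graph and eliminate them successively, the terminal-bearing productions supplying coefficients that are nonzero in \(K\) and hence invertible. Letting \(\mu\in K[X_1]\) be the minimal polynomial of \(r_1\) and clearing denominators by a common multiple \(d\in\mathbb{Q}\langle\Sigma^{⊕}\rangle\) yields \(q(X_1)\in\mathbb{Q}\langle\Sigma^{⊕}\rangle[X_1]\) with \(q(r_1)\equiv0\).

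For irreducibility and uniqueness I would argue inside the PID \(K[X_1]\). The kernel of the evaluation \(K[X_1]\to L,\ X_1\mapsto r_1\), is the principal ideal generated by the irreducible minimal polynomial \(\mu\), unique up to a factor in \(K^{*}\). Any \(q\in\mathbb{Q}\langle\Sigma^{⊕}\rangle[X_1]\) with \(q(r_1)\equiv0\) lies in this ideal, and if \(q\) is irreducible over \(K\) it must be a \(K^{*}\)-multiple of \(\mu\); hence two such candidates \(q,q'\) satisfy \(q=\lambda q'\) with \(\lambda\in K^{*}=\mathbb{Q}(a_1,\ldots,a_m)^{*}\), and writing \(\lambda\) as a quotient \(c/c'\) of elements of \(\mathbb{Q}\langle\Sigma^{⊕}\rangle\) gives \(c'q=cq'\), so \(q\) and \(q'\) coincide up to a factor in \(\mathbb{Q}\langle\Sigma^{⊕}\rangle\), as claimed. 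By Gauss's lemma over the UFD \(\mathbb{Q}\langle\Sigma^{⊕}\rangle\) the primitive representative of \(\mu\) is the canonical choice of \(q\).

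The hard part will be the zero-dimensionality of \(V(J)\) in full generality: while cycle-freeness rules out the obvious degeneracies and makes the one-variable and triangular cases transparent, turning the acyclic-elimination idea into a clean proof that the fixed-point locus is finite for \emph{every} cycle-free system is where the real work lies. This is precisely the content supplied by Theorem~16.9 in~\cite{KuichSalomaa1986} for proper systems, so the pragmatic route is to verify that the algebraic system associated with a cycle-free WCFG meets that theorem's properness hypotheses and quote it, using the argument above only to locate \(q\) within \(\mathbb{Q}\langle\Sigma^{⊕}\rangle[X_1]\) and to pin down the uniqueness.
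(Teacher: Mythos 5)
The paper offers no proof of this statement: the bracketed attribution in the theorem header is the entire justification, the result being imported verbatim from Theorem~16.9 of Kuich and Salomaa. Your proposal is therefore a reconstruction rather than an alternative to anything in the paper. Its outer layers are correct and standard: identifying \(\mathbb{Q}\langle\Sigma^{\oplus}\rangle\) with the polynomial ring \(\mathbb{Q}[a_1,\ldots,a_m]\) (a UFD) and \(\mathbb{Q}\langle\langle\Sigma^{\oplus}\rangle\rangle\) with \(\mathbb{Q}[[a_1,\ldots,a_m]]\), taking the minimal polynomial of \(r_1\) over the fraction field \(K\), clearing denominators, and deriving uniqueness up to a factor from the fact that the kernel of evaluation in the PID \(K[X_1]\) is principal --- all of that is fine, \emph{granted} that \(r_1\) is algebraic over \(K\).

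That algebraicity is the entire content of the theorem, and it is exactly the step your proposal does not supply. The reduction to zero-dimensionality of \(V(J)\) is a sufficient condition but is never established; your "order the variables along the acyclic unit-production graph and eliminate" heuristic only speaks to linear degeneracies, whereas in a system such as \(X = aX^2 + a\) the variable occurs nonlinearly in its own equation and no unit-production ordering exists to exploit --- what saves algebraicity there is elimination theory for nonlinear systems, not acyclicity of chain rules. Your closing fallback, "verify the properness hypotheses of Theorem~16.9 and quote it," also hides a real step: the system associated with a cycle-free WCFG need not be proper in the Kuich--Salomaa sense, since cycle-freeness permits \(\varepsilon\)-rules (e.g.\ \(D \rightarrow \varepsilon\) in Example~\ref{ex:counterexample2}) and non-cyclic chain rules, so a normalization preserving the strong solution must be interposed before their theorem applies. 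In the end your argument is circular at its core --- the hard claim is deferred to the very theorem being proved --- which is defensible only in the same sense that the paper's bare citation is; the surrounding commutative-algebra framing is a useful gloss but not a proof.
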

Kuich et al.~\cite{KuichSalomaa1986} show that the polynomial \(q\) is effectively computable by means of a procedure based on the classical elimination theory.
Now we develop an alternative method using Groebner bases.
Before introducing this technique, we give some intuition on the ideas presented above by revisiting the examples of the previous section.

\begin{example}
\label{ex:example3}
Consider the cycle-free WCFG \((G,W)\) defined in Example~\ref{ex:example0} where the weight function \(W\) is now defined over \((\mathbb{Q}, +, \cdot, 0, 1)\) and assigns \(1\) to each production in the grammar.
The algebraic system \(S\) corresponding to \((G,W)\) is given by the equation \(X = a\,X^2 + a\).
Let \(r_1\) be its strong solution.
Assume for now that the irreducible polynomial \(q(X) \in \mathbb{Q}⟨\{a\}^{⊕}⟩⟨X⟩\) from Theorem~\ref{thm:q-existence} is \(q(X) = aX^2 - X + a\) (later we will give its construction using Groebner bases).
We will see later that the fact that \(q(X)\) is not linear is enough to conclude that \((G,W)\) does not satisfy the Parikh property (as we expected).
Note that the solution of \(S\) is \(r_1 = \frac{1 - \sqrt{1-4a^2}}{2a}\), which written as a series corresponds to \(r_1 = \sum_{n\geq 0} C_n\,a^{2n+1}\), with \(C_n = \frac{1}{n+1}\binom{2n}{n}\) the \(n\)-th Catalan number.
It is known that this formal power series cannot be written as the solution of a linear equation with coefficients in \(\mathbb{Q}⟨\{a\}^{⊕}⟩\)~\cite{Vijay17}.
\end{example}

\begin{example}
\label{ex:example2}
Now we will consider the WCFG given in Example~\ref{ex:counterexample2}.
This time we will give a complete definition of its grammar rules and, as in the previous example, we will extend its weight domain from \(\mathbb{N}\) to \(\mathbb{Q}\).
Define the WCFG \((G_2,W_2)\) where \(G_2 = (\{X_2,\overline{D},D,Y,Z\}, \{a,\overline{a}\}, X_2, R_2)\), \(R_2\) is given by:
\begin{align*}
X_2 &\rightarrow D \mid \overline{D} & \overline{D} &\rightarrow D\,\overline{a}\,Y \mid D\,a\,Z & Z &\rightarrow D\,a\,Z \mid D \enspace .
\\
D &\rightarrow a\,D\,\overline{a}\,D \mid \varepsilon & Y &\rightarrow a\,Y \mid \overline{a}\,Y \mid \varepsilon
\end{align*}
and the weight function \(W_2\) is defined over \((\mathbb{Q}, +, \cdot, 0, 1)\) and assigns \(1\) to each production in the grammar.
Note that \((G_2,W_2)\) is cycle-free.
The grammar variable \(D\) generates all the words in the Dyck language \(L_D\) over the alphabet \(\{a,\overline{a}\}\), while the variable \(\overline{D}\) generates \(\{a, \overline{a}\}^* \setminus L_D\).
The system \(S\) corresponding to \((G_2,W_2)\) consists of the following equations:
\begin{align*}
X_2 &= D + \overline{D} & \overline{D} &= D\,\overline{a}\,Y + D\,a\,Z & Z &= D\,a\,Z + D \enspace .
\\
D &= a\,D\,\overline{a}\,D + 1 & Y &= a\,Y + \overline{a}\,Y + 1
\end{align*}
Let \(\sigma = (r_1, r_2, r_3, r_4, r_5)\) be its strong solution where \(r_1\) corresponds to the solution for the initial variable \(X_2\).
Assume for now that the irreducible polynomial \(q(X_2) \in \mathbb{Q}⟨\{a,\overline{a}\}^{⊕}⟩⟨X_2⟩\) described by Theorem~\ref{thm:q-existence} is:
\[q(X_2) = (1 - (a + \overline{a}))X_2 - 1 \enspace .\]
We observe that \(q\) is linear in \(X_2\) and can be written as:
\[q(X_2) = (1 - s)X_2 - t = (1 - (a + \overline{a}))X_2 - 1 \enspace ,\]
with \((s, ε) =  0\).
Thus, by Theorem~\ref{thm:linear-recurr}, we conclude that \((G_2,W_2)\) satisfies the Parikh property as we expected.
\end{example}

Now we develop the technique we will use to construct the irreducible polynomial of Theorem~\ref{thm:q-existence}: Groebner bases.
A Groebner basis is a set of polynomials in one or more variables enjoying certain properties.
Given a set of polynomials \(F\) with coefficients in a field, one can compute a Groebner basis \(G\) of \(F\) with the property that \(G\) has the same solutions as \(F\) when interpreted as a polynomial system of equations.
Then, problems such as finding the solutions for the system induced by \(F\), or looking for alternative representations of polynomials in terms of other polynomials become easier using \(G\) instead of \(F\).
One of the main insights for using Groebner bases is that they are \emph{effectively constructable} using standard computer algebra systems, for any set of polynomials with coefficients in a field.

We are interested in computing Groebner bases of algebraic systems in commuting variables corresponding to weighted CFGs.
Given a WCFG and its corresponding algebraic system, our goal is to obtain a system with the same solution as the original, and such that one equation in the new system depends only on the initial grammar variable \(X_1\).
This equation will contain all the information needed to decide whether or not the given WCFG satisfies the Parikh property.
We will not enter into the technical details of how Groebner bases are constructed and their properties as these lie beyond the scope of this document (however, an explicit reference will be given in connection with each result applied).
Instead, we will give a result that encapsulates all the preconditions and postconditions we need for our purpose (Theorem~\ref{thm:g-properties}).
We first introduce the definitions that will appear in the theorem.

In what follows, \(K\) will always denote a field.
First we need to introduce the notion of \emph{ideal}.
Let \(K⟨V^{⊕}⟩\) denote the ring of polynomials in variables \(V\) and with coefficients in \(K\).
A subset \(I \subset K⟨V^{⊕}⟩ \) is an \emph{ideal} iff (i) \(0_K \in I\), (ii) if \(f, g \in I\) then \(f + g \in I\), and (iii) if \(f \in I\) and \(h \in K⟨V^{⊕}⟩\) then \(h \cdot f \in I\).
Given a set of polynomials \(F = \{f_1, \ldots, f_n\}\), we define \(⟨F⟩\) as \(⟨F⟩ ≝ \{\sum_{i=1}^{n} h_i\cdot f_i \mid h_i \in K⟨V^{⊕}⟩, f_i \in F\}\).
It can be shown that \(⟨F⟩\) is an ideal~\cite{Cox1997} and we call it the \emph{ideal generated by \(F\)}.
When an ideal is generated by a finite number of polynomials \(g_1, \ldots, g_n \in K⟨V^{⊕}⟩\), we say that \(g_1, \ldots, g_n\) is a \emph{basis} of the ideal.
It is known that every ideal in \(K⟨V^{⊕}⟩\) has a basis (actually many, but the ones we are particularly interested in are the so-called Groebner bases)~\cite{Cox1997}.
If one considers the set of polynomial equations \(\{f = 0 \mid f\in F\}\), denoted by \(F = \mathbf{0}\), then the set of all solutions of \(F = \mathbf{0}\) is defined as \(\{(r_1,r_2, \ldots, r_n)\in K^n \mid f(r_1, \ldots,r_n) \equiv 0, \text{ for all }f \in F\}\).
Then, given two sets of polynomials \(F\) and \(G\), if \(⟨F⟩ = ⟨G⟩\) then the set of solutions of \(F = \mathbf{0}\) coincides with the set of solutions of \(G = \mathbf{0}\)~\cite{Cox1997}.
To construct a Groebner basis of an ideal \(I\), one needs to impose first a total ordering on the monomials of variables occuring in \(I\).
This choice is significant as different orderings lead to different Groebner bases with different properties.
We are interested in computing Groebner bases with the \emph{elimination property} for the initial variable \(X_1\), i.e., bases where at least one polynomial depends only on \(X_1\).
Hence, we will always impose the \emph{ reverse lexicographic ordering} to construct Groebner bases.
\begin{definition}
\label{def:revlex}
Let \(V = \{X_1, \ldots,X_n\}\) be a set of variables.
Let \(\alpha\) and \(\beta\) be two monomials in \( V^{⊕}\) and let \(\overline{\alpha}\) (resp. \( \overline{\beta})\) be the vector in \(\mathbb{N}^n\) such that its \(i\)-th component corresponds to the number of occurrences of the variable \(X_i\) in \(\alpha\) (resp. \(\beta\)).
Then we say that \emph{\(\alpha\) is greater than \(\beta\) w.r.t. the reverse lexicographic ordering}, denoted by \(\alpha \succ_{revlex} \beta\), iff the first non-zero component of the vector \(\overline{\alpha} - \overline{\beta}\) is negative.
\end{definition}

Notice that Definition~\ref{def:revlex} implies an ordering of the variables: \(X_n \succ_{revlex} X_{n-1} \succ_{revlex} \ldots \succ_{revlex} X_1\).
The reason for choosing the \emph{reverse} lexicographic ordering is that, in order to compute a Groebner basis with the elimination property for the initial  variable \(X_1\), we need \(X_1\) to be the least monomial (with one or more variable).
In what follows, the phrase ``w.r.t. the reverse lexicographic ordering'' (for some given \(V = \{X_1, \ldots, X_n\}\)) will refer to the one described in Definition~\ref{def:revlex} with variables \(V\), unless stated otherwise. 
Fixed a total monomial ordering, we define the \emph{leading monomial} of a polynomial \(p\) as the greatest monomial in \(p\), and we denote it by \(LM(p)\).
We define the \emph{leading term} of \(p\) as the leading monomial of \(p\) together with its coefficient, and we denote it by \(LT(p)\).
Finally, we introduce the notion of a \emph{reduced} Groebner basis which allows to define uniquely a Groebner basis of an ideal of polynomials.
Let \(F\) be a set of polynomials and \(G\)  a Groebner basis of \(⟨F⟩\).
We say that \(G\) is a \emph{reduced} Groebner basis of \(⟨F⟩\) iff for each \(g_i \in G\) (i) the coefficient of \(LT(g_i) = 1\); and (ii) \(LM(g_i)\) does not divide any term of any \(g_j\) with \(i\neq j\).
For a given set of polynomials \(F\) and monomial ordering \(\succ\), there exists exactly one reduced Groebner basis of \(⟨F⟩\) w.r.t. \(\succ\)~\cite{Cox1997}.
We abuse notation and write \(K⟨X⟩\) instead of \(K⟨\{X\}^{⊕}⟩\) to refer to the ring of polynomials in the variable \(X\) with coefficients in \(K\).
Now we are ready to give the theorem.
\begin{theorem}
\label{thm:g-properties}
Let \(K\) be a field and \(V = \{X_1, \ldots,X_n\}\) a set of variables.
Let \(F \subseteq K⟨V^{⊕}⟩\) be a set of polynomials such that the strong solution of the system \(F = \mathbf{0}\) is \((r_1, \ldots,r_n)\) where \(r_i\) corresponds to the solution for \(X_i\).
Let \(G\) be the reduced Groebner basis of \(⟨F⟩\) w.r.t. the reverse lexicographic ordering.
Then the following properties are satisfied:
\begin{enumerate}
	\item \((r_1, \ldots, r_n\)) is also the strong solution of the system \(G = \mathbf{0}\) and,
	\item there is exactly one polynomial \(g \in G\) s.t. \(g\in K⟨X_1⟩\), and for that \(g\) we have \(g(r_1) \equiv 0\).
\end{enumerate}
\end{theorem}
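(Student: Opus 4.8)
The plan is to deduce both parts from a single structural fact — that a Groebner basis of $\langle F\rangle$ generates the very same ideal — combined with the shape of the chosen ordering. First I would record the pivotal identity $\langle F\rangle = \langle G\rangle$, which holds by definition of $G$. Substitution of the power-series tuple $\sigma = (r_1,\ldots,r_n)$ into a polynomial, i.e.\ the map $\varphi_\sigma : K\langle V^{\oplus}\rangle \to A\langle\langle \Sigma^{\oplus}\rangle\rangle$ sending $X_i \mapsto r_i$, is a ring homomorphism. Writing any $g \in \langle G\rangle = \langle F\rangle$ as $g = \sum_i h_i f_i$ with $f_i \in F$ and applying $\varphi_\sigma$ gives $g(\sigma) = \sum_i h_i(\sigma)\,f_i(\sigma) \equiv 0$, since $\sigma$ is a common zero of $F$; the symmetric computation shows $F = \mathbf{0}$ and $G = \mathbf{0}$ have identical common-zero sets in the power-series ring. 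To upgrade ``common zero'' to ``strong solution'' I would invoke that the strong solution is the unique common zero lying in $A\langle\langle \Sigma^{\oplus}\rangle\rangle^{n}$ (the other algebraic branches escaping into Laurent/Puiseux series), so it is preserved under passing from $F$ to the ideal-equivalent $G$; this is Part~1.

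For Part~2 I would split existence, uniqueness, and vanishing. Existence of some $g \in G \cap K\langle X_1\rangle$ is the elimination property: because the reverse lexicographic ordering makes $X_1$ the least variable, the Elimination Theorem~\cite{Cox1997} guarantees that $G \cap K\langle X_1\rangle$ is a Groebner basis of the elimination ideal $\langle F\rangle \cap K\langle X_1\rangle$, and this ideal is nonzero since $r_1$ is algebraic over the coefficient domain — exactly the relation furnished by Theorem~\ref{thm:q-existence} in the rational case. Uniqueness then follows from $G$ being \emph{reduced}: if $g \neq g'$ both lay in $G \cap K\langle X_1\rangle$, their leading monomials would be powers $X_1^{a}, X_1^{b}$ of a single variable, so one divides the other, contradicting condition (ii) of a reduced basis (no leading monomial divides a term of another basis element). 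Finally, vanishing is immediate from Part~1: $g \in G \subseteq \langle F\rangle$ yields $g(\sigma)\equiv 0$, and since $g$ involves no variable of $V$ other than $X_1$ we have $g(\sigma)=g(r_1)$, whence $g(r_1)\equiv 0$.

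The step I expect to be the main obstacle is the bridge between classical Groebner-basis theory, which is phrased for affine varieties over a field, and the power-series notion of \emph{strong solution} that drives this section. The identity $\langle F\rangle = \langle G\rangle$ transfers ``being a common zero'' cleanly through $\varphi_\sigma$, but it does not by itself single out the \emph{distinguished} solution, because the approximation sequence $\sigma^{j+1} = (\sigma^{j}(p_1),\ldots,\sigma^{j}(p_n))$ is defined from the solved-form presentation $X_i = p_i$, not from the raw zero set. The crux is therefore to argue that the strong solution is determined intrinsically — as the unique common zero in $A\langle\langle \Sigma^{\oplus}\rangle\rangle^{n}$, equivalently the unique one compatible with the filtration $R_k$ — and hence survives the replacement of $F$ by $G$ even though $G$ is no longer in solved form. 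A second, smaller gap to close carefully is guaranteeing nontriviality of the elimination ideal at the stated level of generality; I would reduce this precisely to the algebraicity of $r_1$ supplied by Theorem~\ref{thm:q-existence}.
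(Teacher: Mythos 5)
Your proposal is correct and follows essentially the same route as the paper: Part~1 from $\langle F\rangle=\langle G\rangle$, and Part~2 by combining the Elimination Theorem for the elimination ideal $\langle F\rangle\cap K\langle X_1\rangle$, nontriviality of that ideal via Theorem~\ref{thm:q-existence}, uniqueness from the reducedness of $G$ (one leading power of $X_1$ would divide the other), and vanishing from Part~1. Your extra discussion of why the \emph{strong} solution (rather than merely a common zero) transfers from $F$ to $G$ is more careful than the paper, which simply asserts Property~1 from the basis equality; only the common-zero fact is actually needed for Part~2, so this does not affect correctness.
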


\begin{proof}
Property \(1.\) follows from the fact that \(G\) is a basis of \(⟨F⟩\).
Now we prove property \(2.\)
\(G\) is a Groebner basis of \(⟨F⟩\) w.r.t. the reverse lexicographic ordering.
Then, as a result of the Elimination Theorem~\cite[Theorem 3.1.2]{Cox1997}, \(G \cap K⟨X_1⟩\) is a Groebner basis of \(⟨F⟩ \cap K⟨X_1⟩\).
Assume first that \(G \cap K⟨X_1⟩\) contains only the zero polynomial (the constant polynomial whose coefficients are equal to \(0\)).
Then the ideal \(⟨F⟩ \cap K⟨X_1⟩\) also contains only the zero polynomial.
But this contradicts Theorem~\ref{thm:q-existence}.
Then \(G \cap K⟨X_1⟩\) contains at least one nonzero polynomial \(g\).
Assume now that \(G \cap K⟨X_1⟩\) contains two different elements \(g_1\) and \(g_2\) in \(K⟨X_1⟩\).
W.l.o.g., let \(g_1\) be such that \(LM(g_1) \preceq_{lex} LM(g_2)\).
Thus, \(LM(g_1)\) divides (at least) the leading term of \(g_2\).
Then \(G\) is not in reduced form (contradiction).
We conclude that there is exactly one (nonzero) polynomial \(g \in G\) such that \(g\in K⟨X_1⟩\).
Finally, \(g(r_1)\equiv 0\) follows from \(1.\) and the fact that \( g \in (G \cap K⟨X_1⟩)\).
\end{proof}

Now we show in Theorem~\ref{thm:q-construction} how to construct \(q\) using Groebner bases.
Finally, we give in Theorem~\ref{thm:decision-proc} the main result of this section.

\begin{theorem}
\label{thm:q-construction}
Let \(S\) be a \(\mathbb{Q}⟨Σ^{⊕}⟩\)-algebraic system in commuting variables corresponding to a cycle-free WCFG and \(r_1\) be the first component of its strong solution.
Then an irreducible polynomial \(q(X_1)\) with coefficients in \(\mathbb{Q}⟨Σ^{⊕}⟩\) such that \(q(r_1) \equiv 0\) can be effectively constructed.
\end{theorem}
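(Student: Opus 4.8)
The plan is to realize the abstract existence statement of Theorem~\ref{thm:q-existence} as the concrete output of a Groebner basis computation, with Theorem~\ref{thm:g-properties} serving as the workhorse. First I would form the algebraic system \(S\) associated with the given cycle-free WCFG, written as \(X_i = p_i\) with \(p_i \in \mathbb{Q}\langle(\Sigma\cup V)^{\oplus}\rangle\), and turn it into the polynomial set \(F = \{f_1, \ldots, f_n\}\) where \(f_i = X_i - p_i\). Since the system \(F = \mathbf{0}\) is exactly \(S\), Theorem~\ref{thm:Parikh-image} guarantees that its strong solution \((r_1, \ldots, r_n)\) exists with \(r_1 = Pk\llbracket G\rrbracket_W\), so the hypotheses of Theorem~\ref{thm:g-properties} are met.

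The crucial modelling decision is the choice of ground field. Because the coefficients we want for \(q\) involve the terminal symbols (as the examples make explicit), I would treat \(\Sigma\) as part of the coefficient field and take \(K = \mathbb{Q}(\Sigma)\), the field of rational functions in \(\Sigma\), while letting \(V = \{X_1, \ldots, X_n\}\) be the polynomial variables. This is precisely the setting of Theorem~\ref{thm:g-properties} with \(F \subseteq K\langle V^{\oplus}\rangle\), and it is what makes Groebner basis theory applicable at all, since Buchberger's algorithm needs a \emph{computable} field of coefficients and \(\mathbb{Q}(\Sigma)\) is one. I would then compute the reduced Groebner basis \(G\) of \(\langle F\rangle\) with respect to the reverse lexicographic ordering of Definition~\ref{def:revlex} (so that \(X_1\) is least), which is effective. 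By Theorem~\ref{thm:g-properties}(2) there is exactly one nonzero polynomial \(g \in G \cap K\langle X_1\rangle\), and it satisfies \(g(r_1) \equiv 0\).

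It remains to pass from \(g\) to an \emph{irreducible} annihilator. Clearing the denominators of the coefficients of \(g\) (which lie in \(\mathbb{Q}(\Sigma)\)) yields a polynomial in \(\mathbb{Q}\langle\Sigma^{\oplus}\rangle[X_1]\) that still vanishes at \(r_1\); since multivariate polynomial factorization over \(\mathbb{Q}\) is effective, I can factor it into irreducibles. At least one irreducible factor involving \(X_1\) vanishes at \(r_1\), because \(g(r_1)\equiv 0\), the pure-terminal factors do not vanish, and power series over a field form an integral domain; by the uniqueness clause of Theorem~\ref{thm:q-existence} this vanishing factor is unique up to a \(\mathbb{Q}\langle\Sigma^{\oplus}\rangle\)-factor, and therefore it is the desired \(q\). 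To single it out effectively I would compute \(r_1\) to increasing precision through the approximation sequence \(\sigma^0,\sigma^1,\ldots\) and evaluate each candidate factor; a factor that does not annihilate \(r_1\) eventually exhibits a nonzero coefficient and is discarded, and since only finitely many factors remain, the search terminates and returns \(q\).

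The main obstacle I anticipate is not the Groebner computation itself but the two points where the power-series semantics and the purely field-theoretic machinery must be reconciled: (i) justifying that enlarging the coefficients from \(\mathbb{Q}\) to \(\mathbb{Q}(\Sigma)\) leaves the strong solution unchanged, so that Theorem~\ref{thm:g-properties} genuinely applies to the \(r_i\) that live in \(\mathbb{Q}\langle\langle\Sigma^{\oplus}\rangle\rangle\); and (ii) arguing that the selection of the correct irreducible factor is effective, i.e.\ that non-vanishing of a factor at the infinite series \(r_1\) is always witnessed by a finite prefix. Both are controlled by the uniqueness clause of Theorem~\ref{thm:q-existence} together with the stabilization property of the approximation sequence (for each \(k\) there is an \(m(k)\) with \(R_k(\sigma^{m(k)}) = R_k(r_1)\)), but they are where the care is needed.
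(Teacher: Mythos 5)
Your proposal follows essentially the same route as the paper's proof: compute the reduced Groebner basis of the system over the fraction field of \(\mathbb{Q}⟨Σ^{⊕}⟩\) with the reverse lexicographic ordering, extract the unique annihilator \(g\) of \(r_1\) via Theorem~\ref{thm:g-properties}, factor it into irreducibles, and identify the factor vanishing at \(r_1\) effectively by evaluating truncations from the approximation sequence. The only (immaterial) difference is that you clear denominators before factoring rather than after, and you are more explicit than the paper about why the correct factor is eventually witnessed by a finite prefix.
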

\begin{proof}

We begin with the first part of the algorithm.
Let \(K\) be the fraction field of \(\mathbb{Q}⟨Σ^{⊕}⟩\), i.e., the smallest field (w.r.t. inclusion) containing \(\mathbb{Q}⟨Σ^{⊕}⟩\). 
Consider \(S\) as defined in \eqref{eq:algebraic-system-poly} (page~\pageref{eq:algebraic-system-poly}) where now each polynomial \(p_i\) has its coefficients in \(K\) and its variables in \(V\), and let \(F\subseteq K⟨V^{⊕}⟩\) be the set of polynomials \(\{p_i \mid 1 \leq i \leq n\}\).
Construct the reduced Groebner basis \(G\) of \(F\) w.r.t. the reverse lexicographic ordering.
Let \(G = \{g_1, \ldots,g_s\}\) with \(s \geq 1\).
By Theorem~\ref{thm:g-properties}, there is exactly one \(g \in G\) such that \(g \in K⟨X_1⟩\), and \(g\) satisfies \(g(r_1) \equiv 0\).

We cannot conclude yet that \(g(X_1)\) is the polynomial \(q(X_1)\) we are looking for since \(g(X_1)\) might not be irreducible in the fraction field of \(\mathbb{Q}⟨Σ^{⊕}⟩\).
This constitutes the second part of the algorithm which follows the method given in~\cite{KuichSalomaa1986} to obtain from \(g(X_1)\) an irreducible polynomial \(q(X_1)\) such that \(q(r_1) \equiv 0\).
Compute the factorization\footnote{Polynomial factorizations are performed w.r.t. polynomials with coefficients in the fraction field of \(\mathbb{Q}⟨Σ^{⊕}⟩\) which is a computable field.} of \(g\) in the fraction field of \(\mathbb{Q}⟨Σ^{⊕}⟩\) and let \(\{q_1(X_1), \ldots, q_m(X_1)\}\) with \(m \geq 1\) be the set of all irreducible polynomials obtained thus as factors.
Because \(g(r_1) \equiv 0\), there exists an index \(j_0\) with \(1 \leq j_0 \leq m\) such that \(q_{j_0}(r_1) \equiv 0\) and \(q_j(r_1) \not\equiv 0\) for \(j \neq j_0\) and \(1 \leq j \leq m\).
Now we show how to find \(j_0\).
Using the operator \(R_k\) introduced in the beginning of Section~\ref{sec:decidability}, we  have that \(R_k(q_{j_0}(R_k(r_1))) \equiv 0\) for all \(k \geq 0\), while for each \(j \neq j_0\) there is always an index \(k_j\) such that \(R_{k_j}(q_{j}(R_{k_j}(r_1))) \not \equiv 0\).
Then, eventually an index \(j_0\) is always found.
Let \(q_{j_0}(X_1) = \frac{n_k}{d_k}X_1^{k} + \frac{n_{k-1}}{d_{k-1}}X_1^{k-1} + \ldots + \frac{n_0}{d_0}\) with \(k \geq 0, n_i,d_i \in \mathbb{Q}⟨Σ^{⊕}⟩\) and \(d_i\neq 0\) for all \(i\).
Let \(lcm(d_0, \ldots,d_k)\) denote the least common multiple of \(d_0, \ldots, d_k\) and define \(q(X_1) = lcm(d_0, \ldots,d_k) \cdot q_{j_0}(X_1) \).
Now \(q(X_1) \in \mathbb{Q}⟨Σ^{⊕}⟩⟨X_1⟩\) and this completes the algorithm.
\end{proof}

\begin{remark}
It is worth noting that, even though \(q(X_1)\) is an irreducible polynomial over \(K\), the fraction field of \(\mathbb{Q}⟨Σ^{⊕}⟩\), it might not be irreducible over \(\mathbb{Q}⟨Σ^{⊕}⟩\) since it might have a factorization consisting of a polynomial \(\widetilde{q}(X_1)\in \mathbb{Q}⟨Σ^{⊕}⟩⟨X_1⟩ \) of the same degree and one or more constant polynomials over \(\mathbb{Q}⟨Σ^{⊕}⟩\), i.e., polynomials of degree zero, that are not units in \(\mathbb{Q}⟨Σ^{⊕}⟩\).
However, since constant factors are not relevant for the result, we say that a polynomial over \(\mathbb{Q}⟨Σ^{⊕}⟩\) is \emph{irreducible} iff either no factorization exists, or, if there is one, then it is of the aforementioned form.
\end{remark}

\begin{theorem}
\label{thm:decision-proc}
Let \((G,W)\) be a cycle-free WCFG with \(W\) defined over \(\mathbb{Q}\). Then, it is decidable whether or not \((G,W)\) verifies the Parikh property.
\end{theorem}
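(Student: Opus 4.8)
The plan is to reduce the decision problem to reading off the degree of the irreducible polynomial $q(X_1)$ supplied by Theorem~\ref{thm:q-construction}. Concretely, given the cycle-free WCFG $(G,W)$ over $\mathbb{Q}$, I would first form the corresponding algebraic system $S$ in commuting variables as in~\eqref{eq:algebraic-system-poly}; by Theorem~\ref{thm:Parikh-image} its strong solution exists and its first component $r_1$ equals $Pk\llbracket G\rrbracket_W$. I would then invoke Theorem~\ref{thm:q-construction} to effectively compute an irreducible polynomial $q(X_1)\in\mathbb{Q}\langle\Sigma^{\oplus}\rangle\langle X_1\rangle$ with $q(r_1)\equiv 0$. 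The procedure outputs ``$(G,W)$ satisfies the Parikh property'' precisely when $\deg_{X_1} q = 1$, and ``it does not'' otherwise. Since $q$ is produced explicitly, its degree is immediately computable, so it remains only to justify this equivalence, which is the heart of the argument.

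The key lemma I would prove is that $(G,W)$ satisfies the Parikh property iff $\deg_{X_1} q = 1$. Since $\mathbb{Q}$ is a field, and in particular a commutative ring, Theorem~\ref{thm:linear-recurr} applies and tells us the Parikh property is equivalent to $r_1$ satisfying a \emph{proper} linear equation $X = sX + t$ with $s,t \in \mathbb{Q}\langle\Sigma^{\oplus}\rangle$ and $(s,\varepsilon) = 0$. For the forward direction, such an equation says that $(1-s)\,X_1 - t$ vanishes at $r_1$; this is a polynomial of degree $1$ in $X_1$ whose leading coefficient $1-s$ has constant term $1$ and is therefore nonzero. Regarding $r_1$ as algebraic over the fraction field $K$ of $\mathbb{Q}\langle\Sigma^{\oplus}\rangle$ inside the fraction field of the power series ring, the polynomials of $K\langle X_1\rangle$ annihilating $r_1$ form the ideal generated by the minimal polynomial of $r_1$; an irreducible annihilating polynomial coincides with that minimal polynomial up to a factor in $K$, so the $q$ of Theorem~\ref{thm:q-existence} divides $(1-s)X_1 - t$. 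As $q$ is nonconstant (a nonzero constant cannot vanish at $r_1$), this forces $\deg_{X_1} q = 1$, equivalently $r_1 \in K$.

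Conversely, if $\deg_{X_1} q = 1$ then $r_1$ is a root of a degree-one polynomial over $K$, so $r_1 \in K$: it is a ratio $N/D$ of polynomials in $\mathbb{Q}\langle\Sigma^{\oplus}\rangle$. The step I expect to be the main obstacle is recovering from this ratio a proper linear equation, i.e.\ one whose data are genuine polynomials and with $(s,\varepsilon)=0$. The crucial observation is that $r_1$ is not merely an element of $K$ but an actual element of $\mathbb{Q}\langle\langle\Sigma^{\oplus}\rangle\rangle$; writing $N/D$ in lowest terms and using that the power series ring is local at the ideal of series with zero constant term, one concludes $(D,\varepsilon)\neq 0$ (a reduced fraction with $(D,\varepsilon)=0$ would have a ``pole'' and fail to expand as a power series, as $a/b$ does in two variables). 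After normalizing so that $(D,\varepsilon)=1$, the identity $D\,r_1 = N$ rearranges to $r_1 = (1-D)\,r_1 + N$, a proper linear equation with $s = 1-D$ (so $(s,\varepsilon)=0$) and $t = N$, whence Theorem~\ref{thm:linear-recurr} yields the Parikh property. Making precise the evaluation $q(r_1)$ inside the fraction field of the power series ring, and the local-ring argument pinning down $(D,\varepsilon)\neq 0$, are the two points requiring the most care; the rest is the bookkeeping of clearing denominators, which is routine.
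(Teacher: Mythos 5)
Your proposal is correct and follows essentially the same route as the paper: form the algebraic system, compute the irreducible annihilating polynomial $q(X_1)$ via Theorem~\ref{thm:q-construction}, and decide by checking whether $q$ is linear, with Theorem~\ref{thm:linear-recurr} supplying the equivalence and the uniqueness/divisibility property of the irreducible annihilator giving completeness. The one place you go beyond the paper is in justifying why a linear $q$ can always be normalized to the proper form $(1-s)X_1 - t$ with $(s,\varepsilon)=0$ — i.e.\ why the leading coefficient has nonzero constant term — which the paper asserts without argument; your observation that $r_1$ is a genuine power series, so a reduced denominator cannot vanish at the origin, is the right (and needed) justification.
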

\begin{proof}
Let \(S\) be the \(\mathbb{Q}⟨Σ^{⊕}⟩\)-algebraic system corresponding to \(G\) and let \(r_1\) be the first component of its strong solution.
Construct the irreducible polynomial \(q(X_1)\) with coefficients in \(\mathbb{Q}⟨Σ^{⊕}⟩\) as in Theorem~\ref{thm:q-construction}.
By Theorem~\ref{thm:linear-recurr}, we only need to check whether or not the equation \(q(X_1) = 0\) can be written as a linear equation of the form:
\mbox{\((1 - s)X_1 - t = 0,\)} with \(s,t \in \mathbb{Q}⟨Σ^{⊕}⟩\) and \((s, ε) = 0\).
Observe that the procedure given in Theorem~\ref{thm:q-construction} is complete, i.e., if the polynomial \(q\) obtained is not linear in \(X_1\) then there cannot exist a polynomial \(q_{\ell}(X_1)\) with coefficients in \(\mathbb{Q}⟨Σ^{⊕}⟩\) and linear in \(X_1\) such that \(q_{\ell}(r_1)\equiv 0\).
If it were the case, then \(q_\ell\) would be necessarily a factor of \(q\), and this contradicts the fact that \(q\) is irreducible over \(\mathbb{Q}⟨Σ^{⊕}⟩\).
Then, if \(q\) is not linear in \(X_1\), we conclude that \((G,W)\) does not satisfy the Parikh property.
Otherwise, \(q(X_1)\) can be rewritten as \(q(X_1) = (1-s)X_1-t\) with \(s,t \in \mathbb{Q}⟨Σ^{⊕}⟩\) and \((s, ε) = 0\), and we conclude that \((G,W)\) satisfies the Parikh property.
\end{proof}

Consider a WCFG \((G,W)\) with \(r_1\) the first component of the solution of its corresponding algebraic system.
Observe that, if the decision procedure returns a positive answer for \((G,W)\) then the polynomial \(q(X_1)\) constructed as in Theorem~\ref{thm:q-construction} is of the form:
\[q(X_1) = (s_0 - s_1)X_1 - t = 0\enspace ,\]
with \(s_0 \in \mathbb{Q}, s_0\neq 0\) and \(s_1,t \in \mathbb{Q}⟨Σ^{⊕}⟩\) with \((s_1,ε) = (t, ε) = 0\).
It follows that the algebraic system consisting of the equation:
\begin{equation}
\label{eq:linear-system}
X_1 = \frac{1}{s_0}s_1X_1 + \frac{1}{s_0}t \enspace ,
\end{equation}
has also \(r_1\) as solution.
Then a regular WCFG Parikh-equivalent to \((G,W)\) is the one corresponding to the regular algebraic system \eqref{eq:linear-system}.

Now we complete Examples~\ref{ex:example3} and~\ref{ex:example2} by following the decision procedure given in Theorem~\ref{thm:decision-proc} and giving the construction of a Parikh-equivalent regular WCFG (if exists).
Additionally, we give a third example.
\begin{example}
Consider the WCFG \((G,W)\) given in Example~\ref{ex:example3}.
Recall that its corresponding algebraic system \(S\) is given by the equation \(X = a\,X^2 + a\).
Let \(r\) be its strong solution.
Now we construct the irreducible polynomial \(q(X) \in \mathbb{Q}⟨\{a\}^{⊕}⟩⟨X⟩\) following the procedure given in Theorem~\ref{thm:q-construction}.
Let \(F = \{a\,X^2 - X + a\}\).
The reduced Groebner basis \(G\) of \(F\) w.r.t. reverse lexicographic ordering is (trivially) \(G = \{X^2 - \frac{ 1}{a}X + 1\}\).
Then the polynomial \(g \in G\) such that \(g \in K⟨X⟩\) where \(K\) is the fraction field of \(\mathbb{Q}⟨\{a\}^{⊕}⟩\), and \(g(r_1) \equiv 0\) is:
\[g(X) = X^2 - \frac{ 1}{a}X + 1 \enspace .\]
Note that this polynomial cannot be reduced into factors in the fraction field of \(\mathbb{Q}⟨\{a\}^{⊕}⟩\).
Multiplying \(g\) by \(a\), we get \(q(X) = aX^2 - X + a\) \(\in\mathbb{Q}⟨\{a\}^{⊕}⟩⟨X⟩\) and we conclude that \(q(X)\) is the irreducible polynomial described by Theorem~\ref{thm:q-existence}.
As \(q(X)\) is not linear we conclude that \((G,W)\) does not satisfy the Parikh property.
\end{example}
\begin{example}
Now consider the WCFG given in Example~\ref{ex:counterexample2} and its corresponding algebraic system \(S\).
We construct the irreducible polynomial \(q(X_2) \in \mathbb{Q}⟨\{a,\overline{a}\}^{⊕}⟩⟨X_2⟩\) following the procedure given in Theorem~\ref{thm:q-construction}.
Given \(F\), the set of polynomials in the left-hand sides of the equations of \(S\) after moving all monomials from right to left, we construct the reduced Groebner basis \(G\) of \(F\) w.r.t. reverse lexicographic ordering.
For clarity, we just show the polynomial \(g \in G\) such that \(g \in K⟨X_2⟩\) where \(K\) is the fraction field of \(\mathbb{Q}⟨\{a,\overline{a}\}^{⊕}⟩\), and verifies \(g(r_1) \equiv 0\):
\[g(X_2) = X_2 - \frac{1}{1 - (a + \overline{a})} \enspace .\]
This polynomial is linear so it is irreducible over the fraction field of \(\mathbb{Q}⟨\{a,\overline{a}\}^{⊕}⟩\).
Now we multiply \(g\) by \((1 - (a + \overline{a}))\) and thus obtain \(q(X_2) = (1 - (a + \overline{a}))X_2 - 1 \in \mathbb{Q}⟨\{a,\overline{a}\}^{⊕}⟩⟨X_2⟩\) which is the irreducible polynomial described by Theorem~\ref{thm:q-existence}.
Now we apply the decision procedure described in Theorem~\ref{thm:decision-proc}.
We observe that \(q\) can be written as follows:
\[q(X_2) = (1-s)X_2 - t = (1 - (a + \overline{a}))X_2 - 1 \enspace ,\]
with \((s, ε) =  0\).
Thus, we conclude that \((G,W)\) satisfies the Parikh property.
Finally, we give a regular Parikh-equivalent WCFG \((G_{\ell},W_{\ell})\).
The regular algebraic system:
\begin{equation}
\label{eq:example3}
(1 - (a + \overline{a}))X_2 - 1 = 0 \iff X_2 = (a + \overline{a}) X_2 + 1
\end{equation}
has \(r_1\) as solution.
Then, the WCFG \((G_{\ell},W_{\ell})\) corresponding to \eqref{eq:example3} is given by \(G_{\ell} = (\{X_2\}, \{a,\overline{a}\}, R_{\ell}, X_2)\) with \(R_{\ell}\) defined as:
\begin{align*}
\pi_1 = X_2 &\rightarrow aX_2\\
\pi_2 = X_2 &\rightarrow \overline{a}X_2 \\
\pi_3 = X_2 &\rightarrow \varepsilon
\end{align*}
and \(W_{\ell}\) defined over \((\mathbb{Q}, +, \cdot, 0, 1)\) as \(W_{\ell}(\pi_i) = 1\), for all \(i\).
Notice that \((G_{\ell}, W_{\ell})\) coincides with \((G_1, W_1)\) in Example~\ref{ex:counterexample2}.
\end{example}

\begin{example}
\label{ex:example1}
Consider the cycle-free WCFG \((G,W)\) given by \(G = (\{X_1,X_2\}, \{a,b\}, R, X_1)\) with \(R\) defined as follows:
\begin{align*}
X_1 &\rightarrow aX_2X_2\\
X_2 &\rightarrow bX_2 \mid a \enspace ,
\end{align*}
and the weight function \(W\) over \((\mathbb{Q}, +, \cdot, 0, 1)\) that assigns \(1\) to each production in the grammar.
The algebraic system \(S\) corresponding to \((G,W)\) is defined as follows:
\[
\begin{cases}
X_1 = a\,X_2^2 \\
X_2 = b\,X_2 + a \enspace .
\end{cases}
\]
Let \(\sigma = (r_1,r_2)\) be its strong solution.
Now we construct the irreducible polynomial \(q(X_1) \in \mathbb{Q}⟨\{a,b\}^{⊕}⟩⟨X_1⟩\) following the procedure given in Theorem~\ref{thm:q-construction}.
Let \(F = \{X_1 - a\,X_2^2, X_2 - b\,X_2 - a\}\).
The reduced Groebner basis\footnote{The Groebner basis \(G\) was computed using the \verb+groebner_basis+ method of the open-source mathematics software system \href{http://www.sagemath.org/}{SageMath}.} \(G\) of \(F\) w.r.t. lexicographic ordering is:
\[G = \left\{X_1 - \frac{ a^{3}}{b^{2} - 2 b + 1}, X_2 + \frac{a}{b - 1} \right\} \enspace .\]
Clearly, the polynomial \(g \in G\) such that \(g \in K⟨X_1⟩\) where \(K\) is the fraction field of \(\mathbb{Q}⟨Σ^{⊕}⟩\), and \(g(r_1) \equiv 0\) is:
\[g(X_1) = X_1 - \frac{ a^{3}}{b^{2} - 2 b + 1} \enspace .\]
This polynomial cannot be reduced into factors in the fraction field of \(\mathbb{Q}⟨Σ^{⊕}⟩\).
Now we multiply \(g\) by \((b^{2} - 2 b + 1)\) and thus obtain \(q(X_1) = (b^{2} - 2 b + 1)X_1 - a^3\) in \(\mathbb{Q}⟨Σ^{⊕}⟩⟨X_1⟩\) which is the irreducible polynomial described by Theorem~\ref{thm:q-existence}.
Now we apply the decision procedure described in Theorem~\ref{thm:decision-proc}.
We observe that \(q\) is linear in \(X_1\) and can be written as:
\[q(X_1) = (1-s)X_1 -t = (1 - ( 2 b - b^{2} )) X_1 - a^3 \enspace ,\]
with \((s, ε) =  0\).
Then we conclude that \((G,W)\) satisfies the Parikh property.
Note that this is the result expected as \((G,W)\) is nonexpansive.
Finally, we give a regular Parikh-equivalent WCFG \((G_{\ell},W_{\ell})\).
We know that the algebraic system:
\begin{equation}
\label{eq:example1}
X_1 = (2 b - b^{2}) X_1 + a^3
\end{equation}
has \(r_1\) as solution.
Then the WCFG \((G_{\ell},W_{\ell})\) corresponding to the regular system \eqref{eq:example1} is given by \(G_{\ell} = (\{X_1\}, \{a,b\}, R_{\ell}, X_1)\) with \(R_{\ell}\) defined as:
\begin{align*}
\pi_1 = X_1 &\rightarrow bX_1\\
\pi_2 = X_1 &\rightarrow  b^2X_1 \\
\pi_3 = X_1 &\rightarrow   a^3
\end{align*}
and \(W_{\ell}\) defined over \((\mathbb{Q}, +, \cdot, 0, 1)\) as:
\[W_{\ell}(\pi) = 
\begin{cases}
2 & \text{if }\pi = \pi_1\\
-1 & \text{if }\pi = \pi_2\\
1 & \text{if }\pi = \pi_3\\ 
\end{cases}
\enspace .\]
\end{example}

\section{Related Work}
\label{sec:relatedwork}
The problem of extending Parikh's Theorem to the weighted case has been significantly considered in the literature~\cite{Vijay17,Kuich1987,Luttenberger2016,Petre1999}.
Petre~\cite{Petre1999} establishes that the family of power series in commuting variables that can be generated by regular WCFGs is \emph{strictly} contained in that of the series generated by arbitrary WCFGs.
In this way, he shows that Parikh's Theorem does not hold in the weighted case.
It is well-known that the Parikh property holds in a commutative and  idempotent semiring~\cite{Vijay17,Kuich1987,Luttenberger2016}.
Luttenberger et al.~\cite{Luttenberger2016} deal with WCFGs where the weight of a word corresponds to its ambiguity (or commutative ambiguity when considering monomials instead of words) and they show that if a CFG is nonexpansive then its commutative ambiguity can be expressed by a weighted rational expression relying on the fact that all the parse trees of a nonexpansive CFG are of bounded dimension.
We used this fact to give a Parikh-equivalent regular WCFG construction, for a given nonexpansive WCFG defined over \emph{any} commutative semiring.
Baron and Kuich~\cite{Kuich1981} gave a similar characterization of nonexpansive grammars using rational power series to that of Luttenberger et al.
They also conjectured that an unambiguous WCFG is nonexpansive iff it has the Parikh property.
This conjecture appears to be false as evidenced by Example~\ref{ex:counterexample2}.
Bhattiprolu et al.~\cite{Vijay17} also show  that the class of \emph{polynomially ambiguous} WCFGs over the unary alphabet satisfies the property.
In the unary case, this class is strictly contained in the class of nonexpansive grammars (a proof is given in Appendix~\ref{sec:appendix-poly}).
Finally, our decision procedure relies on a result by Kuich and Saloma~\cite{KuichSalomaa1986} that decides if an \emph{algebraic} series in commuting variables with coefficients in \(\mathbb{Q}\) is rational.
To the best of our knowledge, the connection of this result to a decidability result for the Parikh property was only implicit.

\section{Conclusions and Further Work}
\label{sec:conclusions}
Note that from the theoretical point of view, our decision procedure can be applied to WCFGs over any arbitrary \emph{field}.
For arbitrary semirings, the decidability of the Parikh property remains open.
It would be interesting to tackle the question first in the unary case.
Finally, Theorem~\ref{thm:nexp-parikh} shows an equivalent characterization of the Parikh property.
Namely, the Parikh property holds for a WCFG \((G,W)\) iff there exists a Parikh-equivalent nonexpansive WCFG, i.e., iff \((G,W)\) is not \emph{inherently} expansive.
It is known that inherent expansiveness is undecidable in the noncommutative and unweighted case~\cite{Gruska1971}, but the question remains unsolved in the commutative case when weights are considered.

\appendix
\section{Proof of Theorem~\ref{thm:nexp-parikh}}
\label{sec:appendix-sec1}

First, we give the definitions we will use in this section.
Given a CFG \(G = (V,\Sigma, S, R{})\), define the \emph{degree} of \(G\) as \(max \{\len{\gamma⇃_{V}} : (X \rightarrow \gamma) ∈ R\} - 1\), where  \(\gamma⇃_{V}\) denotes the projection of \(\gamma\) onto the variables \(V\).
Given a production \(\pi = (X \longrightarrow \gamma) ∈ R\) and a position \(1 ≤ i ≤ \len{α}\), we define a \emph{derivation step} \(α \xRightarrow{\pi / i} \beta\) with \(α, \beta ∈ (Σ \cup V)^*\) iff \((α)_i = X\) and \(\beta = (α)_1… (α)_{i-1}\,\gamma\,(α)_{i+1}…(α)_{\len{α}}\). 
We omit the position \(i\) when it is not important.
We say that \(\alpha\) and \(\beta\) in \((Σ \cup V)^*\) are \emph{derivation sentences} of \(G\).
We define a \emph{derivation sequence} \(α_0 \xRightarrow{\pi_1} α_1 \xRightarrow{\pi_2} … \xRightarrow{\pi_n} α_n\) iff for every \(i ∈ \{1,…, n\}\), \(α_{i-1} \xRightarrow{\pi_{i}} α_{i}\) is a derivation step.
We call the derivation step \(α_{i-1} \xRightarrow{\pi_{i}} α_{i}\) the \emph{i-step} of the derivation sequence.
A derivation sequence \(ψ = α_0 \Rightarrow ⋯ \Rightarrow α_{n}\) of \(G\) has \emph{index} \(j\), denoted by \(idx(ψ)\), if for every \(i ∈ \{0, …, n\}\), no word \((α_i)⇃_{V}\) is longer than \(j\).
Now we define the dimension of a labeled tree as follows.

\begin{definition}
\label{def:dim}
Given a labeled tree $τ=c(τ_1,…,τ_n)$ (\(n \geq 0\)), the \emph{dimension} of $τ$ represented as \(\mathit{dim}(τ)\) is defined as follows:
\[
	\mathit{dim}(c(τ_1,…,τ_n))≝ \begin{cases}
		0 & \text{if } n=0 \\ 
		\mathit{dim}(τ_i)&\text{if } n>0 \land \vert\{ i \mid \forall j\colon \mathit{dim}(τ_j)≤\mathit{dim}(τ_i) \}\vert = 1 \\
		\mathit{dim}(τ_i) +1 &\text{if } n>0 \land \vert\{ i \mid \forall j\colon \mathit{dim}(τ_j)≤\mathit{dim}(τ_i) \}\vert > 1 
	\end{cases}
\]
\end{definition}

Now we present the proof of Theorem~\ref{thm:nexp-parikh}.
All the definitions, lemmas and theorems referred there can be found below the proof.

\subparagraph{Theorem~\ref{thm:nexp-parikh}}
Let \((G,W)\) be an arbitrary WCFG.  If \(G\) is nonexpansive then \((G,W)\) satisfies the Parikh property.
\begin{proof}
The proof is constructive.
For every nonexpansive WCFG \((G,W)\), we give a 2-step construction that results in a Parikh-equivalent regular WCFG \((G_{\ell}, W_{\ell})\).
The steps are:
\begin{enumerate}
\item construct a new WCFG \(\big(\dgrammar, \dweights\big)\), with \(k\in \mathbb{N}\), language-equivalent to \((G,W)\); and
\item construct a regular WCFG \((G_{\ell}, W_{\ell})\) Parikh-equivalent to \(\big(\dgrammar, \dweights\big)\). 
\end{enumerate}

The first part of the construction consists of building a new WCFG \(\big(\dgrammar, \dweights \big)\) (Definition~\ref{def:marked-WCFG} below), so-called \emph{at-most-k-dimension} WCFG of \((G,W)\), which is language-equivalent to the original and where grammar variables are annotated with information about the dimension of the parse trees that can be obtained from these variables.
Let us give an intuition on its construction.

For a given CFG \(G\) and \(k \in \mathbb{N}\) (the choice of \(k \in \mathbb{N}\) will be described later on), we define \(\dgrammar\) using the same construction as Luttenberger et al.~\cite{Luttenberger2016}.
They show how to construct, for a given CFG \(G\), a new grammar \(G^{⌈k⌉}\) with the property that \(\mathcal{T}_{G^{⌈k⌉}}\) corresponds to the subset of \(\mathcal{T}_G\) of trees of dimension at most \(k\).
They annotate each grammar variable with the superscript \([d]\) (resp. \(⌈d⌉\)) to denote that only parse trees of dimension exactly \(d\) (resp. at most \(d\)), where \(d \leq k\), can be obtained from these variables.
When constructing the grammar, they also consider those rules containing two or more variables in its right-hand side and distinguish which cases yield an increase of dimension.
We recall the construction of \(\dgrammar\) in Definition~\ref{def:marked-WCFG}.

To define the weight function \(\dweights\), we assign to each rule in \(\dgrammar\) the same weight as its corresponding version in \(G\) (note that for those rules in \(\dgrammar\) with no corresponding version in \(G\), i.e. the so-called \(e\)-rules, we assign the identity \(1_A\) with respect to \(\cdot\), where \(A\) denotes the weight domain).
Let us discuss the choice of \(k\) in \(\big(\dgrammar, \dweights\big)\).
Luttenberger et al.~\cite{Luttenberger2016} also show that if \(G\) is a nonexpansive CFG then the dimension of every parse tree in \(\mathcal{T}_G\) is bounded (Theorem~\ref{thm:michael}).
Moreover, the bound is at most the number of grammar variables of \(G\).
Then, for a given nonexpansive WCFG \((G,W)\), define \(k\) as this bound.
Because \(k\) is at most equal to the number of variables of \(G\), such a value is always found and consequently, the first part of the construction always terminates.
Finally, we show that the WCFG \(\big(\dgrammar, \dweights\big)\) is language-equivalent to \((G,W)\) (Lemma~\ref{lemma:parikh-equiv-marked}).

In the second part of the construction, we build a regular WCFG \((G_{\ell}, W_{\ell})\) that is Parikh-equivalent to \(\big( \dgrammar, \dweights\big)\).
Esparza et al.~\cite{Esparza2011} show that if the dimension of a parse tree is bounded by \(k\) then there exists a derivation sequence for the yield of the tree whose index is bounded by some affine function of \(k\) (Lemma~\ref{lemma:ipl}).
We rely on this result to define a special derivation policy over at-most-k-dimension WCFGs, for which we know the dimension of every parse tree is bounded by \(k\).
They are called \emph{lowest-dimension-first (LDF) derivations}.
We prove that, for every WCFG \(\big( \dgrammar,\dweights\big)\), the index of an LDF derivation sequence is always bounded by an affine function of \(k\) (Lemma~\ref{lemma:ldf-der}).
Then, each grammar variable of \((G_{\ell}, W_{\ell})\) represents each possible sentence (without the terminals) along an LDF derivation sequence of \(\big( \dgrammar,\dweights\big)\), and each grammar rule is intended to simulate an LDF derivation step of \(\big( \dgrammar,\dweights\big)\).
Because the number of variables in these sentences is bounded, the sets of variables and rules of \((G_{\ell}, W_{\ell})\) are necessarily finite.
A formal definition of the weighted regular \((G_{\ell}, W_{\ell})\) is given in Definition~\ref{def:regular-WCFG} .
Finally we show that \((G_{\ell}, W_{\ell})\) is Parikh-equivalent to \(\big( \dgrammar,\dweights\big)\) (Lemma~\ref{lemma:parikh-equiv-linear}) and this concludes the proof.
\end{proof}

Now we give the construction of the at-most-k-dimension WCFG \((\dgrammar, \dweights)\) for a given WCFG \((G,W)\) and \(k \in \mathbb{N}\).
For the construction of \(\dgrammar\), we rely on the one given by Luttenberger et al.~\cite{Luttenberger2016}.
\begin{definition}[The at-most-k-dimension WCFG]
\label{def:marked-WCFG}
Let \((G,W)\) be a WCFG with \(G = (V,Σ,S,R)\) and \(W\) defined over the commutative semiring \(A\), and let \(k \in \mathbb{N}\).
Define the \emph{at-most-k-dimension WCFG} \(\big(\dgrammar, \dweights\big)\) with \(\dgrammar = (\dvars,Σ,S^{⌈k⌉},\drules )\) of \((G,W)\) (with \(u_0, \ldots, u_n \in Σ^* \)) as follows:
\begin{itemize}
	\item The set \(\dvars\) of variables is given by 
	\[\{ X^{[d]}, X^{⌈d⌉} \mid  X ∈ V, 0 ≤ d ≤ k \}\enspace .\]
	\item The set \(\drules\) of production rules is given by
		\begin{enumerate}
			\item Linear rules:
			\begin{itemize}
				\item[\(\bullet\)] \(r_0(\pi) = \{X^{[0]} \rightarrow u_0\}\) for each rule  \(\pi = (X \rightarrow u_0) ∈ R\).
				\item[\(\bullet\)] \(r_1(\pi) = \{X^{[d]} \rightarrow u_0\,X_1^{[d]}\,u_1 \mid 0 ≤ d ≤ k\}\) for each rule \(\pi = (X \rightarrow u_0\,X_1\,u_1) ∈ R\).
			\end{itemize}
			\item Non-linear rules:\\
			For each rule \(\pi = (X \rightarrow u_0\,X_1\,u_1…u_{n-1}\,X_n\,u_n) ∈ R\)
			\begin{itemize}
				\item[\(\bullet\)] \(r_2(\pi) = \{X^{[d]} \rightarrow u_0\,Z_1\,u_1…u_{n-1}\,Z_n\,u_n \mid  \,1 ≤ d ≤ k,J \subseteq \{1, …, n\}\text{ with \mbox{\(|J| = 1\)}}: Z_i = X_{i}^{\,[d]} \text{ if }i \in J \text{, and }  Z_i = X_i^{⌈d-1⌉} \text{ for all }i \in \{1, \ldots, n\}\setminus J \}\)  and
				\item[\(\bullet\)] \(r_3(\pi) = \{X^{[d]} \rightarrow u_0\,Z_1\,u_1…u_{n-1}\,Z_n\,u_n \mid  1 ≤ d ≤ k, J \subseteq \{1, …, n\}\text{ with \mbox{\(|J| ≥ 2\)}}: Z_i = X_i^{[d-1]} \text{ for all } i ∈ J \text{ and } Z_i = X_i^{⌈d-1⌉} \text{ for all } i ∈ \{1,…,n\}\setminus J\}\).
			\end{itemize}
			\item \(e\)-rules:
			\begin{itemize}
				\item[\(\bullet\)] \(r_4 = \{X^{⌈d⌉} \rightarrow X^{[e]} \mid 0 ≤ e ≤ d ≤ k\}\).
			\end{itemize}
		\end{enumerate}
	\item The weight function \(\dweights\) is given by
	\begin{equation*}
	\dweights(φ) =
	    \begin{cases}
	      W(\pi) & \text{if } φ ∈ r_0(\pi) \text{ for some } \pi = (X \rightarrow u_0) ∈ R \\
	      W(\pi) & \text{if } φ ∈ r_1(\pi) \text{ for some } \pi = (X \rightarrow u_0\,X_1\,u_1) ∈ R \\
	      W(\pi) & \text{if } φ ∈ r_2(\pi) \cup r_3(\pi) \text{ for some } \pi = (X \rightarrow u_0\,Z_1\,u_1,…,u_{n-1}\,Z_n\,u_n) ∈ R \\
	      1_A        & \text{if } φ ∈ r_4
	    \end{cases}
	\end{equation*}
\end{itemize}
\end{definition}
We say that a variable \(Z \in \dvars\) is \emph{of dimension \(d\)} iff either \(Z = X^{⌈d⌉}\), or \(Z = X^{[d]}\), with \(X \in V\), and we denote it by \(dim(Z) = d\).
Define \(V^{(d)} ≝ \{Z \in \dvars \mid dim(Z) = d\}\), for each \(0 \leq d \leq k\).

\begin{theorem}[from Theorem 3.3 in~\cite{Luttenberger2016}]
\label{thm:michael}
Let \(G\) be a nonexpansive CFG with \(n\) variables.
Then there exists \(k\in \mathbb{N}\) with \(k \leq n\) such that every parse tree in \(\mathcal{T}_G\) has dimension at most \(k\).
\end{theorem}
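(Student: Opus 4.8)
The plan is to prove the contrapositive in a sharpened form: every parse tree $\tau \in \mathcal{T}_G$ with $\mathit{dim}(\tau) \geq n+1$ witnesses an expansive derivation, so that if $G$ is nonexpansive no such tree exists and $n$ is a valid bound $k$. The engine of the argument is the observation that the dimension of Definition~\ref{def:dim} is exactly the Horton--Strahler number of the tree, and that a tree of large dimension embeds a large complete binary tree of \emph{branching} nodes. First I would single out the branching nodes of $\tau$, namely those nodes at least two of whose children attain the maximal child-dimension (equivalently, the nodes at which $\mathit{dim}$ strictly increases).

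Next I would show, by induction on $d = \mathit{dim}(\tau)$, that the branching nodes of $\tau$ contain an embedded complete binary tree $T$ with $d+1$ levels whose embedding respects the ancestor relation of $\tau$. For $d = 0$ the tree is a single node. For $d \geq 1$, following a child of maximal dimension from the root reaches a first branching node $v$ with $\mathit{dim}(v) = d$; two of its children root subtrees $\sigma_1, \sigma_2$ of dimension $d-1$, and applying the induction hypothesis to each yields complete binary trees $T_1, T_2$ of $d$ levels sitting inside $\sigma_1$ and $\sigma_2$. Taking $v$ as the common root of $T_1$ and $T_2$ produces $T$ with $d+1$ levels; crucially, any node placed in $\sigma_1$ is incomparable in $\tau$ to any node placed in $\sigma_2$.

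I would then label each node of $T$ by the grammar variable decorating the corresponding node of $\tau$, giving a colouring of $T$ with at most $n$ colours. The combinatorial heart is the claim that a complete binary tree with at least $n+1$ levels, coloured with $n$ colours, always contains a \emph{monochromatic fork}: a node $a$ together with two incomparable descendants $b, c$ all of the same colour. This I would prove by induction on the number of colours. Let $c$ be the colour of the root and consider its two principal subtrees, each complete with at least $n$ levels. If one of them avoids colour $c$ entirely, it is coloured with at most $n-1$ colours and the induction hypothesis applies; otherwise both subtrees contain a $c$-coloured node, and those two nodes together with the root form the desired monochromatic fork (this also settles the one-colour base case, since a nonempty subtree cannot avoid the sole colour). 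As $T$ has $d+1 \geq n+2$ levels, a monochromatic fork exists.

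Finally I would transport the fork back to $\tau$: ancestry and incomparability in $T$ are inherited from $\tau$, so the fork yields a node labelled $X$ in $\tau$ having two incomparable descendants both labelled $X$, for some variable $X$. Reading off the derivation from the topmost $X$ down to these two copies exhibits \(X \Rightarrow^{*} w_0\,X\,w_1\,X\,w_2\), contradicting nonexpansiveness; hence $\mathit{dim}(\tau) \leq n$ for every $\tau$, and any common bound $k \leq n$ witnesses the statement. I expect the main obstacle to be conceptual rather than computational: a naive one-dimensional pigeonhole along a single root-to-leaf path only produces a \emph{repeated} variable, which yields mere self-recursion \(X \Rightarrow^{*} u_0\,X\,u_1\) and \emph{not} expansiveness. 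Recognising that expansiveness requires two \emph{incomparable} occurrences of the same variable below a third is what forces the branching structure of $T$ and the two-dimensional, Ramsey-style pigeonhole above. The remaining points needing care are pinning down the exact level count of the binary-tree minor and verifying that incomparability and ancestry are preserved under the embedding.
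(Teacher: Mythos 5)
Your proof is correct. Note, however, that the paper offers no proof of this statement to compare against: Theorem~\ref{thm:michael} is imported verbatim from Theorem~3.3 of Luttenberger and Schlund~\cite{Luttenberger2016}, so what you have written is a self-contained replacement for an external citation rather than an alternative to an in-paper argument. The three ingredients all check out: the induction showing that a tree of dimension \(d\) contains a complete binary tree with \(d+1\) levels embedded so that ancestry and incomparability are preserved (this is the standard ``Strahler number equals the height of the largest complete binary-tree minor'' fact, and your use of Definition~\ref{def:dim} is faithful, including the descent along the unique maximal child until a genuine branching node of the same dimension is reached); the colouring lemma that a complete binary tree with at least \(n+1\) levels and \(n\) colours contains a monochromatic fork (your induction on the number of colours is sound, and the level bound is tight, as colouring by depth shows); and the translation of a fork into a derivation \(X \Rightarrow^{*} w_0\,X\,w_1\,X\,w_2\), which matches the paper's definition of expansiveness exactly because the \(w_i\) are allowed to contain variables, so truncating the subtree of \(\tau\) at the two incomparable occurrences suffices. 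Two cosmetic remarks: for \(d=0\) the single embedded node need not be a branching node, so the invariant should say that the \emph{internal} nodes of \(T\) are branching nodes of \(\tau\); and your argument in fact yields the slightly sharper bound \(\mathit{dim}(\tau)\le n-1\), since dimension \(d\ge n\) already produces \(n+1\) levels, which of course still gives the stated \(k\le n\).
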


\begin{lemma}
\label{lemma:parikh-equiv-marked}
\(⟦G⟧_W = ⟦\dgrammar⟧_{\dweights}\).
\end{lemma}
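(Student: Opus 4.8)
The plan is to reduce the identity of formal power series to a weight- and yield-preserving bijection at the level of parse trees, and then to compare the two word-weight sums term by term. Recall that here \(k\) is fixed to be the bound supplied by Theorem~\ref{thm:michael}, so that, since \(G\) is nonexpansive, \emph{every} tree in \(\mathcal{T}_G\) has dimension at most \(k\). This is the crucial use of the hypothesis: it is what lets me treat the at-most-\(k\)-dimension grammar as capturing \emph{all} of \(\mathcal{T}_G\) rather than a proper fragment.

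The heart of the argument is a map \(\Phi\colon \mathcal{T}_{\dgrammar} \to \mathcal{T}_G\) given by the natural forgetful projection: contract every node introduced by an \(e\)-rule of \(r_4\) (these are the unary relabelling nodes \(X^{⌈d⌉} \to X^{[e]}\)), and erase the dimension superscripts \([d]\) and \(⌈d⌉\). The substance of the construction of Definition~\ref{def:marked-WCFG}, which I would import from Luttenberger et al.~\cite{Luttenberger2016} rather than reprove, is that \(\Phi\) is a bijection onto the set of trees of \(\mathcal{T}_G\) of dimension at most \(k\): the inverse annotates each tree of \(G\) in the unique way dictated by the dimension function of Definition~\ref{def:dim}, inserting the forced \(e\)-rule steps that move each off-spine subtree from its placeholder annotation down to its exact dimension. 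By the choice of \(k\), the codomain is all of \(\mathcal{T}_G\), so \(\Phi\) is a bijection between \(\mathcal{T}_{\dgrammar}\) and \(\mathcal{T}_G\). This bijectivity — in particular, that every tree of \(G\) of dimension \(\le k\) has \emph{exactly one} annotated preimage, so that neither overcounting nor omission occurs — is the delicate combinatorial bookkeeping, and it is precisely the part I would discharge by citing \cite{Luttenberger2016}.

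With \(\Phi\) in hand, two preservation properties remain, and these are what I would prove directly. For the yield: the \(e\)-rules carry no terminals, so contracting their nodes leaves \(\mathcal{Y}\) unchanged, while every other rule of \(\dgrammar\) reuses exactly the terminal blocks \(u_0,\ldots,u_n\) of its \(G\)-counterpart; hence \(\mathcal{Y}(\Phi(\tau)) = \mathcal{Y}(\tau)\) for every \(\tau \in \mathcal{T}_{\dgrammar}\). For the weight: by the definition of \(\dweights\), each \(e\)-rule is assigned \(1_A\) and every other rule inherits the weight of its \(G\)-counterpart; since \(W(\tau)\) is the product of the weights of the rules occurring in \(\tau\) and \(1_A\) is the multiplicative identity, the \(e\)-rule factors drop out, and, using that \(\cdot\) is commutative, the remaining product equals \(W(\Phi(\tau))\). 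Thus \(\dweights(\tau) = W(\Phi(\tau))\). Assembling these for an arbitrary \(w \in \Sigma^*\),
\[
⟦\dgrammar⟧_{\dweights}(w) = \sum_{\substack{\mathcal{Y}(\tau)=w\\ \tau \in \mathcal{T}_{\dgrammar}}} \dweights(\tau) = \sum_{\substack{\mathcal{Y}(\tau)=w\\ \tau \in \mathcal{T}_{\dgrammar}}} W(\Phi(\tau)) = \sum_{\substack{\mathcal{Y}(\tau')=w\\ \tau' \in \mathcal{T}_G}} W(\tau') = ⟦G⟧_W(w),
\]
where the third equality is the change of summation variable along the yield-preserving bijection \(\Phi\) (the empty-sum convention \(0_A\) handling words with no parse tree on either side). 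As \(w\) was arbitrary, \(⟦G⟧_W = ⟦\dgrammar⟧_{\dweights}\), as required.

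<br>

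I expect the main obstacle to be exactly the bijectivity of \(\Phi\): the preservation of yield and weight is routine once the tree correspondence is in place, but verifying that the annotation scheme of Definition~\ref{def:marked-WCFG} assigns to each tree of \(G\) a \emph{single} well-defined annotated preimage — so that the weighted counts, and not merely the supports, coincide — is where all the care lies. My plan leans on \cite{Luttenberger2016} for this step rather than re-deriving the dimension bookkeeping.
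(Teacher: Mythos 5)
Your proposal is correct and follows essentially the same route as the paper's own proof: both invoke the yield-preserving bijection of Luttenberger et al.\ between \(\mathcal{T}_{\dgrammar}\) and the dimension-\(\le k\) trees of \(G\) (which by nonexpansiveness is all of \(\mathcal{T}_G\)), then check that the bijection also preserves weights because \(e\)-rules carry \(1_A\) and all other rules inherit their \(G\)-counterpart's weight, and finally re-index the sum defining \(⟦\cdot⟧\). The only cosmetic difference is that the paper carries out the weight-preservation step as an explicit structural induction on trees, whereas you argue it directly from the product formula; both are sound.
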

\begin{proof}
First recall that \(k\) corresponds to the nonnegative value such that every parse tree in \(G\) has dimension at most \(k\).
We want to show that there is a bijection \(\mu\) from \(𝒯_{\dgrammar}\) to \(𝒯_G\) that preserves the yield and the weight of each parse tree.

First, define \(𝒯^{\leq k}_G ≝ \{τ \mid τ \in 𝒯_G, dim(τ) \leq k \}\).
Luttenberger et al.~\cite{Luttenberger2016} prove that there is a bijection \(\mu\) from \(𝒯_{G^{⌈k⌉}}\) to \(𝒯^{\leq k}_G\) that preserves the yield of parse trees.
Roughly speaking, \(\mu\) contracts the edges corresponding to the \(e\)-rules and removes the superscripts from the labels of the trees.
Note that \(X^{⌈d⌉}\) can only be rewritten to \(X^{[e]}\) for some \(e \leq d\).
Then, contracting the corresponding edges cannot change the yield of the corresponding tree.
Furthermore, the rules of \(\dgrammar\) that rewrite the variable \(X^{[d]}\) are obtained from the rules of \(G\) that rewrite \(X\) by only adding a superscript.
Hence, by removing these annotations again, every tree \(τ \in 𝒯_{\dgrammar}\) is mapped by \(\mu\) to a tree in \(𝒯^{\leq k}_G\) with the same yield.
The complete proof is in Lemma 3.2 in~\cite{Luttenberger2016}.
Furthermore, because \(G\) is nonexpansive  we have that \(𝒯^{\leq k}_G = 𝒯_G\).
Thus, if \(G\) is nonexpansive, then \(\mu\) is a bijection from \(𝒯_{\dgrammar}\) to \(𝒯_G\) that preserves the yield of parse trees.

Now we show that \(\mu\) also preserves the weights of parse trees, i.e., for each \(τ \in 𝒯_{\dgrammar} : \dweights(τ) = W(\mu(τ))\).
We proceed by induction on the number of nodes of \(τ\).
In the base case, \(τ\) has one node, i.e., it has no children.
Then \(τ = φ \) with \(φ = X^{[0]} \rightarrow u_0\) and \(u_0 ∈ Σ^*\), and \(μ(τ) = φ'\) with \( φ' = X \rightarrow u_0\).
Then we have:
\begin{align*}
	\dweights(τ) &= \dweights(φ) &τ = φ\\
	&= W(φ')  &\text{by definition of } \dweights\\
	&= W(μ(τ)) &μ(τ) = φ'
\end{align*}
For the induction step, assume \(τ = φ(τ_1, …, τ_n)\) with \(n≥ 1\) and \(φ\) is a rule from the set \(r_i\) with \(i ∈ \{1,…, 5\}\) (see Definition~\ref{def:marked-WCFG}).
We distinguish three cases:
\begin{itemize}
	\item[\(\bullet\)] Assume \(φ ∈ r_1\)
	Then \(τ = φ(τ_1)\) and \(μ(τ) = φ'(\,μ(τ_1))\), with \(φ' = X \rightarrow u_0\,X_1\,u_1\) and \(u_0, u_1 ∈ Σ^*\).
	\begin{align*}
		\dweights(τ) &=  \dweights(φ) \cdot \dweights(τ_1) &\text{by definition of weight of } τ\\
		&= W(φ') \cdot W(μ(τ_1))  &\text{ by definition of } \dweights \text{ and induction hyp.}\\
		&= W(μ(τ)) &μ(τ) = φ'(μ(τ_1))
	\end{align*}
	\item[\(\bullet\)] Assume \(φ ∈ r_4\)
	Then \(τ = φ(τ_1)\) and \(μ(τ) = μ(τ_1)\).
	\begin{align*}
		\dweights(τ) &= \dweights(φ) \cdot \dweights(τ_1)  &τ = φ(τ_1)\\
		&= 1_A \cdot W(μ(τ_1))   &\text{by induction hyp
		and definition of } \dweights\\
		&= W(μ(τ)) & μ(τ) = μ(τ_1)\\
	\end{align*}
	\item[\(\bullet\)] Assume \(φ ∈ r_2 \cup r_3\)
	Then \(τ = φ(τ_1, …, τ_n)\) and \(μ(τ) = φ'(μ(τ_1),…,μ(τ_n))\) with \(φ' = X \rightarrow u_0\,X_1\,u_1…u_{n-1}\,X_n\,u_n\) and \(u_0, \ldots, u_n \in Σ^*\).
	\begin{align*}
		\dweights(τ) &= \dweights(φ) \prod^n_{i=1} \dweights(τ_i) &τ = φ(τ_1, …, τ_n)\\
		&= W(φ') \prod^{n}_{i=1}W(μ(τ_i))  &\text{by definition of } \dweights \text{ and induction hyp. }\\
		&= W(μ(τ)) &μ(τ) = φ'(μ(τ_1), …, μ(τ_n))
	\end{align*}

\end{itemize}
Finally, for each \(w \in Σ^{*}\):
\[⟦G⟧_W (w) =  W (w) =   \sum\limits_{\substack{ w = 𝒴(τ) \\τ \in 𝒯_{G} }} W(τ) =    \sum\limits_{\substack{ w = 𝒴(τ') \\ τ' \in 𝒯_{\dgrammar}\ }} \dweights(τ') =    \dweights(w) =   ⟦\dgrammar⟧_{\dweights}(w) \enspace .\]
\end{proof}

\begin{lemma}[from Lemma 2.2 in~\cite{Esparza2011}]
\label{lemma:ipl}
Let \(G\) be a CFG of degree \(m\) and let \(\tau \in \mathcal{T}_{G}\) with \(dim(\tau) \leq k\) and \(k \in \mathbb{N}\).
Then there is a derivation sequence for \(𝒴(\tau)\) of index at most \(km+1\).
\end{lemma}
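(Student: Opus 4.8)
The plan is to prove the slightly sharper statement that $\tau$ admits a derivation sequence of index at most $\mathit{dim}(\tau)\cdot m + 1$, which yields the claim since $\mathit{dim}(\tau) \le k$. I would proceed by induction on $d = \mathit{dim}(\tau)$, exploiting the \emph{spine} of $\tau$, i.e.\ the maximal path from the root obtained by always descending into the unique child of maximal dimension. The guiding intuition, following Esparza et al.~\cite{Esparza2011}, is that a derivation policy which expands the spine variable last keeps at most one variable of dimension $d$ active at any time, while every other pending variable roots a subtree of strictly smaller dimension and can therefore be handled by the induction hypothesis.

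First I would record the structural facts that follow directly from Definition~\ref{def:dim}. If $\tau = c(\tau_1,\ldots,\tau_n)$ and $\mathit{dim}(\tau) = d$, then either (i) $n = 0$ and $d = 0$; or (ii) exactly one child, say $\tau_s$, satisfies $\mathit{dim}(\tau_s) = d$ while every other child has dimension at most $d-1$ (the dimension is inherited from a unique maximal child); or (iii) at least two children have dimension $d-1$ and all children have dimension at most $d-1$ (the dimension strictly increases). Thus along the spine every node falls under case (ii) and contributes exactly one child of dimension $d$, until the spine ends at a node of type (i) or (iii), below which every subtree has dimension at most $d-1$. For the base case $d = 0$, cases (ii) and (iii) are impossible (two children of equal maximal dimension would force $\mathit{dim}(\tau) \ge 1$), so $\tau$ is a single chain of linear rule applications whose yield is derived with exactly one variable present at each step, i.e.\ with index $1 = 0\cdot m + 1$.

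For the inductive step, assume the bound for all trees of dimension at most $d-1$ and let $\mathit{dim}(\tau) = d \ge 1$. The derivation policy I would use descends the spine top-down: at each spine node I apply its rule, which (as $G$ has degree $m$) introduces at most $m+1$ variables, exactly one of which, the spine continuation, has dimension $d$, the remaining at most $m$ rooting subtrees of dimension at most $d-1$. I then fully derive these side subtrees one at a time using the induction hypothesis before descending to the next spine node; at the spine terminus I likewise derive all of its at most $m+1$ children (now all of dimension at most $d-1$) one at a time. The key accounting is that while one side subtree is being expanded its own portion of the sentential form carries at most $(d-1)m + 1$ variables, whereas the variables held over (the spine variable together with the not-yet-started siblings) number at most $m$; hence the total index never exceeds $(d-1)m + 1 + m = dm + 1$. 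Once all side subtrees at a spine node are reduced to terminals, only the spine variable remains and the next spine rule is applied, so the bound $dm + 1$ is maintained uniformly along the whole spine.

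I expect the main obstacle to be the precise index bookkeeping in the inductive step: one must verify that the worst case is a single side subtree being expanded (contributing $(d-1)m+1$) simultaneously with all its siblings and the spine variable still pending (contributing at most $m$), and that this sum is exactly $dm+1$ rather than something larger. This hinges on the spine guaranteeing that at most one variable of the current dimension $d$ is ever active, so that the full inductive budget $(d-1)m+1$ is only ever spent on subtrees of dimension at most $d-1$; the convention that the degree is the maximal number of right-hand-side variables \emph{minus one} is exactly what makes the two contributions combine to the stated $km+1$.
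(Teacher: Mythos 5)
Your argument is correct and is essentially the approach the paper itself takes for the companion Lemma~\ref{lemma:ldf-der} (and that of Esparza et al.): expand the lower-dimension subtrees one at a time before the unique same-dimension (spine) child, and bound the index by \((d-1)m+1\) for the subtree currently being expanded plus at most \(m\) held-over variables, the only cosmetic difference being that you induct on the dimension and iterate along the spine where the paper inducts on the number of nodes. One small blemish: in the base case, your case~(ii) with a single variable-child of dimension \(0\) is in fact possible (it is precisely what makes \(\tau\) a chain rather than a single leaf), but the conclusion that the index is \(1\) is unaffected.
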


Now we define a derivation policy over at-most-k-dimension WCFGs.
We will prove that this derivation policy satisfies Lemma~\ref{lemma:ipl} and thus the index of every derivation is bounded.
We call these derivations \emph{lowest-dimension-first (LDF) derivations}.

Intuitively, given a parse tree \(τ\) of an at-most-k-dimension WCFG, we define the LDF derivation sequence of \(τ\) by performing a depth-first traversal of \(τ\) where nodes in the same level of the tree are visited from lower to greater dimension and, if more than one node has the same dimension, then from left to right.
Recall that the dimension of a node corresponds to the dimension of the parse tree that it roots.

Before giving a formal definition, we introduce the following notation.
Given a derivation sequence \(\psi = \alpha_0 \Rightarrow \ldots \Rightarrow \alpha_n \) and \(\beta_0,\beta_1 \) (possibly empty) sequences of symbols and/or variables, we will denote by \(\beta_0\,\psi\,\beta_1\) the derivation sequence \( \beta_0\,\alpha_0\,\beta_1 \Rightarrow \ldots \Rightarrow \beta_0\,\alpha_n\,\beta_1 \).

\begin{definition}
Let \(\dgrammar\) be an at-most-k-dimension CFG as in Definition~\ref{def:marked-WCFG}.
Let \(τ = \pi(τ_1, \ldots, τ_n)\) be a parse tree of \(\dgrammar\).
Define the \emph{lowest-dimension-first (LDF) derivation sequence \(\psi\) of \(τ\)} inductively as follows:
\begin{itemize}
	\item If \(n = 0\), then \(\pi\) is of the form \(\pi = X^{[0]} \rightarrow u_0 \), and \(τ = \pi\).
	Then, the LDF derivation sequence of \(τ\) is:
	\[\psi = X^{[0]} \Rightarrow^{\pi}_{ldf} u_0 \enspace .\]
	\item If \(n\geq 1\), we distinguish the following cases:
		\begin{enumerate}
			\item If \(\pi \in r_1\), i.e., \(\pi\) is of the form \(\pi = X^{[d]} \rightarrow u_0 X_1^{[d]}u_1\) with \(0 \leq d \leq k\), and \(τ = \pi (τ_1)\).
			Then, the LDF derivation sequence of \(τ\) is:
			\[\psi = X^{[d]} \Rightarrow^{\pi}_{ldf} u_0X_1^{[d]}u_1 \Rightarrow^{}_{ldf} u_0\psi_1u_1 \enspace ,\]
			where \(\psi_1\) is the LDF derivation sequence of \(τ_1\).

			\item If \(\pi \in r_4\), i.e., \(\pi\) is of the form \(\pi = X^{⌈d⌉} \rightarrow  X^{[e]}\) with \(0 \leq e \leq d \leq k\), and \(τ = \pi (τ_1)\).
			Then, the LDF derivation sequence of \(τ\) is:
			\[\psi = X^{⌈d⌉} \Rightarrow^{\pi}_{ldf} X^{[e]} \Rightarrow^{}_{ldf} \psi_1 \enspace,\]
			where \(\psi_1\) is the LDF derivation sequence of \(τ_1\).

			\item If \(\pi \in r_2\), w.l.o.g., we assume that \(\pi\) is of the form:
			\[\pi = X^{[d]} \rightarrow  u_0X_1^{[d]}u_1X_{2}^{⌈d-1⌉}u_2\ldots u_{n-2}X_{n-1}^{⌈d-1⌉}u_{n-1}X_{n}^{⌈d-1⌉}u_n \enspace ,\]
	        with \(1\leq d \leq k\), and \(τ = \pi (τ_1, \ldots, τ_n )\).
	        Define, for each \(i\in\{2, \ldots, n\}\), the derivation sequence \(\tilde{\psi_i}\) as follows:
			\begin{align*}
			\tilde{\psi_i} ≝~ &u_0\,X_1^{[d]}\,u_1\,𝒴(τ_2)\,u_2\,\ldots\,𝒴(τ_{i-1})\,u_{i-1}\,X_{i}^{⌈d-1⌉}\,u_i\,\ldots\, u_{n-1}\,X_{n}^{⌈d-1⌉}u_n\\
			\Rightarrow^*_{ldf}~ &u_0\,X_1^{[d]}\,u_1\,𝒴(τ_2)\,u_2\,\ldots\,𝒴(τ_{i-1})\,u_{i-1}\,\psi_{i}\,u_i\,X_{i+1}^{⌈d-1⌉}\,u_{i+1}\,\ldots\, u_{n-1}\,X_{n}^{⌈d-1⌉ }u_n \enspace ,
			\end{align*}
			where \(\psi_i\) is the LDF derivation sequence of \(τ_i\). And define:
			\begin{align*}
			\tilde{\psi_1} ≝~ &u_0\,X_1^{[d]}\,u_1\,𝒴(τ_2)\,u_2\,\ldots\, u_{n-1}\,𝒴(τ_n)\,u_n\\
			\Rightarrow^*_{ldf}~ &u_0\,\psi_1\,u_1\,𝒴(τ_2)\,u_2\,\ldots\,u_{n-1}\,𝒴(τ_n)\,u_n\enspace ,
			\end{align*}
			where \(\psi_1\) is the LDF derivation sequence of \(τ_1\).
			Then the LDF derivation \(\psi\) of \(τ\) is:
			\[\psi = X^{[d]} \Rightarrow^{\pi}_{ldf} \tilde{\psi_2} \Rightarrow_{ldf} \ldots \Rightarrow_{ldf} \tilde{\psi_n} \Rightarrow_{ldf} \tilde{\psi_1} \enspace .\]

			\item If \(\pi \in r_3\), w.l.o.g., we assume that \(\pi\) is of the form:
			\[\pi = X^{[d]} \rightarrow  u_0X_1^{⌈d-1⌉}u_1X_{2}^{⌈d-1⌉}u_2\ldots u_{n-2}X_{n-1}^{[d-1]}u_{n-1}X_{n}^{[d-1]}u_n \enspace ,\]
	 		with \(1\leq d \leq k\), and \(τ = \pi (τ_1, \ldots, τ_n )\).
	 		Define, for each \(i\in\{1, \ldots, n\}\), the derivation sequence \(\tilde{\psi_i}\) as follows:
			\begin{align*}
			\tilde{\psi_i} ≝~ &u_0\,𝒴(τ_1)\,u_1\,𝒴(τ_2)\,u_2\,\ldots\,𝒴(τ_{i-1})\,u_{i-1}\,X_{i}^{⌈d-1⌉}\,u_i\,\ldots\, u_{n-1}\,X_{n}^{[d-1]}\,u_n\\
			\Rightarrow^*_{ldf}~ &u_0\,𝒴(τ_1)\,u_1\,𝒴(τ_2)\,u_2\,\ldots\,𝒴(τ_{i-1})\,u_{i-1}\,\psi_{i}\,u_i\,X_{i+1}^{⌈d-1⌉}\,u_{i+1}\,\ldots\, u_{n-1}\,X_{n}^{[d-1] }\,u_n\enspace ,
			\end{align*}
			where \(\psi_i\) is the LDF derivation sequence of \(τ_i\).
			The the LDF derivation \(\psi\) of \(τ\) is:
			\[\psi = X^{[d]} \Rightarrow_{ldf} \tilde{\psi_1} \Rightarrow_{ldf} \ldots \Rightarrow_{ldf} \tilde{\psi_n} \enspace .\]
		\end{enumerate}
\end{itemize}
\end{definition}
Note that, given a parse tree \(τ\) of \(\dgrammar\), the LDF derivation sequence of \(τ\) is uniquely defined.

\begin{lemma}
\label{lemma:ldf-der}
Let \(\dgrammar\) be an at-most-k-dimension CFG of degree \(m\) and \(τ ∈ \mathcal{T}_{\dgrammar}\) such that  \(dim(τ) ≤ k\). Then, the LDF derivation sequence of \(τ\) verifies \(idx(ψ) ≤ km+1\).
\end{lemma}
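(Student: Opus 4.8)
The plan is to prove a slightly stronger statement by structural induction on the parse tree, namely that for every \(\tau \in \mathcal{T}_{\dgrammar}\) the (uniquely determined) LDF derivation sequence \(\psi\) of \(\tau\) satisfies \(idx(\psi) \leq dim(\tau)\cdot m + 1\). Since every tree of \(\dgrammar\) has dimension at most \(k\), this specialises to the claimed bound \(idx(\psi) \leq km+1\). The whole argument rests on a single arithmetic observation: the right-hand side of any rule of \(\dgrammar\) carries at most \(m+1\) variables, that is \(n \leq m+1\), which is exactly the definition of the degree \(m\) (recall \(m = \max\{\len{\gamma\downharpoonleft_V}\} - 1\)). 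I will also use repeatedly that, with the contextual notation \(\beta_0\,\psi'\,\beta_1\) introduced just before the lemma, the index splits additively as \(idx(\beta_0\,\psi'\,\beta_1) = idx(\psi') + \len{(\beta_0\beta_1)\downharpoonleft_V}\); this holds because \(\beta_0\) and \(\beta_1\) stay fixed along the whole subderivation and the children that LDF has already fully expanded contribute only terminals, hence no variables.

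First I would dispatch the easy cases. In the base case \(\tau = \pi\) with \(\pi = X^{[0]} \rightarrow u_0\) the only sentential forms are \(X^{[0]}\) and \(u_0 \in \Sigma^*\), so \(idx(\psi) = 1 = 0\cdot m + 1\). For a linear rule \(\pi \in r_1\) the child \(\tau_1\) has the same dimension \(d\) as \(\tau\), and since LDF merely surrounds the derivation of \(\tau_1\) by a variable-free context, \(idx(\psi) = \max(1, idx(\psi_1)) \leq dm+1\) by the induction hypothesis. The \(e\)-rule case \(\pi \in r_4\) is identical: the node is unary, so \(dim(\tau) = dim(\tau_1)\), no new variable is ever present alongside the child, and the bound is inherited verbatim.

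The heart of the proof is the two non-linear cases \(r_2\) and \(r_3\), where \(d \geq 1\). Here LDF expands the children in order of nondecreasing dimension, postponing the unique dimension-\(d\) child (the ``spine'', present only in \(r_2\)) to the very end. While LDF works inside a child \(\tau_i\) of dimension at most \(d-1\), the induction hypothesis gives \(idx(\psi_i) \leq (d-1)m+1\), and the surrounding context holds at most one spine variable together with the not-yet-touched low-dimension siblings, which a direct count bounds by \(n-1\) variables. The additive splitting then yields \(idx(\tilde{\psi_i}) \leq (d-1)m + 1 + (n-1) = (d-1)m + n \leq (d-1)m + (m+1) = dm+1\), where the last inequality is precisely \(n \leq m+1\). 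In \(r_3\) every child has dimension at most \(d-1\), so this general bound already covers the whole derivation; in \(r_2\), when LDF finally expands the spine \(\tau_1\) (dimension \(d\)), every sibling has already been reduced to terminals, so the context is variable-free and \(idx(\tilde{\psi_1}) = idx(\psi_1) \leq dm+1\) by the induction hypothesis. The transitional sentential forms (the single variable \(X^{[d]}\) and the full right-hand side of \(\pi\), which has \(n \leq m+1 \leq dm+1\) variables) are also within the bound, completing the induction.

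I expect the main obstacle to be the precise bookkeeping of the context variables across the concatenation \(X^{[d]} \Rightarrow_{ldf} \tilde{\psi_2} \Rightarrow_{ldf} \cdots \Rightarrow_{ldf} \tilde{\psi_1}\): one must check that at the start of each \(\tilde{\psi_i}\) exactly the siblings \(\tau_{i+1}, \ldots, \tau_n\) (together with the single spine in \(r_2\)) survive as variables while \(\tau_2, \ldots, \tau_{i-1}\) have already become terminals, and that the worst case over \(i\) is the leftmost one. The design of LDF — and of \(\dgrammar\) itself, in which \(r_2\) creates exactly one dimension-\(d\) child and \(r_3\) at least two dimension-\((d-1)\) children — is what makes the arithmetic \((d-1)m + n \leq dm+1\) close; the crux is recognising that postponing the single high-dimension subtree is exactly what prevents its expensive, index-\((dm+1)\) expansion from colliding with any surviving sibling.
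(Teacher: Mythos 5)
Your proof is correct and follows essentially the same route as the paper's: both strengthen the statement to a bound parametrized by the dimension ($idx(\psi)\leq dm+1$), induct over the structure of the parse tree, split into the cases $r_0,r_1,r_4,r_2,r_3$, and close the non-linear cases with the arithmetic $(d-1)m+n\leq dm+1$ coming from $n\leq m+1$ and the fact that LDF defers the unique dimension-$d$ child until all siblings are terminal.
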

\begin{proof}

Let \(\dgrammar = (\dvars,Σ,S^{⌈k⌉},\drules)\).
We prove the more general statement: let \(m\) be the degree of \(\dgrammar\) and let \(τ ∈ \mathcal{T}_{\dgrammar}\) such that \(dim(\tau)\leq d\).
Then, the LDF derivation sequence \(\psi \) of \(τ\) satisfies \(idx(\psi) \leq dm+1\).
The proof goes by induction on the number of nodes of \(τ\).
In the base case, \(τ\) has one node, i.e., it has no children. Then, \(d= 0\) and the LDF derivation of \(τ\) is \(ψ = X^{[0]} \Rightarrow_{ldf}^{\pi} u_0\) with \(\pi = (X^{[0]} \rightarrow u_0) \in \drules\).
Clearly, the index of \(\psi\) is 1.

For the induction step, assume that \(τ = \pi(τ_1, \ldots, τ_n)\) with \(n \geq 1\).
We split the proof into the following four cases:
\begin{itemize}
	\item If \(\pi \in r_1\), then \(\pi\) is of the form \(\pi = X^{[d]} \rightarrow u_0 X_1^{[d]}u_1\) with \(0 \leq d \leq k\), and \(τ = \pi (τ_1)\) with \(dim(τ) \leq d\).
	By induction hypothesis, the LDF derivation sequence \(\psi_1\) of \(τ_1\) verifies \(idx(\psi_1) \leq (dm + 1)\).
	Then, the LDF derivation of \(τ\) is:
	\[\psi = X^{[d]} \Rightarrow_{ldf} u_0X_1^{[d]}u_1 \Rightarrow^{*}_{ldf} u_0\psi_1u_1,\]
	and verifies \(idx(\psi) \leq dm + 1\).

	\item If \(\pi \in r_4\), then \(\pi\) is of the form \(\pi = X^{⌈d⌉} \rightarrow  X^{[e]}\) with \(0 \leq e \leq d \leq k\), and \(τ = \pi (τ_1)\) with \(dim(τ) \leq d\).
	By induction hypothesis, the LDF derivation sequence \(\psi_1\) of \(τ_1\) s.t. \(idx(\psi_1) \leq (em + 1)\).
	Then, the LDF derivation of \(τ\) is:
	\[\psi = X^{⌈d⌉} \Rightarrow_{ldf} X^{[e]} \Rightarrow^{*}_{ldf} \psi_1,\]
	and verifies \(idx(\psi) \leq em + 1 \leq dm +1\).

	\item If \(\pi \in r_2\), then, w.l.o.g., \(\pi\) is of the form:
	 \[\pi = X^{[d]} \rightarrow  u_0X_1^{[d]}u_1X_{2}^{⌈d-1⌉}u_2\ldots u_{n-2}X_{n-1}^{⌈d-1⌉}u_{n-1}X_{n}^{⌈d-1⌉}u_n,\]
	 with \(1\leq d \leq k\), and \(τ = \pi (τ_1, \ldots, τ_n )\) with \(dim(τ) \leq d\).
	By induction hypothesis, for each \(i\in\{2, \ldots, n\}\), there is a derivation \(\psi_i\) of \(τ_i\) s.t. \(idx(\psi_i) \leq ((d-1)m + 1)\), and there is a derivation \(\psi_1\) for \(τ_1\) s.t. \(idx(\psi_1) \leq dm + 1\).
	Now, define, for each \(i\in\{2, \ldots, n\}\), the derivation sequence \(\tilde{\psi_i}\) as follows:
	\begin{align*}
	\tilde{\psi_i} ≝~ &u_0\,X_1^{[d]}\,u_1\,𝒴(τ_2)\,u_2\,\ldots\,𝒴(τ_{i-1})\,u_{i-1}\,X_{i}^{⌈d-1⌉}\,u_i\,\ldots\, u_{n-1}\,X_{n}^{⌈d-1⌉}u_n\\
	\Rightarrow^*_{ldf}~ &u_0\,X_1^{[d]}\,u_1\,𝒴(τ_2)\,u_2\,\ldots\,𝒴(τ_{i-1})\,u_{i-1}\,\psi_{i}\,u_i\,X_{i+1}^{⌈d-1⌉}\,u_{i+1}\,\ldots\, u_{n-1}\,X_{n}^{⌈d-1⌉ }u_n \enspace .
	\end{align*}
	And define:
	\begin{align*}
	\tilde{\psi_1} ≝~ &u_0\,X_1^{[d]}\,u_1\,𝒴(τ_2)\,u_2\,\ldots\, u_{n-1}\,𝒴(τ_n)\,u_n\\
	\Rightarrow^*_{ldf}~ &u_0\,\psi_1\,u_1\,𝒴(τ_2)\,u_2\,\ldots\,u_{n-1}\,𝒴(τ_n)\,u_n\enspace .
	\end{align*}
	Then, the LDF derivation \(\psi\) of \(τ\) is:
	\[\psi = X^{[d]} \Rightarrow_{ldf}^{\pi} \tilde{\psi_2} \Rightarrow_{ldf} \ldots \Rightarrow_{ldf} \tilde{\psi_n} \Rightarrow_{ldf} \tilde{\psi_1} \enspace .\]

	Observe that \(n-1 \leq m\) where \(n\) is the number of variables occurring in the right-hand side of \(\pi\) and \(m\) is the degree of \(\dgrammar\). For each \(i \in \{2, \ldots, n\}\), the index of \(\tilde{\psi_i}\) is at most \((d-1)m + 1 + (n-i) \leq dm + 1\). On the other hand, the index of \(\tilde{\psi_1}\) is at most \(dm + 1\). Then, performing the derivation steps of \(\psi\) in the order shown above we have that \(idx(\psi) \leq dm + 1\).

	\item If \(\pi \in r_3\), then, w.l.o.g., \(\pi\) is of the form:
	 \[\pi = X^{[d]} \rightarrow  u_0X_1^{⌈d-1⌉}u_1X_{2}^{⌈d-1⌉}u_2\ldots u_{n-2}X_{n-1}^{[d-1]}u_{n-1}X_{n}^{[d-1]}u_n,\]
	 with \(1\leq d \leq k\), and \(τ = \pi (τ_1, \ldots, τ_n )\) with \(dim(τ) \leq d\).
	By induction hypothesis, for each \(i\in\{1, \ldots, n\}\), there is a derivation \(\psi_i\) of \(τ_i\) s.t. \(idx(\psi_i) \leq ((d-1)m + 1)\).
	Now, define, for each \(i\in\{1, \ldots, n\}\), the derivation sequence \(\tilde{\psi_i}\) as follows:
	\begin{align*}
	\tilde{\psi_i} ≝~ &u_0\,𝒴(τ_1)\,u_1\,𝒴(τ_2)\,u_2\,\ldots\,𝒴(τ_{i-1})\,u_{i-1}\,X_{i}^{⌈d-1⌉}\,u_i\,\ldots\, u_{n-1}\,X_{n}^{[d-1]}\,u_n\\
	\Rightarrow^*_{ldf}~ &u_0\,𝒴(τ_1)\,u_1\,𝒴(τ_2)\,u_2\,\ldots\,𝒴(τ_{i-1})\,u_{i-1}\,\psi_{i}\,u_i\,X_{i+1}^{⌈d-1⌉}\,u_{i+1}\,\ldots\, u_{n-1}\,X_{n}^{[d-1] }\,u_n\enspace .
	\end{align*}
	Then, the LDF derivation \(\psi\) of \(τ\) is:
	\[\psi = X^{[d]} \Rightarrow_{ldf}^{\pi} \tilde{\psi_1} \Rightarrow_{ldf} \ldots \Rightarrow_{ldf} \tilde{\psi_n} \enspace .\]
	For each \(i \in \{1, \ldots, n\}\), the index of \(\tilde{\psi_i}\) is at most \((d-1)m + 1 + (n-i) \leq dm + 1\). It follows that \(idx(\psi) \leq dm + 1\).

\end{itemize}
\end{proof}
Given a derivation sentence \(\alpha \in \big( \Sigma \cup \dvars \big)^*\) of an at-most-k-dimension CFG, define \(\mathcal{LDF}(\alpha) ≝ α⇃_{\Sigma}\,α⇃_{V^{(0)}}\,α⇃_{V^{(1)}}\ldots α⇃_{V^{(k)}} \) and \(\mathcal{LDF}_{\dvars}(\alpha) ≝ (\mathcal{LDF}(\alpha))⇃_{\dvars}\).
Now we define a regular \((G_{\ell}, W_{\ell})\) that is Parikh-equivalent to \(\big(\dgrammar, \dweights\big)\) in a similar way to Bhattiprolu et al.~\cite{Vijay17}.

\begin{definition}[Regular WCFG for \(\big(\dgrammar, \dweights\big)\)]
\label{def:regular-WCFG}
Let \(\big(\dgrammar, \dweights\big)\) be an at-most-k-dimension WCFG  with \(\dgrammar = (\dvars,Σ,S^{⌈k⌉},\drules)\) and degree \(m\), and \(\dweights\) defined over the commutative semiring \(A\).
Define the WCFG \((G_{\ell}, W_{\ell})\) with \(G_{\ell} = (V_{\ell},Σ,S_{\ell},R_{\ell} )\) as follows:
\begin{itemize}
	\item Each variable in \(V_{\ell}\) corresponds to a sequence \(\alpha \in \big ( \dvars \big )^{km + 1} \) where \(( \dvars \big )^{km + 1}\) denotes the set \(\{w \mid w \in ( \dvars \big )^*, |w| \leq km +1 \}\), and we denote it by \(⟨\alpha⟩\).
	Formally, 
	\[V_{\ell}≝ \{ ⟨\alpha⟩ \mid  \alpha \in \big ( \dvars \big )^{km + 1} \}\enspace .\]
	\item The initial variable is defined as \(S_{\ell} ≝ S^{⌈k⌉}\).
	\item For each rule \(\pi = (X \longrightarrow \gamma) ∈ \drules\) define \(\pi^{α}≝ (⟨X\,α⟩ \longrightarrow \gamma⇃_{\Sigma}\,⟨\mathcal{LDF}_{\dvars}(\gamma)\,α⟩)\).
	The set \(R_{\ell}\) of rules is given by
		\[\{\pi^{α} \mid \pi = (X \longrightarrow \gamma) ∈ \drules \text{ and }⟨X\alpha⟩,⟨\mathcal{LDF}_{\dvars}(\gamma)\,\alpha⟩ \in V_{\ell} \}\enspace .\]
	\item The weight function \(W_{\ell}\) is given by
	\[W_{\ell}(\pi^{\alpha}) ≝ \dweights(\pi) \text{ for all } \pi^α ∈ R_{\ell} \enspace .\]
\end{itemize}
\end{definition}
\begin{lemma}
\label{lemma:parikh-equiv-linear}
\(Pk⟦\dgrammar⟧_{\dweights} = Pk⟦G_{\ell}⟧_{W_{\ell}} \).
\end{lemma}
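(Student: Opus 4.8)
The plan is to exhibit a weight-preserving bijection between the parse trees of \(\dgrammar\) and those of \(G_{\ell}\) under which the Parikh image of the yield is invariant; the Parikh-equivalence then follows by summing over each Parikh class. The bridge between the two grammars is the LDF derivation sequence: by construction (Definition~\ref{def:regular-WCFG}) each rule \(\pi^{\alpha}\) of \(G_{\ell}\) records a single LDF derivation step of \(\dgrammar\), where the variable \(\langle\alpha\rangle\) stores exactly the (LDF-ordered) variable content of the current derivation sentence, with \(X\) being the lowest-dimension, leftmost variable about to be rewritten.

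First I would associate to every parse tree \(\tau \in \mathcal{T}_{\dgrammar}\) its unique LDF derivation sequence \(\psi\) and read off a derivation of \(G_{\ell}\): the step of \(\psi\) that rewrites the current lowest variable \(X\) via \(\pi = (X \rightarrow \gamma)\), in a context whose variable part is \(X\alpha\), is mapped to the \(G_{\ell}\)-rule \(\pi^{\alpha}\). I must check this map is well defined, i.e.\ that every intermediate \(\langle\alpha\rangle\) belongs to \(V_{\ell}\). This is where Lemma~\ref{lemma:ldf-der} enters: since every tree of \(\dgrammar\) has dimension at most \(k\), the LDF sequence has index at most \(km+1\), so the variable part of each sentence never exceeds length \(km+1\); hence all the rules \(\pi^{\alpha}\) invoked lie in \(R_{\ell}\). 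I also need that \(\mathcal{LDF}_{\dvars}(\gamma)\,\alpha\) is the correct variable order of the next sentence, which follows from the dimension bookkeeping of Definition~\ref{def:marked-WCFG}: the variables introduced by \(\gamma\) all have dimension at most \(dim(X)\), so they are LDF-processed before the residual \(\alpha\).

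Next I would argue this map is a bijection onto \(\mathcal{T}_{G_{\ell}}\). Injectivity is immediate, since distinct parse trees yield distinct LDF sequences and the map is reversible step by step. For surjectivity I would use that the parse trees of \(G_{\ell}\) are linear chains (\(G_{\ell}\) is right-regular) and show that any such chain determines a consistent sequence of \(\dgrammar\)-rules and contexts that reassembles into a unique parse tree, the inverse construction essentially undoing the LDF linearization. It then remains to verify the two invariants. Weight preservation is the factor-by-factor identity \(W_{\ell}(\pi^{\alpha}) = \dweights(\pi)\), so the product of rule weights along the \(G_{\ell}\)-derivation equals \(\dweights(\tau)\). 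Parikh preservation holds because each step of \(G_{\ell}\) emits exactly the terminal projection \(\gamma\mathclose{\downharpoonleft}_{\Sigma}\) of the rule applied in \(\tau\); the concatenation of these projections carries the same multiset of terminals as \(\mathcal{Y}(\tau)\), so \(\lbag \mathcal{Y}(\tau)\rbag\) is preserved.

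Finally, summing over the bijection class by class gives, for every \(v \in \Sigma^{\oplus}\), the equality of \(\sum_{\lbag \mathcal{Y}(\tau)\rbag = v}\dweights(\tau)\) and \(\sum_{\lbag \mathcal{Y}(\tau')\rbag = v}W_{\ell}(\tau')\), which is precisely \(Pk⟦\dgrammar⟧_{\dweights}(v) = Pk⟦G_{\ell}⟧_{W_{\ell}}(v)\), and hence the claimed identity. I expect the main obstacle to be the surjectivity and faithfulness argument: proving that the bounded-length sequences \(\langle\alpha\rangle\) track the LDF order so precisely that \(G_{\ell}\)-derivations neither create spurious trees nor miss any. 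This hinges on showing that the update \(\langle X\alpha\rangle \rightarrow \gamma\mathclose{\downharpoonleft}_{\Sigma}\,\langle\mathcal{LDF}_{\dvars}(\gamma)\,\alpha\rangle\) realizes exactly the LDF reordering, which I would establish from the dimension constraints imposed on the non-linear rules \(r_2\) and \(r_3\) in Definition~\ref{def:marked-WCFG}.
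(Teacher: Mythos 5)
Your proposal is correct and follows essentially the same route as the paper: the paper also defines a weight- and Parikh-preserving one-to-one correspondence \(f\) from LDF derivation sequences of \(\dgrammar\) to derivation sequences of \(G_{\ell}\) (inductively over the rule types of Definition~\ref{def:marked-WCFG}), with well-definedness resting on the index bound of Lemma~\ref{lemma:ldf-der}, and concludes by summing over Parikh classes. The surjectivity step you single out as the main obstacle is indeed the part the paper treats most informally, so your plan matches both the structure and the weak point of the published argument.
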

\begin{proof}
First, we give the definitions and notation we will use in this proof.
For convenience, we will give an alternative definition of the weight of a word using derivation sequences.
Recall that we assume that the derivation policy of a grammar defines for each parse tree one unique derivation sequence.
Given a CFG \(G\) and \(w \in \Sigma^*\), define by \(parse_G(w)\) as the subset  of all derivations of \(G\) that yield to \(w \in Σ^{*}\).
Then, define for each derivation sequence \(\psi = \alpha_0 \Rightarrow^{\pi_1} \alpha_1 \Rightarrow^{\pi_2} \ldots \Rightarrow^{\pi_n} \alpha_n\) of \(G\) the \emph{weight of \(\psi\)} as follows:
\[W(\psi) ≝ \prod^{n}_{i=1} W(\pi_i) \enspace .\]
Finally, define for each \(w \in Σ^{*}\), 
\[W(w) ≝ \sum_{\psi \in parse_G(w)} W(\psi) \enspace .\]
If \(parse_G(w) = \emptyset\) then \(W(w) ≝ 0_A\).
Given a derivation sequence \(\psi = \alpha_0 \Rightarrow \ldots \Rightarrow \alpha_n \) and \(\beta_0,\beta_1 \) (possibly empty) sequences of symbols and/or variables, recall that the notation \(\beta_0\,\psi\,\beta_1\) denotes the derivation sequence \( \beta_0\,\alpha_0\,\beta_1 \Rightarrow \ldots \Rightarrow \beta_0\,\alpha_n\,\beta_1 \).
When \(\beta_1 \in V_{\ell}\) and thus \(\beta_{1} =  ⟨\beta'_1⟩\) for some \(\beta'_1 \in \big ( \dvars \big )^{km + 1}\), and \(\alpha_i\) is of the form  \(w⟨\alpha'_i⟩\) with  \(\alpha'_i \in V_{\ell}\) and \(w \in \Sigma^*\), for some \(i\in \{1, \ldots, n\}\), then \(\beta_0\alpha_i\beta_1\) denotes \(\beta_0 w ⟨\alpha'_i \beta'_1⟩\).

We claim that there exists a one-to-one correspondence \(f\) that maps each LDF derivation sequence of \((\dgrammar, \dweights)\) into a derivation sequence of \((G_{\ell}, W_{\ell})\) that preserves the Parikh images and the weights between derivations.
Formally, there exists a one-to-one correspondence \(f\) such that for each LDF derivation sequence \(\psi = X^{[d]} \Rightarrow^*_{ldf} w\) with \(w \in \Sigma^*\) of \((\dgrammar, \dweights)\), \(f(\psi) = ⟨X^{[d]}⟩ \Rightarrow^* w'\) with \(w' \in \Sigma^*\) is a derivation sequence of \((G_{\ell}, W_{\ell})\) with the following properties:
\begin{enumerate}
	\item \(\lbag w\rbag = \lbag w' \rbag\) and,
	\item \(\dweights(\psi) = W_{\ell}(f(\psi))\).
\end{enumerate}

We now give an inductive definition of \(f\).
Along this definition we will prove inductively that:
\begin{inparaenum}[\upshape(\itshape i\upshape)]
\item \(f\) is an injective function from LDF derivation sequences of \((\dgrammar, \dweights)\) to derivation sequences in \((G_{\ell}, W_{\ell})\); and
\item properties 1. and 2. above hold.  
\end{inparaenum}

Let \(\psi\) be a LDF derivation of \((\dgrammar, \dweights)\).
\begin{enumerate}
	\item If \(\psi\) is a 1-step derivation sequence then \(\psi = X^{[0]}\Rightarrow^{\pi}_{ldf} u_0\) with \(\pi \in r_0\).
	Then, define \(f(\psi) ≝ ⟨X^{[0]}⟩ \Rightarrow^{\pi^{\varepsilon}} u_0\).

    Note that \(f(\psi)\) is a one-step derivation sequence that uses the rule \((⟨X^{[0]}⟩\rightarrow u_0) \in R_{\ell}\). 
    It follows that \(f\) defines uniquely a derivation sequence of \((G_{\ell}, W_{\ell})\) for \(\psi\).
	Note that property 1. holds trivially.
	By definition of \((G_{\ell}, W_{\ell})\), we have that:
	\[W_{\ell}(f(\psi)) = W_{\ell}(⟨X^{[0]}⟩\rightarrow u_0) = \dweights(X^{[0]}\rightarrow u_0) = \dweights(\psi) \enspace .\]

	\item If \(\psi\) is a \(n\)-step derivation sequence (with \(n>1\)), then we have the following cases:
	\begin{itemize}
		\item If \(\psi = X^{[d]} \Rightarrow^{\pi}_{ldf} u_0 X_{1}^{[d]} u_1 \Rightarrow^{*}_{ldf} u_0 \psi'u_1\) where \(\pi \in r_1\) and \(\psi' = X_{1}^{[d]} \Rightarrow^{*}_{ldf} w\) with \(w \in \Sigma^*\).
		Then, define \(f(\psi) ≝ ⟨X^{[d]}⟩ \Rightarrow^{\pi^{\varepsilon}} u_0u_1⟨X_1^{[d]}⟩ \Rightarrow^{*} u_0u_1f(\psi')\).

		Note that the first step in the derivation \(f(\psi)\) uses the rule \((⟨X^{[d]}⟩ \rightarrow u_0u_1⟨X_1^{[d]}⟩) \in R_{\ell}\).
    	Relying on this and the hypothesis of induction, \(f\) defines uniquely a derivation sequence of \((G_{\ell}, W_{\ell})\) for \(\psi\).
		By hypothesis of induction, it is easy to check that property 1. holds.
		Finally, using the hypothesis of induction and the definition of \((G_{\ell}, W_{\ell})\) we have:
		\[W_{\ell}(f(\psi)) = W_{\ell}(⟨X^{[d]}⟩ \rightarrow u_0u_1⟨X_1^{[d]}⟩)\cdot W_{\ell}(f(\psi'))  = \dweights(X^{[d]} \rightarrow u_0X_1^{[d]}u_1) \cdot \dweights(\psi') = \dweights(\psi) \enspace .\]

		\item If \(\psi = X^{⌈d⌉} \Rightarrow^{\pi}_{ldf} X^{[e]} \Rightarrow^{*}_{ldf} \psi'\) where \(\pi \in r_2\) and \(\psi' = X_{1}^{[e]} \Rightarrow^{*}_{ldf} w\) with \(w \in \Sigma^*\).
		Then, define \(f(\psi) ≝ ⟨X^{⌈d⌉}⟩ \Rightarrow^{\pi^{\varepsilon}} ⟨X^{[e]}⟩ \Rightarrow^{*} f(\psi')\).

		Note that the first step in the derivation \(f(\psi)\) uses the rule \((⟨X^{⌈d⌉}⟩ \rightarrow ⟨X^{[e]}⟩) \in R_{\ell}\).
    	Relying on this and the hypothesis of induction, \(f\) defines uniquely a derivation sequence of \((G_{\ell}, W_{\ell})\) for \(\psi\).
		By hypothesis of induction, property 1. holds trivially.
		Finally, using the hypothesis of induction and the definition of \((G_{\ell}, W_{\ell})\) we have:
		\[W_{\ell}(f(\psi)) = W_{\ell}(⟨X^{⌈d⌉}⟩ \rightarrow ⟨X^{[e]}⟩)\cdot W_{\ell}(f(\psi'))  = \dweights(X^{⌈d⌉} \rightarrow X^{[e]}) \cdot \dweights(\psi') = \dweights(\psi) \enspace .\]

		\item Finally, assume w.l.o.g, that \(\psi\) has the form: 
		\begin{align*}
		\psi = X^{[d]} &\Rightarrow^{\pi}_{ldf}  u_0Z_1u_1\ldots u_{n-2}Z_{n-1} u_{n-1}Z_nu_n \\ 
		&\Rightarrow^{*}_{ldf}  u_0Z_1u_1\ldots u_{n-2}Z_{n-1} u_{n-1}\psi'_n u_n \\
		&\Rightarrow^{*}_{ldf}  u_0Z_1u_1\ldots u_{n-2}\psi'_{n-1} u_{n-1} w_n u_n \Rightarrow^{*}_{ldf} \ldots \\
		&\Rightarrow^{*}_{ldf}  u_0 w_1 u_1 w_2 u_2 \ldots u_{n-1} w_n u_n \enspace ,
		\end{align*}
		where \(\pi \in r_2 \cup r_3\), and for each \(i \in\{1, \ldots, n\}\) \(\psi'_i = Z_i \Rightarrow^{*}_{ldf} w_i\)  with \(w_i \in \Sigma^*\) for all \(i\).
		Then, define
		\begin{align*}
		f(\psi) ≝ ⟨X^{[d]}⟩&\Rightarrow^{\pi^{\varepsilon}} u_0 u_1 u_2 \ldots u_n ⟨\mathcal{LDF}(Z_1\ldots Z_n)⟩\\
		&\Rightarrow^{*} u_0 u_1 u_2 \ldots u_n f(\psi'_n)⟨\mathcal{LDF}(Z_1\ldots Z_{n-1})⟩ \\
		&\Rightarrow^{*} u_0 u_1 u_2 \ldots u_n w'_n f(\psi'_{n-1})⟨\mathcal{LDF}(Z_1\ldots Z_{n-2})⟩\Rightarrow^{*} \ldots\\
		&\Rightarrow^{*}  u_0 u_1 u_2 \ldots u_n w'_n w'_{n-1}\ldots w'_1 \enspace ,
		\end{align*}
		where each \(w'_i \in \Sigma^*\) corresponds to the word generated by each \(f(\psi'_i)\) inductively.
		Note that the first step in the derivation \(f(\psi)\) uses a rule of the form \((X^{[d]} \rightarrow  u_0 u_1 u_2 \ldots u_n ⟨\mathcal{LDF}(Z_1\ldots Z_n)⟩)\) which according to the definition of \((G_{\ell}, W_{\ell})\) defines a rule in \(R_{\ell}\).
    	Relying on this and the hypothesis of induction, \(f\) defines uniquely a derivation sequence of \((G_{\ell}, W_{\ell})\) for \(\psi\).
		It is easy to see property 1. holds since, by hypothesis of induction, each \(w'_i\) satisfies \(\lbag w'_i \rbag = \lbag w_i \rbag\).
		Finally, using the hypothesis of induction and the definition of \((G_{\ell}, W_{\ell})\) we have:
		\begin{align*}
		W_{\ell}(f(\psi)) &= W_{\ell}(X^{[d]} \rightarrow  u_0 u_1 \ldots u_n ⟨\mathcal{LDF}(Z_1\ldots Z_n)⟩)\cdot  \prod^n_{i=1} W_{\ell}(f(\psi'_i))   \\
		&= \dweights(X^{[d]} \rightarrow  u_0Z_1u_1\ldots u_{n-1}Z_nu_n ) \cdot \prod^n_{i=1} \dweights(\psi'_i) = \dweights(\psi) \enspace .
		\end{align*}
	\end{itemize}
\end{enumerate}

Finally, the fact that \(f\) is a surjective function follows from its construction.
First, note that each rule in \(R_{\ell}\) is intended to simulate a LDF derivation step of \((\dgrammar, \dweights)\), while each variable in \(V_{\ell}\) represents a derivation sequence of \((\dgrammar, \dweights)\).
On the other hand, the reader can check  that, in each case, the definition of \(f\) intends to simulate a LDF derivation of \((\dgrammar, \dweights)\) using the convenient definition of rules and variables of \((G_{\ell}, W_{\ell})\).
It follows that every derivation sequence of \((G_{\ell}, W_{\ell})\) is covered by the image of \(f\).

Relying on the definition of \(f\) and its properties, the following equalities hold.
For each \(v \in Σ^{⊕}\):
\begin{align*}
      Pk ⟦ \dgrammar ⟧_{\dweights}(v) &= \sum_{v = \lbag w \rbag} ⟦ \dgrammar ⟧_{\dweights}(w)  &\text{ by definition of Parikh image} \\
      &=  \sum_{v = \lbag w \rbag} \sum_{ψ ∈ parse_{\dgrammar}(w)}\dweights(ψ) & \text{ by definition of semantics}\\
      &= \sum_{v = \lbag w \rbag} \sum_{ψ ∈ parse_{\dgrammar}(w)} W_{\ell}(f(ψ)) & f\text{ satisfies prop. 2.}\\
      &= \sum_{v = \lbag w' \rbag} \sum_{ψ' ∈ parse_{G_{\ell}}(w')} W_{\ell}(ψ') & f \text{ is a bijection and satisfies prop. 1.}\\
      &= \sum_{v =  \lbag w' \rbag} ⟦ G_{\ell} ⟧_{W_{\ell}}(w')  &\text{ by definition of semantics}\\
      &=Pk ⟦ G_{\ell} ⟧_{W_{\ell}}(v). &\text{ by definition of Parikh image}
\end{align*}
\end{proof}

\section{Counterexample in the unary case}
\label{sec:appendix-counterexample}

In Example~\ref{ex:counterexample2} we show an expansive WCFG \((G_2, W_2)\) over the alphabet \(\{a,\overline{a}\}\) that satisfies the Parikh property.
Now we show, by means of the following example, that there also exists an expansive WCFG \((G, W)\) over the alphabet \(\{a\}\) that satisfies the Parikh property. 

\begin{example}
The idea behind this example is to use the definition of \((G_2, W_2)\) from Example~\ref{ex:counterexample2} (a complete definition is given in Example~\ref{ex:example2}) and replace each occurrence of the alphabet symbol \(\overline{a}\) in the rules of \((G_2, W_2)\) by \(a\).
Thus, define the WCFG \((G,W)\) where \(G = (\{X,\overline{D},D,Y,Z\}, \{a\}, X, R)\), \(R\) is given by:
\begin{align*}
X &\rightarrow D \mid \overline{D} & \overline{D} &\rightarrow D\,a\,Y \mid D\,a\,Z & Z &\rightarrow D\,a\,Z \mid D \enspace ,
\\
D &\rightarrow a\,D\,a\,D \mid \varepsilon & Y &\rightarrow a\,Y \mid \varepsilon
\end{align*}
and the weight function \(W\) is defined over \((\mathbb{N}, +, \cdot, 0, 1)\) and assigns \(1\) to each production in the grammar except from the rule \(Y \rightarrow a\,Y\) which is assigned weight \(2\).
Notice that we preferred to assign weight \(2\) to the later rule instead of adding two copies each of weight \(1\).
Recall that \(Pk ⟦ G_2 ⟧_{{W}_2} = (a + \overline{a})^*\).
Now, relying on our construction of \((G,W)\), we have that \(Pk ⟦ G ⟧_{W}\) is the formal power series that results from replacing each \(\overline{a}\) by \(a\) in the series \(Pk ⟦ G_2 ⟧_{{W}_2}\).
Thus, we obtain that \(Pk ⟦ G ⟧_{W} = (a+a)^* = (2a)^*\).
The reader can check that the formal power series \((2a)^*\) corresponds to the Parikh image of the regular WCFG \((G_{\ell},W_{\ell})\) where \(G_{\ell}\) is defined as \(G_{\ell} = (\{X\}, \{a\}, X, \{X \rightarrow aX,\,X \rightarrow \varepsilon\})\) and the weight function \(W_{\ell}\) is defined over \((\mathbb{N}, +, \cdot, 0, 1)\) and assigns \(2\) to the rule \(X \rightarrow aX\) and \(1\) to the rule \(X \rightarrow \varepsilon\).
\end{example}

\begin{remark}
Let us check that \((G,W)\) has the Parikh property using the decision procedure presented in Section~\ref{sec:decidability}.
The algebraic system \(S\) corresponding to \((G,W)\) consists of the following equations:
\begin{align*}
X &= D + \overline{D} & \overline{D} &= D\,a\,Y + D\,a\,Z & Z &= D\,a\,Z + D \enspace .
\\
D &= a\,D\,a\,D + 1 & Y &= 2a\,Y + 1
\end{align*}
Let \(\sigma = (r_1, r_2, r_3, r_4, r_5)\) be its strong solution where \(r_1\) corresponds to the solution for the initial variable \(X\).

Now we construct the irreducible polynomial \(q(X) \in \mathbb{Q}⟨\{a\}^{⊕}⟩⟨X⟩\) following the procedure given in Theorem~\ref{thm:q-construction}.
Let \(F = \{X - D - \overline{D}, \overline{D} - DaY -DaZ, Z - DaZ - D, D - aDaD - 1, Y-2aY-1\}\).
The reduced Groebner basis\footnote{The Groebner basis \(G\) was computed using the \verb+groebner_basis+ method of the open-source mathematics software system \href{http://www.sagemath.org/}{SageMath}.} \(G\) of \(F\) w.r.t. lexicographic ordering is:
\begin{align*}
G = & \left\{X - \frac{1}{1-2a},\,D + \overline{D} - \frac{1}{1 - 2a},\,Z - \left(\frac{a}{1 - 2a}\right) \overline{D} - \frac{1 - 3a}{4a^{2} - 4a + 1},\,Y - \frac{1}{1 - 2a},\right. \\
& \left.\overline{D}^{2} + \left(\frac{2a^{2} + 2a - 1}{2a^{3} - a^{2}}\right) \overline{D} + \frac{5a - 2}{4a^{3} - 4a^{2} + a} \right\} \enspace .
\end{align*}
Clearly, the polynomial \(g \in G\) such that \(g \in K⟨X⟩\) where \(K\) is the fraction field of \(\mathbb{Q}⟨\{a\}^{⊕}⟩\), and \(g(r_1) \equiv 0\) is:
\[g(X) = X - \frac{1}{1-2a} \enspace .\]

This polynomial cannot be reduced into factors in the fraction field of \(\mathbb{Q}⟨\{a\}^{⊕}⟩\).
Now we multiply \(g\) by \((1 - 2a)\) and thus obtain \(q(X) = (1 - 2a)X - 1\) in \(\mathbb{Q}⟨\{a\}^{⊕}⟩⟨X⟩\).
We have that \(q(X) = (1 - 2a)X - 1\) is the irreducible polynomial described by Theorem~\ref{thm:q-existence}.
Now we apply the decision procedure given in Theorem~\ref{thm:decision-proc}.
We observe that \(q\) is linear in \(X\) and can be written as:
\[q(X) = (1-s)X -t = (1 - 2a) X - 1 \enspace ,\]
with \((s, ε) = 0\).
Then, we conclude that \((G,W)\) satisfies the Parikh property as expected.
Let us give the regular Parikh-equivalent WCFG \((G_{\ell},W_{\ell})\).
We know that the algebraic system:
\begin{equation}
\label{eq:example1}
X = 2a X + 1
\end{equation}
has \(r_1\) as solution.
Then the WCFG \((G_{\ell},W_{\ell})\) corresponding to the regular system \eqref{eq:example1} is given by \(G_{\ell} = (\{X\}, \{a\}, R_{\ell}, X)\) with \(R_{\ell}\) defined as:
\begin{align*}
\pi_1 = X &\rightarrow aX\\
\pi_2 = X &\rightarrow  \varepsilon \\
\end{align*}
and \(W_{\ell}\) defined over \((\mathbb{Q}, +, \cdot, 0, 1)\) as:
\[W_{\ell}(\pi) = 
\begin{cases}
2 & \text{if }\pi = \pi_1\\
1 & \text{if }\pi = \pi_2\\
\end{cases}
\enspace .\]
Note that \((G_{\ell}, W_{\ell})\) coincides with the regular WCFG given in the example.
\end{remark}

\section{A decision procedure for the Parikh property over the rationals}

\subparagraph{Theorem~\ref{thm:Parikh-image}.}

Let \((G,W)\) be a cycle-free WCFG and let \(S\) be the algebraic system in commuting variables corresponding to \((G,W)\).
Then, the strong solution \(r\) of \(S\) exists and the first component of \(r\) corresponds to \(Pk⟦G⟧_W\).

\begin{proof}
First, we prove that if a WCFG \((G,W)\) defined over a commutative and partially ordered semiring \(A\) is cycle-free, then the strong solution  of the algebraic system in commuting variables corresponding to \((G,W)\) exists.
Second, we show that the first component of the strong solution corresponds to \(Pk⟦G⟧_W\).

We give a proof of the first statement by contraposition by showing the following statement: let \(S\) be the algebraic system corresponding to a WCFG \((G,W)\) and let \(\sigma^0, \sigma^1, \ldots, \sigma^j, \ldots\) be the approximation sequence associated to \(S\). If \(\lim_{j\to\infty} \sigma^j\) does not exist, then \((G,W)\) is not cycle-free.
Note that \(\lim_{j\to\infty} \sigma^j\) does not exist iff  either the approximation sequence oscillates between a finite number of states, or there exists a length \(k\geq 0\) such that the coefficient of some monomial \(v \in Σ^{⊕} \), with \(|v| \leq k\), increases (w.r.t. the partial ordering of \(A\)) unboundedly at every step in the approximation sequence.
Formally, there exists \(k\geq 0\) such that, for every \(m \geq 0\), \(R_k(\sigma^{m}) \leq R_k(\sigma^{m+j})\), for some \(j> 0\), where \(\leq\) is the partial ordering of \(A\).
The first case cannot hold because every approximation sequence is monotonic~\cite[Lemma 14.4]{KuichSalomaa1986}.
Now we see that, if the second case holds, then necessarily the corresponding WCFG \((G,W)\) is not cycle-free.

To give some intuition, let us consider the following simple scenario.
Consider that the set of variables \(V\) of \(G\) contains only 1 variable, say \(X\), and assume that the limit of the approximation sequence of its corresponding algebraic system  does not exist.
Intuitively, for each \(j\geq 0\), the monomials occurring in the finite series \(\sigma^j\) in the approximation sequence of the corresponding system \(S\) correspond to the monomials that can be produced by \(G\) in at most \(j\) derivation steps, and the coefficient of each monomial corresponds to its weight if only derivations of at most \(j\) steps are considered.
By hypothesis, there exists a length \(k\geq 0\) such that the coefficient of some monomial \(v \in Σ^{⊕} \), with \(|v| \leq k\), increases unboundedly at every step in the approximation sequence.
It means that, for each \(m\)-step derivation  sequence \(\pi_m\) of \(G\) with \(m\geq 1\) generating \(w\in \Sigma^*\) with \(\lbag w \rbag = v\), there is another \(l\)-step derivation sequence \(\pi_l\) with \(l>m\) such that \(\pi_l\) generates \(w'\in \Sigma^*\) with \(\lbag w' \rbag = v\).
In other words, there exist derivation sequences in \(G\) of arbitrary length.
Because the number of rules of \(G\) is finite and so is the number of words \(w \in \Sigma^*\) such that \(\lbag w \rbag = v\), then either \(G\) contains a rule of the form \(X \rightarrow X\), or \(G\) contains a rule of the form \(X \rightarrow \gamma\) with \(\gamma \in \{X\}^+\) and \(|\gamma|>1\), and a rule of the form \(X \rightarrow \varepsilon\).
It follows that there exists a derivation sequence in \(G\) of the form \(X \Rightarrow^+ X\) and thus, \(G\) is not cycle-free.

The proof of the statement for every WCFG \((G,W)\) with an arbitrary number of variables goes in a similar fashion.
By hypothesis, there exists a length \(k\geq 0\) such that for some monomial \(v \in Σ^{⊕} \), with \(|v| \leq k\), for every \(m\)-step derivation  sequence \(\pi_m\) of \(G\) (\(m \geq 1\)) generating \(w\in \Sigma^*\) with \(\lbag w \rbag = v\), there is another \(l\)-step derivation sequence \(\pi_l\) with \(l>m\) such that \(\pi_l\) generates \(w'\in \Sigma^*\) with \(\lbag w' \rbag = v\).
That is, there are arbitrarily large derivation sequences in \(G\) using rules that do not add alphabet symbols.
Since the number of grammar rules of \(G\) is finite and so is the number of words \(w \in \Sigma^*\) such that \(\lbag w \rbag = v\), there must exist a cycle in \(G\), i.e., a derivation sequence of the form \(X_i \Rightarrow^+ X_i\), where \(X_i\) is a variable of \(G\).
Then, we conclude that \(G\) is not cycle-free.

We have shown that the strong solution \(r\) of \(S\) exists.
Now we prove that the first component of \(r\) corresponds to \(Pk⟦G⟧_W\).
First, consider \(\widetilde{S}\) as the algebraic system in \emph{noncommuting} variables corresponding to a cycle-free \((G,W)\) that is built as follows:
\begin{equation}
\label{eq:noncommuting_new}
X_i = \sum\limits_{\substack{\pi \in R \\ \pi = (X_i \rightarrow \gamma)}} W(\pi)\,\gamma \enspace .
\end{equation}
Note that \eqref{eq:noncommuting_new} now is of the form:
\begin{equation*}
X_i = p_i \enspace , \text{ with \(p_i \in A⟨(Σ \cup V)^{*}⟩\)} \enspace .
\end{equation*}
Salomaa et al. prove that, if \(\widetilde{r_1}\) is the first component of the strong solution of \(\widetilde{S}\), then \(\widetilde{r_1} (w) = ⟦G⟧_{W}(w) \) for every \(w \in  Σ^*\) when the weight function \(W\) is defined over \((\mathbb{N}, +, \cdot, 0, 1) \) and assigns \(1\) to each rule in \(G\)~\cite[Theorem 1.5]{Salomaa1978}.
In the proof they denote \(⟦G⟧_{W}(w)\) by \(amb(G,w)\) as it corresponds to the ambiguity of \(w\) according to \(G\).
The proof for the more general case where \(W\) is any arbitrary weight function defined over a commutative semiring reduces to replacing \(⟦G⟧_W(w)\) by \(amb(G,w)\) and using the corresponding semiring operations.

Now consider \(S\) as the algebraic system in commuting variables corresponding to \((G,W)\) and built as in \eqref{eq:algebraic-system} (page \pageref{eq:algebraic-system}).
Let \(r_1\) be the first component of its solution.
It is known that \(r_1\) and \(\widetilde{r_1}\) verify the following equality~\cite{KuichSalomaa1986}.
For each \(v \in Σ^{⊕}\):
\[r_1 (v) = \sum\limits_{\substack{v = \lbag w \rbag \\ w \in Σ^*} }\widetilde{r_1}(w) \enspace .\]
Then for each \(v \in Σ^{⊕} \):
\begin{equation*}
Pk⟦G⟧_W(v) = \sum\limits_{\substack{v = \lbag w \rbag \\ w \in Σ^*}} ⟦G⟧_W(w) = \sum\limits_{\substack{v = \lbag w \rbag \\ w \in Σ^*}} \widetilde{r_1}(w) = r_1(v) \enspace .
\end{equation*}

\end{proof}

\subparagraph{Lemma~\ref{lemma:regular-prop}.}

Let \((G,W)\) be a cycle-free WCFG.
Then \((G,W)\) satisfies the Parikh property iff \(Pk⟦G⟧_W \in A^{rat}⟨⟨Σ^{⊕}⟩⟩\).
\begin{proof}
Let \((G,W)\) be a WCFG with \(Pk⟦G⟧_W \in A^{rat}⟨⟨Σ^{⊕}⟩⟩\).
Then \(Pk⟦G⟧_W\) is the first component of the solution of a regular algebraic system \(S\).
Hence, the  WCFG \((G_{\ell}, W_{\ell})\) corresponding to \(S\) is regular and \(Pk⟦G⟧_W = Pk⟦G_{\ell}⟧_{W_{\ell}}\).

Now, let \((G,W)\) be a WCFG with the Parikh property.
Then there exists a regular WCFG \((G_{\ell}, W_{\ell})\) such that \(Pk⟦G⟧_W = Pk⟦G_{\ell}⟧_{{W_{\ell}}}\).
Let \(S\) be the regular algebraic system corresponding to \((G_{\ell}, {W_{\ell}})\).
The first component of its solution vector is \(Pk⟦G_{\ell}⟧_{W_{\ell}}\) and thus it is in \(A^{rat}⟨⟨Σ^{⊕}⟩⟩\).
As  \(Pk⟦G⟧_W = Pk⟦G_{\ell}⟧_{W_{\ell}}\) then \(Pk⟦G⟧_W\) is also in \(A^{rat}⟨⟨Σ^{⊕}⟩⟩\).
\end{proof}

\section{Unary polynomially ambiguous WCFGs are nonexpansive}
\label{sec:appendix-poly}

Bhattiprolu et al.~\cite{Vijay17} consider the class of \emph{polynomially ambiguous} WCFGs over the unary alphabet.
They show that every WCFG in this class satisfies the Parikh property.
Now we show that in the unary case the class of nonexpansive CFGs strictly contains the class of polynomially ambiguous grammars.

First, we introduce the definitions we will use in this section (some of them as given in~\cite{Vijay17}).
For convenience, we will adopt the standard way to write parse trees as labeled trees where nodes are either variables or terminals (see the definition of parse tree and yield of a parse tree in Chapter 5 of \cite{Ullman2003}).
Note that, for parse trees defined as in \cite{Ullman2003}, the yield may contain variables (as opposed to the yield of a parse tree as defined in Section~\ref{sec:preliminaries} which is always a (possibly empty) sequence alphabet symbols of \(\Sigma\)).
We denote by \(\mathcal{D}_G\) the set of parse trees of a grammar \(G\) when they are defined as in \cite{Ullman2003}.
We denote by \(\mathcal{D}_G(X)\) the set of all parse trees \(τ\) in \(\mathcal{D}_G\) of the form \(τ = X(\ldots)\), i.e., rooted at variable \(X\).
A parse tree \(τ ∈ \mathcal{D}_G(X)\) is said to be a \emph{\(X\)-pumping tree} if \(𝒴(τ)⇃_V = X\).
The set of all \(X\)-pumping trees is \(\mathcal{D}^{P}_{G}(X)≝ \{τ ∈ \mathcal{D}_G(X) \mid 𝒴(τ)⇃_V = X\}\).
The set of all pumping trees of \(G\) is given by \(\mathcal{D}^{P}_{G}≝\{τ ∈ \mathcal{D}_G(X) \mid X ∈ V\}\).
We define the concatenation of two parse trees \(\tau_1, \tau_2 \in \mathcal{D}_G\) with \( 𝒴(τ_1)⇃_V \neq \varepsilon\), denoted by \(\tau_1 \circ \tau_2\), by identifying the root of \(\tau_2\) with the first variable of \(\mathcal{Y}(\tau_1)\).
A set of trees \(T\) is \emph{ambiguous} if there are two distinct trees \(τ\) and \(τ'\) such that \(𝒴(τ) = 𝒴(τ')\).
Otherwise, \(T\) is \emph{unambiguous}.
They define the \emph{ambiguity function} of a CFG \(G\) as a function \(\mu_G: \mathbb{N} \mapsto \mathbb{N}\) such that \(\mu_G(n) ≝ max \{\len{parse_G(w)} : w \in \Sigma^*, \len{w} = n\}\), where \(parse_G(w)\) denotes the set of all derivations of \(G\) that generate \(w\).
A grammar is \emph{polynomially ambiguous} iff its ambiguity function \(\mu_G(n)\) is bounded by a polynomial \(p(n)\).
Finally, it is known that a CFG \(G\) is polynomially ambiguous iff \(\mathcal{D}^{P}_{G}\) is unambiguous~\cite{Wich1999}.

Now we show the main result of this section. 
\begin{theorem}
Every unary polynomially ambiguous CFG is nonexpansive.
\end{theorem}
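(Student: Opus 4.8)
The plan is to prove the contrapositive and appeal to the quoted characterization (attributed to Wich): a CFG $G$ is polynomially ambiguous iff its set of pumping trees $\mathcal{D}^P_G$ is unambiguous. Hence it suffices to show that if $G$ is expansive then $\mathcal{D}^P_G$ is \emph{ambiguous}, i.e.\ it contains two distinct trees with the same yield. Throughout I assume, as is standard when reasoning about ambiguity, that $G$ is reduced (every variable is reachable from $S$ and productive); this hypothesis is genuinely needed, since an expansive derivation confined to useless variables does not affect $\mu_G$ and would otherwise falsify the statement.

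First I would extract the witness of expansiveness. If $G$ is expansive there is a variable $X$ and a derivation $X \Rightarrow^{*} w_0\,X\,w_1\,X\,w_2$; over the unary alphabet this is a parse tree $\tau$ rooted at $X$ with $\mathcal{Y}(\tau)|_V = XX$, say $\mathcal{Y}(\tau) = a^{c_0}\,X\,a^{c_1}\,X\,a^{c_2}$, whose two variable leaves I call $\ell_1$ (left) and $\ell_2$ (right). Since $X$ is productive, fix a complete parse tree $\rho$ rooted at $X$ with $\mathcal{Y}(\rho) = a^{p}$ for some $p \geq 0$. From $\tau$ and $\rho$ I form two $X$-pumping trees: let $t_1$ be $\tau$ with $\ell_2$ completed by $\rho$ and $\ell_1$ kept, and let $t_2$ be $\tau$ with $\ell_1$ completed by $\rho$ and $\ell_2$ kept, so that $\mathcal{Y}(t_1) = a^{c_0}\,X\,a^{c_1+p+c_2}$ and $\mathcal{Y}(t_2) = a^{c_0+p+c_1}\,X\,a^{c_2}$; both lie in $\mathcal{D}^P_G(X)$.

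The key step is to compose $t_1$ and $t_2$ with the parse-tree concatenation $\circ$, which for pumping trees substitutes the second tree at the unique variable leaf of the first. Although $\circ$ is visibly non-commutative on trees, a direct computation shows that over a unary alphabet the two products have the \emph{same} yield:
\[
\mathcal{Y}(t_1 \circ t_2) = a^{2c_0 + c_1 + p}\,X\,a^{2c_2 + c_1 + p} = \mathcal{Y}(t_2 \circ t_1)\enspace .
\]
This is exactly where unariness is essential: for a larger alphabet the two products would only share the same Parikh image, not the same word. It then remains to check that $t_1 \circ t_2$ and $t_2 \circ t_1$ are distinct trees. Both are a common outer copy of $\tau$ with subtrees attached at $\ell_1$ and $\ell_2$; but at the leaf $\ell_1$ of this outer copy, $t_1 \circ t_2$ carries a copy of $t_2$, which still contains an uncompleted variable leaf, whereas $t_2 \circ t_1$ carries the complete tree $\rho$, which does not. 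Since one subtree contains a variable leaf and the other does not, the two trees differ. Both being $X$-pumping trees with equal yield, $\mathcal{D}^P_G$ is ambiguous, and the characterization yields that $G$ is not polynomially ambiguous, proving the contrapositive.

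The main obstacle is precisely this last step: producing two genuinely \emph{distinct} pumping trees of equal yield from the single expansive witness. The idea above — completing the two occurrences of $X$ in opposite orders and playing the non-commutativity of $\circ$ against the symmetry of the unary yield — is what makes it work, and the two points requiring care are the exponent bookkeeping in $c_0, c_1, c_2, p$ and the verification that the two $\circ$-products cannot coincide. I would also record that the distinctness argument is uniform (it does not depend on $p$ or $c_1$), so no boundary case such as $p = c_1 = 0$ needs separate treatment, and that the only structural input used about $\tau$ is that $\ell_1 \neq \ell_2$, which holds because the expansive derivation has length at least one.
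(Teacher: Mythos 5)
Your proof is correct and follows essentially the same route as the paper: reduce to showing that \(\mathcal{D}^P_G\) is ambiguous via Wich's characterization, and exhibit two distinct \(X\)-pumping trees of equal yield by \(\circ\)-composing, in the two possible orders, trees built from the expansive witness, with unariness collapsing the two yields into the same word. The only (minor) difference is how distinctness of the two products is certified: the paper completes the first occurrence of \(X\) with two \emph{distinct} terminal trees \(\tau_1 \neq \tau_2\) and compares \((\tau\circ\tau_1)\circ(\tau\circ\tau_2)\) with \((\tau\circ\tau_2)\circ(\tau\circ\tau_1)\), whereas you complete the two \emph{different} occurrences with a single tree \(\rho\) and distinguish \(t_1\circ t_2\) from \(t_2\circ t_1\) by the position of the remaining open leaf --- a slightly more economical variant, since it sidesteps the paper's (unjustified but true) claim that two distinct complete parse trees rooted at \(X\) exist.
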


\begin{proof}
The proof goes by contradiction.
Let \(G = (V, \{a\}, S, R)\) be a polynomially ambiguous grammar and assume that \(G\) is expansive.
Then, there is a derivation sequence of the form \(X \Rightarrow^{*} w_0\,X\,w_1\,X\,w_2\) with \(X ∈ V\) and \(w_i ∈ (V \cup Σ)^*\).
Assuming that every derivation sequence in \(G\) can produce a word of terminals (i.e., \(G\) does not contain useless rules), there exist necessarily at least two distinct parse trees \(\tau_1 = X(\ldots)\) and \(\tau_2 = X(\ldots)\) with \(\mathcal{Y}(\tau_1), \mathcal{Y}(\tau_2) \in \Sigma^*\) (not necessarily \(\mathcal{Y}(\tau_1)\neq \mathcal{Y}(\tau_2)\)).
Let \(\tau\) be the parse tree that corresponds to the derivation sequence \(X \Rightarrow^{*} w_0\,X\,w_1\,X\,w_2\) and consider the pumping trees \(\tau\circ \tau_1\) and \(\tau \circ \tau_2\) with
\[\mathcal{Y}(\tau \circ \tau_1) = w_0\,\mathcal{Y}(\tau_1)\,w_1\,X\,w_2\]
\[\mathcal{Y}(\tau \circ \tau_2) = w_0\,\mathcal{Y}(\tau_2)\,w_1\,X\,w_2 \enspace .\]
Now define the \(X\)-pumping trees \(\tau'\) and \(\tau''\) as follows:
\[\tau' = (\tau \circ \tau_1)\circ (\tau \circ \tau_2)\]
\[\tau'' = (\tau \circ \tau_2)\circ (\tau \circ \tau_1) \enspace .\]
Then,
\[\mathcal{Y}(\tau') = w_0\,\mathcal{Y}(\tau_1)\,w_1\,w_0\,\,\mathcal{Y}(\tau_2)\,w_1\,X\,w_2\,w_2\]
\[\mathcal{Y}(\tau'') = w_0\,\mathcal{Y}(\tau_2)\,w_1\,w_0\,\,\mathcal{Y}(\tau_1)\,w_1\,X\,w_2\,w_2 \enspace .\]
Because the alphabet is unary,
\[\mathcal{Y}(\tau') = \mathcal{Y}(\tau'') \enspace .\]
However, \(\tau'\ \neq \tau''\).
As there exist two distinct trees in \(\mathcal{D}^P_G\) with the same yield, \(\mathcal{D}^P_G\) is ambiguous (contradiction).
\end{proof}

We show that  the converse is not true with the following counterexample.
\begin{example}
Let \(G = (\{X,Y\}, \{a\}, X, \{X \rightarrow aXY,\,X \rightarrow aYX,\,X \rightarrow a,\,Y \rightarrow a\})\).
Note that \(X\) produces derivation sequences with at most one occurence of itself, and \(Y\) only produces one terminal symbol.
Thus, \(G\) is nonexpansive.
However, there are two distinct \(X\)-pumping trees \(\tau_1\) and \(\tau_2\) with \(\mathcal{Y}(\tau_1) = \mathcal{Y}(\tau_2)\) (Figures~\ref{fig:test1} and~\ref{fig:test2}).
Then, \(G\) is not polynomially ambiguous.

\begin{figure}[h]
\begin{minipage}{.5\textwidth}
  \centering
	\begin{forest}
	[X [a] [X [a] [Y [a]] [X]] [Y [a]]]
	\end{forest}
	\caption{\(τ_1 = X( a, X( a, Y(a), X),Y(a))\)}
  \label{fig:test1}
\end{minipage}%
\begin{minipage}{.5\textwidth}
  \centering
	\begin{forest}
	[X [a] [Y [a]] [X [a] [X] [Y [a]]]]
	\end{forest}
	\caption{\(τ_2 = X( a, Y( a), X( a, X, Y(a)))\)}
  \label{fig:test2}
\end{minipage}
\end{figure}
\qed
\end{example}

\end{document}